\documentclass[11pt,letterpaper]{article}
\usepackage[utf8]{inputenc}

\usepackage{amsmath,amssymb,amsfonts} 
\usepackage{epsfig} \usepackage{latexsym,nicefrac,bbm}
\usepackage{xspace}
\usepackage{color,fancybox,graphicx,subfigure,fullpage}
\usepackage[top=1in, bottom=1in, left=1in, right=1in]{geometry}
\usepackage{tabularx} \usepackage{hyperref} 
\usepackage{pdfsync}
\usepackage[boxruled, linesnumbered]{algorithm2e}
\usepackage{multicol}
\usepackage{enumitem}
\usepackage{multirow}
 \usepackage{url}

\renewcommand{\epsilon}{\varepsilon}

\newcommand{\eps}{\varepsilon}

\newtheorem{theorem}{Theorem}[section]

\newtheorem{definition}{Definition}[section]
\newtheorem{property}{Property}[section]
\newtheorem{lemma}[theorem]{Lemma}

\newtheorem{proposition}[theorem]{Proposition}
\newtheorem{corollary}[theorem]{Corollary}

\newenvironment{proof}{\begin{trivlist} \item {\bf Proof:~~}}
   {\qed\end{trivlist}}

\def\FullBox{\hbox{\vrule width 6pt height 6pt depth 0pt}}

\def\qed{\ifmmode\qquad\FullBox\else{\unskip\nobreak\hfil
\penalty50\hskip1em\null\nobreak\hfil\FullBox
\parfillskip=0pt\finalhyphendemerits=0\endgraf}\fi}

\usepackage[T1]{fontenc}
\usepackage{lmodern}

\SetKwInOut{KwHyperparameters}{Hyperparameters}

\makeatletter
\def\blfootnote{\xdef\@thefnmark{}\@footnotetext}
\makeatother

\title{\bf  Sampling from Log-Concave Distributions over Polytopes
via a Soft-Threshold Dikin Walk}

 \author{Oren Mangoubi\\ Worcester Polytechnic Institute \and Nisheeth K. Vishnoi \\ Yale University}

\begin{document}
 \date{}
\maketitle

\begin{abstract}
Given a Lipschitz or smooth convex function $f:K \to \mathbb{R}^d$ for a bounded polytope $K$ defined by $m$ inequalities, we consider the problem of sampling from the log-concave distribution $\pi(\theta) \propto e^{-f(\theta)}$ constrained to $K$.
Interest in this problem derives from its applications to  Bayesian inference and differentially private learning.
Our main result is  a generalization of the Dikin walk Markov chain to this setting that requires at most
$O((md + d L^2 R^2) \times md^{\omega-1}) \log(\frac{w}{\delta}))$ arithmetic operations to sample from $\pi$ within error $\delta>0$ in the total variation distance from a $w$-warm start. 
Here $L$ is the Lipschitz-constant of $f$, $K$ is contained in a ball of radius $R$ and contains a ball of smaller radius $r$, and $\omega$ is the matrix-multiplication constant.
Our algorithm improves on the running time of prior works for a range of parameter settings important for the aforementioned learning applications. 
Technically, we depart from previous Dikin walks by adding a ``soft-threshold'' regularizer derived from the Lipschitz or smoothness properties of $f$  to the log-barrier function for $K$ that allows our version of the Dikin walk to propose updates  that have a high Metropolis acceptance ratio for $f$, while at the same time remaining inside the polytope $K$.\blfootnote{We originally introduced the soft-threshold Dikin walk in the first arXiv version of  \cite{OM_Sampling22}.
Subsequently, we realized that this sampling algorithm is of independent interest with further applications. 
Thus, we have since removed the results for our soft-threshold Dikin walk Markov chain in the latest version of \cite{OM_Sampling22}, and instead, present them separately here in an expanded manner.}

\end{abstract}

\thispagestyle{empty}

\newpage

\tableofcontents

\thispagestyle{empty}

 \newpage

\setcounter{page}{1}

\section{Introduction}
We consider the  problem of sampling from a log-concave distribution supported on a polytope:
Given a polytope 
$K:=\{\theta \in \mathbb{R}^d: A\theta \leq b\}$,
where $A\in   \mathbb{R}^{m\times d}$ and $b \in \mathbb{R}^m$,
and a convex function $f:K \rightarrow \mathbb{R}$, output a sample $\theta \in K$ from the distribution 
$\pi(\theta) \propto e^{-f(\theta)}$.

Two special cases of this problem, sampling from a uniform distribution over a polytope ($f \equiv 0$), and sampling from an unconstrained log-concave distribution (when $K=\mathbb{R}^n$), have been classically studied  in various disciplines. 
Sampling exactly in both these  cases is hard in general and a large body of work has developed ``approximate'' sampling algorithms under a variety of conditions on $f$ and $K$.
The problem of sampling from the uniform distribution  when $K$ is a polytope given by a set of inequalities, and where the distance to $\pi$ is measured in the total-variation (TV) distance has been widely studied; see, e.g., \cite{kannan2012random, narayanan2016randomized, sachdeva2016mixing, lee2018convergence, chen2017vaidya, laddha2020strong}.
For the {unconstrained} log-concave sampling problem, one line of work requires the log-density to be $L$-Lipschitz or $\beta$-smooth over {\em all} of $\mathbb{R}^d$ for some $L, \beta>0$, and which give bounds on the distance to the target density $\pi$ in terms of the TV distance \cite{dwivedi2018log}, Wasserstein distance \cite{durmus2017nonasymptotic, dalalyan2020sampling}, and Kullback-Leibler (KL) divergence \cite{wibisono2018sampling, durmus2019analysis}.

Several works  provide algorithms for sampling in the more general setting  when $K$ is an arbitrary convex body given by a membership oracle
\cite{frieze1994sampling,applegate1991sampling,  frieze1999log,lovasz2006fast,lovasz2007geometry}. 
In particular,  \cite{lovasz2006fast} gives an  algorithm that can sample from a log-concave distribution $\pi \propto e^{-f}$ on a convex body $K$ which contains a ball of radius $r$ with TV error $\delta>0$ in $O\left(d^{4.5} \log^5\frac{d}{\delta r}\right)$ calls to a membership oracle for $K$. 
 Further, \cite{bubeck2015finite,brosse2017sampling} provide versions of the Langevin dynamics  for sampling from a log-concave distribution $\pi \propto e^{-f}$ on $K$ where $f$ is $L$-Lipschitz and $\beta$-smooth and $K$ is given by a projection oracle-- when, e.g., $K$ is contained in a ball of radius $O(1)$, it contains a ball of radius $\Omega(1)$, and   $L=\beta =O(1)$,    \cite{bubeck2015finite} give a bound of roughly $O(\frac{d^9}{\delta^{22}})$ gradient and projection oracle calls to sample from $\pi$ with TV error $\delta>0$, while \cite{brosse2017sampling} give a bound of roughly  $O(\frac{d^5}{\delta^6})$ gradient and projection oracle calls.
Moreover, \cite{gopi2022private} provide an algorithm for sampling from distributions $\propto e^{-f}$ where $f$ is both  $L$-Lipschitz and $\mu$-strongly convex (that is, $f(\theta)$ is the sum of a convex function $\hat{f}(\theta)$ and the quadratic function $\frac{\mu}{2}\|\theta\|^2$ for some $\mu>0$) on a convex body $K$ in roughly $\tilde{O}(\frac{L^2}{\mu}\log^2(\frac{d}{\delta}))$ membership and function oracle calls.
While their runtime is logarithmic in dimension and polynomial in the strong convexity and Lipschitz parameters of $f$, their algorithm does not apply to the more general  setting (of interest here) where $f$ is convex and $L$-Lipschitz (or $\beta$-smooth), but not necessarily $\mu$-strongly convex for any $\mu>0$, which includes, e.g.,  sampling problems arising when training Lasso logistic regression models under differential privacy. 

Our interest in  designing  algorithms for the polytope-constrained log-concave sampling problem derives from its applications to  areas  such as Bayesian inference and differentially private optimization. 
In Bayesian inference, the ability to sample from  $\pi(\theta) \propto e^{-f(\theta)}$  allows one to compute Bayesian confidence intervals and other statistics for the Bayesian posterior distribution of many machine learning models (see e.g. \cite{gamerman2006markov, gelman1996markov, carlin1995bayesian}).
In differentially private optimization, sampling from the ``exponential mechanism'' \cite{mcsherry2007mechanism} allows one to achieve optimal utility bounds for the problem of minimizing $f$ under $\varepsilon$-differential privacy \cite{bassily2014private}.

The instances of the sampling problem that arise in these applications, while more general than the two special cases ($f \equiv 0$ and $K=\mathbb{R}^d$), have more structure than the case of a general log-concave function supported on an arbitrary convex body. 
For instance, in the Bayesian Lasso logistic regression problem, the  function $f$ is $f(\theta) = \sum_{i=1}^n \ell(\theta; x_i)$, where $\ell$ is the logistic loss 
and $x_i$ are the datapoints with $\|x_i\|_2 \leq 1$.
The constraint polytope is $K = \{\theta \in \mathbb{R}^d:  \|\theta\|_1 \leq O(1) \}$; see \cite{tian2008efficient, silvapulle2011constrained, kim2018logistic,BayesianLasso}. 
Note that, since the logistic function is both $O(1)$-smooth and $O(1)$-Lipschitz, $f$ is both $O(n)$-Lipschitz and $O(n)$-smooth, and $K$ is defined by $2d$ inequalities and contained in a ball of radius $O(1)$.
%
To obtain an $\eps$-differentially private mechanism for the Lasso logistic regression problem, using the exponential mechanism of \cite{mcsherry2007mechanism}, the goal is to sample from $e^{-\frac{\epsilon}{R} \sum_{i=1}^n \ell(\theta; x_i)}$, where $\ell$ is the logistic loss and $K$ is contained in a ball of radius $R$.
Thus, the log-density is both $\beta$-smooth and  $L$-Lipschitz  for $\beta = L =   \frac{n\epsilon}{R} = O(d)$ if $n=d$ and $\epsilon <1$ since $R=O(1)$. 

Another example,  is a result of \cite{leake2020polynomial} (see also \cite{Ge}) that reduces the problem of $\epsilon$-differentially private low-rank approximation of a symmetric $p \times p$ matrix to a constrained sampling problem with $f$ being a linear function (which is trivially $0$-smooth) of dimension $d=p^2$
and $K$ being the Gelfand-Tsetlin polytope (a generalization of the probability simplex). 
$K$
has $d$ inequalities and diameter $ \sqrt{d}$; see \cite{leake2020polynomial}.

When sampling from the exponential mechanism in privacy applications, sampling with total variation bounds is insufficient to guarantee $\epsilon$-differential privacy, the strongest notion of differential privacy; see \cite{dwork2014algorithmic}.
Instead, one requires bounds in the stronger infinity-distance metric $\mathrm{d}_\infty(\nu, \pi):= \sup_{\theta \in K} \left|\log  \frac{\nu(\theta)}{\pi(\theta)}\right|$; see e.g.  \cite{dwork2014algorithmic}. 
A recent work \cite{OM_Sampling22} showed how to convert samples within $O(\delta)$-TV distance from continuous-space log-Lipschitz densities $\pi$, into samples with $O(\epsilon)$-infinity-distance bounds, but it requires the TV distance $\delta$ to be very small--roughly $\delta = O(\epsilon e^{-d -LR})$.
Thus, continuous-space Markov chains whose  runtime bounds have a high-order dependence on $\log\frac{1}{\delta}$ incur an additional factor in their runtime which has a high-order {\em polynomial} dependence on the dimension $d$, when applied to privacy problems where infinity-distance bounds are required; for instance, the runtime of the hit-and-run Markov chain \cite{lovasz2006fast} of $O\left(d^{4.5} \log^5\frac{d}{\delta r}\right)$ membership oracle calls for $O(\delta)$-TV sampling leads to a runtime of roughly $\tilde{O}\left(d^{9.5} + d^{4.5}(LR)^5\right)$ membership oracle calls when $\epsilon$-infinity distance bounds are required.
Thus, for applications in differential privacy, it is desirable to design sampling algorithms that have a low-order polynomial dependence not only on the parameters $d, L,R, \beta$ but also  on  $\log\frac{1}{\delta}$.

The most relevant work towards the problem of designing fast algorithms for sampling from Lipschitz or smooth log-densities supported on explicitly given polytopes  is by \cite{narayanan2017efficient} who extends the (Gaussian) Dikin walk to allow it to sample from any log-concave distribution $\pi \propto e^{-f}$ on $K$ where $f$ is $L$- Lipschitz or $\beta$-smooth.
 The Dikin walk Markov chain was  introduced in  \cite{kannan2012random} in the special case where $f \equiv 0$ (see also  \cite{narayanan2016randomized, sachdeva2016mixing}). 
From any $\theta$ in the interior of $K$, the version of the Dikin walk in \cite{kannan2012random}  proposes a random step $z$ uniformly distributed on (a scalar multiple of) the Dikin ellipsoid  $E(\theta) := \{w: w^\top H^{-1}(\theta) w \leq 1\}$, where 
$H(\theta) :=  \sum_{j=1}^{m} \frac{a_j a_j^\top}{(b_j - a_j^\top \theta)^2}$
is the Hessian of the log-barrier function $\varphi(\theta) := -  \sum_{j=1}^{m} \log(b_j-a_j^\top \theta )$ at $\theta$.
Here, $a_j$ is the $j$th row of $A$.
As the Dikin ellipsoid  $E(\theta)$ is a subset of $K$ for any $\theta$ in the interior of $K$, the Dikin walk remains inside $K$.
To ensure that the Dikin walk samples from the uniform distribution on $K$, the proposed step is accepted with probability determined by a Metropolis acceptance rule.
\cite{narayanan2016randomized, sachdeva2016mixing} analyze a version of the Dikin Walk Markov chain where the proposed step is instead sampled from a Gaussian distribution with covariance matrix determined by the Dikin ellipsoid.

\cite{narayanan2017efficient} introduces an additional coefficient, $\exp\left(\frac{f(\theta)}{f(z)}\right)$, to the Metropolis acceptance probability which ensures that the stationary distribution of the Dikin walk Markov chain is the target distribution $\pi$. %
 To ensure that the proposed step $z$ is accepted with high probability, $z$ is sampled from a Gaussian distribution with covariance matrix $\gamma^2 H^{-1}(\theta)$, where $\gamma := \min(\frac{1}{d}, \frac{1}{L})$ (if $f$ is $L$-Lipschitz) or $\gamma := \min(\frac{1}{d}, \frac{1}{\beta})$  if $f$ is $\beta$-smooth) is a scalar coefficient.
The runtime bounds in \cite{narayanan2017efficient} imply that their Dikin walk takes $O((md^{4+\omega} + md^{2+\omega}L^2R^2)\log(\frac{w}{\delta}))$ arithmetic operations to sample within TV error $O(\delta)$ from an $L$-log Lipschitz density on a polytope from a $w$-warm start.
A distribution $\nu$ is $w$-warm for $w \geq 1$ with respect to the stationary distribution $\pi$ if $\sup_{z\in K} \frac{\nu(z)}{\pi(z)} \leq w$, and $\omega$ denotes the matrix-multiplication constant.

\section{Our results}
Our main result is a sampling algorithm inspired by the Dikin walk Markov chain whose steps are determined by a barrier function that generalizes the log-barrier function by adding a ``soft-threshold'' regularizer  (Algorithm \ref{alg_Soft_Dikin_Walk} and Theorem \ref{thm_soft_threshold_Dikin}). 
The algorithm  generates samples from an  $L$-Lipschitz or $\beta$-smooth log-density on an $R$-bounded polytope with an error bounded in the TV distance.
Theorem \ref{thm_soft_threshold_Dikin} often results in the fastest known) algorithm for some of the applications to Bayesian inference and differentially private optimization mentioned in the introduction. 

\medskip
In the following, for any two probability distributions $\mu, \nu$ on $\mathbb{R}^d$, we denote the total variation distance between $\mu$ and $\nu$ by $\|\mu - \nu \|_{\mathrm{TV}}:= \sup_{S \subseteq \mathbb{R}^d} |\mu(S) - \nu(S)|$.
For any $\theta \in \mathbb{R}^d$ and $t>0$ we denote the ball of radius $t$ at $\theta$ by $B(\theta,t):= \{z \in \mathbb{R}^d: \|z-\theta\|_2\leq t\}$ where $\|\cdot\|_2$ denotes the Euclidean norm.
For any subset $S\subset \mathbb{R}^d$ we denote the interior of $S$ by $\mathrm{Int}(S):= \{\theta \in S :  B(\theta, t) \subseteq S \textrm{ for some } t>0\}$.

\begin{theorem}[Sampling with TV bounds via a soft-threshold Dikin Walk]\label{thm_soft_threshold_Dikin}
There exists an algorithm (Algorithm \ref{alg_Soft_Dikin_Walk}) which, given $\delta, R>0$ and either $L>0$ or $\beta>0$,  
$A \in \mathbb{R}^{m \times d}$, $b\in \mathbb{R}^m$ that define a polytope $K := \{\theta \in \mathbb{R}^d : A \theta \leq b\}$  such that 
        $K$ is contained in a ball of radius $R$ and has nonempty interior,
         an oracle for the value of a convex function $f: K \rightarrow \mathbb{R}^d$, where $f$ is either $L$-Lipschitz or $\beta$-smooth, 
        and an initial point sampled from a distribution supported on $K$ which is $w$-warm with respect to $\pi \propto e^{-f}$ for some $w>0$,
        outputs a point from a distribution $\mu$ where $\|\mu- \pi\|_{\mathrm{TV}} \leq \delta$.
        Moreover, this algorithm takes at most $O((md + d L^2 R^2) \times \log(\frac{w}{\delta}))$ Markov chain steps in the setting where $f$ is $L$-Lipschitz, or $O((md + d \beta R^2) \times \log(\frac{w}{\delta}))$ Markov chain steps in the setting where $f$ is $\beta$-smooth, where each step makes one function evaluation and $O(md^{\omega-1})$ arithmetic operations.
\end{theorem}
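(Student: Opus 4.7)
The plan is to design a Metropolis--Hastings chain whose Gaussian proposals are shaped by the Hessian of a modified barrier, and then bound its mixing time using the standard conductance framework. I would introduce a ``soft-threshold'' barrier $\psi(\theta) = \varphi(\theta) + \eta(\theta)$, where $\varphi(\theta) = -\sum_{j=1}^{m} \log(b_j - a_j^\top\theta)$ is the log-barrier of $K$ and $\eta$ is a convex regularizer tuned to the Lipschitz (or smoothness) scale of $f$. The Hessian $\tilde H(\theta) := \nabla^2 \psi(\theta)$ should have two properties. First, a suitably scaled Dikin-type ellipsoid $\{w : w^\top \tilde H(\theta) w \le 1\}$ remains inside $K$, inherited from $\varphi$. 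Second, scaled by the chain's step size, it has Euclidean radius $O(1/L)$ (respectively $O(1/\sqrt{\beta})$ in the smooth case), so that $|f(z)-f(\theta)| = O(1)$ with constant probability for a proposal $z \sim \mathcal N(\theta, r^2 \tilde H^{-1}(\theta))$. The algorithm then accepts $z$ with the usual Metropolis--Hastings ratio for $\pi \propto e^{-f}$, and reversibility makes $\pi$ the stationary distribution.

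For the mixing-time bound I would follow the Lov\'asz--Simonovits conductance approach. Conductance reduces to a ``one-step coupling'' statement: whenever $\theta$ and $\theta'$ are close in the $\tilde H$-metric, the transition kernels at $\theta$ and $\theta'$ have total variation bounded away from $1$. This unpacks into (i) the Gaussian proposals $\mathcal N(\theta, r^2 \tilde H^{-1}(\theta))$ and $\mathcal N(\theta', r^2 \tilde H^{-1}(\theta'))$ having constant overlap, which follows from spectral stability of $\tilde H$ along short $\tilde H$-geodesics, and (ii) the Metropolis filter accepting with constant probability, which uses the Lipschitz or smoothness control on the oscillation of $f$ over one step. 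Combining these with a log-concave isoperimetric inequality in the local Hessian metric yields the conductance lower bound.

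The main obstacle is the spectral stability of $\tilde H$ along short $\tilde H$-geodesics, i.e.\ an approximate self-concordance-type property for $\psi$. The log-barrier piece inherits the classical self-concordance bound, so the real work is to verify the analogous quadratic-form inequality for the Hessian of the soft-threshold regularizer $\eta$; this is what justifies adding $\eta$ without losing the conductance guarantee enjoyed by the standard Dikin walk. Once established, a direct computation bounds the relevant ``$\tilde H$-diameter'' of $K$ additively, producing a conductance of order $\Omega(1/\sqrt{md + dL^2R^2})$ in the Lipschitz case and $\Omega(1/\sqrt{md + d\beta R^2})$ in the smooth case. Converting conductance to mixing time in total variation from a $w$-warm start then gives $O((md + dL^2R^2)\log(w/\delta))$ (resp.\ with $\beta$) Markov chain steps.

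For the per-step cost, each iteration must: assemble $\tilde H(\theta) = A^\top D(\theta) A + \nabla^2 \eta(\theta)$, which reduces to a structured $m \times d$ by $d \times d$ product computable in $O(md^{\omega-1})$ arithmetic via fast matrix multiplication; perform a Cholesky factorization of $\tilde H$ in $O(d^{\omega})$ operations to sample the Gaussian proposal; and evaluate the Metropolis ratio using a single call to the value oracle for $f$ together with $O(d^2)$ additional arithmetic. Multiplying the per-step cost by the mixing-time bound yields the theorem.
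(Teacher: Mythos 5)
Your overall architecture matches the paper: a Gaussian Metropolis proposal shaped by the Hessian of a regularized log-barrier, a conductance/isoperimetry argument, and a per-step cost of $O(md^{\omega-1})$. However, you have misidentified where the difficulty lies, and the piece you skip is precisely the one the paper needs a new argument for.

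You write that the main obstacle is ``spectral stability of $\tilde H$ along short $\tilde H$-geodesics, i.e.~an approximate self-concordance-type property for $\psi$.'' That part is in fact routine: the regularizer is quadratic, so $\nabla^3(\theta^\top\theta)=0$ and $\psi = \varphi + \eta^{-1}\|\theta\|^2$ inherits the self-concordance of the log-barrier for free. The paper handles this directly in Lemma~\ref{lemma_PD} with a two-line calculation. What is \emph{not} routine is the Hastings correction: the proposal density at $u$ is $\rho_u(v)\propto \sqrt{\det\Phi(u)}\,e^{-\frac12\|u-v\|_{\Phi(u)}^2}$, so the acceptance ratio carries a factor $\sqrt{\det\Phi(z)/\det\Phi(\theta)}$, and to obtain an $\Omega(1)$ acceptance probability one must show this determinantal term is $\Omega(1)$ with high probability over the proposal. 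Spectral stability alone does \emph{not} do that: Lemma~\ref{lemma_PD} gives $\Phi(z)\preceq(1+\alpha^{1/2}\|z-\theta\|_{\Phi(\theta)})^2\Phi(\theta)$, and with $\|z-\theta\|_{\Phi(\theta)}=\Theta(\sqrt d)$ and $\alpha\asymp 1/d$ the per-eigenvalue factor is $1\pm\Theta(1/\sqrt d)$, so the naive product over $d$ eigenvalues is $e^{\pm\Theta(\sqrt d)}$ --- far from a constant. The classical Dikin-walk fix (Vaidya's inequality~\eqref{eq_t4} bounding $\nabla V(\theta)^\top H^{-1}(\theta)\nabla V(\theta)=O(d)$ for $V=\log\det H$) is stated and proved specifically for Hessians of \emph{log}-barrier functions, which $\Phi(\theta)=\alpha^{-1}H(\theta)+\eta^{-1}I_d$ is not.

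This is exactly where the paper departs from the standard Dikin-walk template. It constructs an explicit sequence of polytope representations $K=\{A^j\theta\le b^j\}$ by appending $O(\alpha\eta^{-1}j^2)$ copies of axis-aligned half-space constraints at distance $j$ from the origin, so that the log-barrier Hessians $H_j$ converge to $\alpha\Phi$ uniformly on $K$ (Lemma~\ref{lemma_limits}). It then applies the log-barrier determinant bound to each $H_j$ and passes to the limit to conclude $\det\Phi(z)/\det\Phi(\theta)=\Omega(1)$ w.h.p.~(Lemma~\ref{lemma_det}), and similarly transfers the bound on $\|z-\theta\|_{\Phi(z)}^2-\|z-\theta\|_{\Phi(\theta)}^2$. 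Without this step, or an alternative argument establishing the determinant control for $\Phi$, the conductance bound does not follow. Your proposal also elides the isoperimetric side: the paper uses the cross-ratio distance isoperimetry on the polytope (Theorem 2.2 of~\cite{lovasz2003hit}) together with Lemma~\ref{lemma_cross_ratio} to relate cross-ratio distance to the $\Phi$-norm, giving the $\sqrt{2m\alpha^{-1}+\eta^{-1}R^2}$ scaling; a generic ``isoperimetric inequality in the local Hessian metric'' is not a standard off-the-shelf object and would itself need to be established. The complexity and conductance conclusions you state are the right ones, but as written the proposal does not close the central gap.
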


\noindent
In particular, Theorem \ref{thm_soft_threshold_Dikin} improves on the previous bounds of $O((md^{4+\omega} + md^{2+\omega}L^2R^2)\log(\frac{w}{\delta}))$ arithmetic operations to sample from an $L$-log-Lipschitz density on a polytope with TV error $O(\delta)$ implied by the work of \cite{narayanan2017efficient} for a different version of the Dikin walk algorithm, by a factor of (at least) $\min(\frac{1}{m}d^4,\, d^2)$. \footnote{When setting their scalar step size hyperparameter $\alpha^{\frac{1}{2}}$ to  $O(\min(\frac{1}{d}, \frac{1}{LR}))$,  \cite{narayanan2017efficient} obtain a bound of $\phi \geq \frac{\alpha^{\frac{1}{2}}}{\kappa \sqrt{d}}$ on the conductance $\phi$ of their  Dikin walk (Lemma 4 of their paper).
Here $\kappa$ is the self-concordance parameter of the barrier function; for the log-barrier  $\kappa = m$, although there are other barrier functions for which $\kappa=O(d)$.
Plugging their conductance bound into Corollary 1.5 of \cite{lovasz1993random} implies a bound of roughly $\phi^{-2} \log(\frac{w}{\delta}) = O((d^5 + d^3L^2R^2)\log(\frac{w}{\delta}))$ steps from a $w$-warm start for their Dikin walk to sample with $O(\delta)$ TV error from $\pi$, implying a bound of $O((md^{4+\omega} + md^{2+\omega}L^2R^2)\log(\frac{w}{\delta}))$ arithmetic operations.}
If $m=O(d)$, the improvement is $d^2$, and if we also have $LR = O(\sqrt{d})$, the improvement is $d^3$.
Thus, e.g., in the example of Bayesian Lasso logistic regression, the runtime is $O((md + d \beta R^2) \log(\frac{w}{\delta})) = O(d^2 \log(\frac{w}{\delta}))$ Dikin walk steps from a $w$-warm start since $m=d$ and $n=d$, since $\beta = n=d$ and $R=O(1)$.
We note that while many works, e.g. \cite{kannan2012random, narayanan2016randomized, sachdeva2016mixing, laddha2020strong, chen2017vaidya}, provide faster bounds for the Dikin walk and its variants than the ones given in \cite{narayanan2017efficient}, these faster bounds only apply in the special case when $\pi$ is the uniform distribution on $K$.

  Moreover, when a warm start is not provided, Algorithm \ref{alg_Soft_Dikin_Walk} takes at most $O((md + d L^2 R^2) \times (d\log(\frac{R}{r})+M +\log(\frac{1}{\delta})))$ Markov chain steps when $f$ is $L$-Lipschitz (or $O((md + d \beta R^2) \times (d\log(\frac{R}{r})+\log(M) +\log(\frac{1}{\delta})))$ steps when $f$ is $\beta$-smooth), since an $e^{d\log(\frac{R}{r})+M}$-warm start can be obtained by sampling uniformly from the ball  $B(a,r)$ contained in $K$, where $M = \log\left(\frac{\max_{\theta \in K}e^{-f(\theta)}}{e^{-f(a)}}\right) \leq e^{LR}$. 
     In comparison, the work of \cite{lovasz2006fast} implies a bound of $O(m d^{5.5} \log^5(\frac{d}{\delta r}))$ arithmetic operations to sample with TV error $O(\delta)$ in the setting where $f$ is $L$-Lipschitz and constrained to a polytope, regardless of whether a warm start is provided.\footnote{Corollary 1.2 in \cite{lovasz2006fast}, together with the rounding procedure in their optimization algorithm in Section 5 of their paper, imply a bound of roughly $O(d^{4.5} (\log^5(\frac{d}{\delta r}))$ hit-and-run Markov chain steps, where $M := \log(\max_{\theta \in K}e^{f(\theta_0) -f(\theta)})$, to sample within TV distance $O(\delta)$ from a log-concave distribution on a convex body contained in a ball of radius $R$ and containing a ball of smaller radius $r$.
If $f$ is constrained to a polytope given by $m$ inequalities, each step requires $md$ operations to compute a membership oracle. 
Thus,  if $f$ is also $L$-Lipschitz   and, e.g.,  $LR=O(\sqrt{d})$ and $m=O(d)$, the bound implied by  \cite{lovasz2006fast}  is $O(d^{6.5}\times \log^5(\frac{d R L}{\delta r}))$ arithmetic operations.} 
  Thus,  in the setting where $\pi$ is constrained to a polytope $K$, Theorem \ref{thm_soft_threshold_Dikin}  improves on the bounds of \cite{lovasz2006fast} for the hit-and-run algorithm by a factor of roughly $\frac{1}{m}d^{4.5- \omega}\log^4(\frac{1}{\delta})$ when $LR=O(\sqrt{d})$.
  If $m=O(d)$, the improvement is $d^{3.5- \omega}\log^4(\frac{1}{\delta})$.
On the other hand, we note that \cite{lovasz2006fast} applies more generally when $\pi$ is a log-concave distribution on a convex body and their dependence on $R$ is logarithmic, while our bounds for the soft-threshold Dikin walk (Theorem \ref{thm_soft_threshold_Dikin})  apply to the setting where $\pi$ is a log-Lipschitz log-concave distribution on a polytope.
In the example of Bayesian Lasso logistic regression, our algorithm takes $O(d^{3+\omega} \log(\frac{R}{r \delta})$ arithmetic operations since  $f$ is both $\beta$-smooth and $L$-Lipschitz, with $\beta = L = n =m =d$,  and $R=O(1)$; this improves by a factor of $d^{3.5 - \omega}  \log^4(\frac{dR}{\delta r}) > d \log^4(\frac{dR}{\delta r}) $ on the bound of $d^{6.5} \log^5(\frac{dR}{\delta r})$ arithmetic operations  for the hit-and-run algorithm of \cite{lovasz2006fast}.

The proof of Theorem \ref{thm_soft_threshold_Dikin} appears in Section \ref{sec_mainproof}.
We present an overview of the main ideas in the proof of Theorem \ref{thm_soft_threshold_Dikin} in Section \ref{sec_technical_overview}.
In Section \ref{sec:barrier}, we give an axiomatic approach to arrive at our barrier function and discuss possible extensions.

\paragraph{Detailed comparison to \cite{narayanan2017efficient}.} Compared to \cite{narayanan2017efficient}, in our version of the Dikin walk, $z$ is sampled from a Gaussian distribution with covariance matrix $$(\alpha^{-1} H(z) +  \eta^{-1}I_d)^{-1},$$ where $\alpha^{-1}$ and  $\eta^{-1}$  are hyper-parameters chosen to be $\alpha^{-1} \approx d^2$ and $\eta^{-1} \approx d L^2$ if $f$ is $L$-Lipschitz (or $\eta^{-1} \approx d \beta$  if $f$ is $L$-smooth).
 The ``soft-threshold'' regularization term $\eta^{-1}I_d$ prevents the Markov chain from taking steps where the value of $f$ decreases by more than $O(1)$ w.h.p., ensuring that the term $e^{f(\theta)-f(z)}$ in the acceptance probability is $\Omega(1)$.
The soft-threshold regularizer is chosen to be a multiple of the identity matrix $I_d$ since the Lipschitz condition on $f$ is rotationally invariant-- it bounds the derivative of $f$ by the same amount $L$ in each direction (the same is true for the second derivative of $f$ if $f$ is $\beta$-smooth).
This in turn allows our choice of scaling $\alpha^{-1}$ for the ``Dikin ellipsoid'' term $H(z)$-- which is not in general rotationally invariant, and determined only by the geometry of the polytope rather than the geometry of the function $f$-- to be independent of $L$ (or $\beta$).
This is in contrast to the Dikin walk in \cite{narayanan2017efficient} where the scaling parameter for $H(z)$ must depend on $L$ (or $\beta)$ to ensure an $\Omega(1)$ acceptance probability, which allows our Markov chain to propose steps with a larger variance than the Dikin walk in \cite{narayanan2017efficient} in directions which are not the largest eigenvector of $H^{-1}(z)$.

The (inverse) covariance matrix $\alpha^{-1} H(z) +  \eta^{-1}I_d$ of our soft-threshold Dikin walk updates is the Hessian of the function $\psi(\theta) = \alpha^{-1} \varphi(\theta) + \eta^{-1}\|\theta\|_2^2$.
This barrier function can be seen to be a Hessian of a self-concordant barrier function.  
On the other hand, it is not the Hessian of a logarithmic-barrier function for any polytope defined by any set of inequalities.
This prevents us from directly applying the analysis of the Dikin walk in the special case where $f \equiv 0$ \cite{narayanan2016randomized, sachdeva2016mixing}--which relies on properties of log-barrier functions-- to our soft-threshold Dikin walk on Lipschitz or smooth $f$.
To get around this problem, we show that, while $\psi(\theta)$ is not a log-barrier function of any polytope $K$, it is the limit of a sequence of log-barrier functions $\hat{\psi}_1, \hat{\psi}_2, \ldots$ where $\hat{\psi}_i(\theta) \rightarrow \psi(\theta)$ uniformly in $x$ as $i \rightarrow \infty$.
See  Section \ref{sec_technical_overview} for a detailed overview of the proof.

 An open problem is  to obtain runtime bounds for the Dikin walk which  do not require $f$ to be Lipschitz or smooth and/or that depend polynomially on $\log R$.
  This leads to the related  question of whether one can design other tractable  self-concordant barrier functions   to obtain further improvements in the runtime for sampling from log-concave distributions $\propto e^{-f}$ on a polytope $K$.
We discuss possible extensions in Section \ref{sec:barrier}.

\vspace{-2mm}

\paragraph{Infinity-distance sampling.} 
In applications of sampling to differentially private optimization \cite{mcsherry2007mechanism, hardt2010geometry, bassily2014private,ganesh2020faster, leake2020polynomial},
bounds in the total variation (TV) distance are insufficient to guarantee ``pure''  differential privacy, and 
one instead requires bounds in the infinity-distance $\mathrm{d}_\infty(\nu, \pi):= \sup_{\theta \in K} |\log  \frac{\nu(\theta)}{\pi(\theta)}|$; see e.g.  \cite{dwork2014algorithmic}.
\cite{OM_Sampling22} gives an algorithm that converts samples from TV bounds to those bounded 
in the infinity-distance. 
Namely, given any $\varepsilon>0$ and a sample from a distribution $\mu$ within TV distance 
 $\delta \leq O\left(\epsilon\times \left(\frac{R(d \log(\nicefrac{R}{r})+LR)^2}{\eps r}\right)^{-d} e^{-LR}\right)$ of $\pi$, this post-processing algorithm 
outputs a sample from a distribution $\nu$ with infinity  $\mathrm{d}_\infty(\nu, \pi) \leq \varepsilon$ from $\pi$.
Plugging  the TV bounds from our Theorem \ref{thm_soft_threshold_Dikin} into Theorem 2.2 of \cite{OM_Sampling22} gives a faster algorithm to sample from a log-concave and log-Lipschitz distribution constrained to a polytope $K$, with $O(\eps)$ error in $\mathrm{d}_\infty$.
In particular, Corollary \ref{thm_infinity_divergence_sampler} improves the bound of $O\left((m^2d^3 + m^2 d L^2 R^2) \times \left[LR + d\log\left(\frac{Rd +LRd}{r \eps}\right)\right]\right)$
in  Theorem 2.1 of \cite{OM_Sampling22}.
The proof is identical to that of how Theorem 2.2  implies Theorem 2.1 in \cite{OM_Sampling22}, and we refer the reader to \cite{OM_Sampling22}.

\begin{corollary}[Log-concave sampling on a polytope with infinity-distance guarantees]\label{thm_infinity_divergence_sampler}
There exists an algorithm which, given $\epsilon, L, r, R>0$,  $A \in \mathbb{R}^{m \times d}$, $b\in \mathbb{R}^m$ (and possibly $\beta>0$), that define a polytope $K := \{\theta \in \mathbb{R}^d : A \theta \leq b\}$  contained in a ball of radius $R$, a point $a \in \mathbb{R}^d$ such that $K$ contains a ball $B(a,r)$ of smaller radius $r$, and an oracle for the value of a convex function $f: K \rightarrow \mathbb{R}^d$, where $f$ is  $L$-Lipschitz (or is both $L$-Lipschitz  and $\beta$-smooth), and defining $\pi$ to be the distribution $\pi \propto e^{-f}$, outputs a point from a distribution $\nu$ such that $\mathrm{d}_\infty(\nu, \pi)< \eps$.
        Moreover, with very high probability\footnote{The number of steps is $O(\tau \times T)$,  where $\mathbb{E}[\tau] \leq 3$, $\mathbb{P}(\tau \geq t) \leq \left(\frac{2}{3}\right)^t$ for $t \geq 0$, and $\tau \leq O(d\log(\frac{R}{r}) + LR)$ w.p. 1.}, this algorithm  takes $O(T)$ function evaluations and $O(T\times md^{\omega-1})$ arithmetic operations, where $T=O\left((md + d L^2 R^2) \times \left[LR + d\log\left(\frac{Rd +LRd}{r \eps}\right)\right]\right)$ if $f$ is $L$-Lipschitz, or $T=O\left((md + d \beta R^2) \times \left[LR + d\log\left(\frac{Rd +LRd}{r \eps}\right)\right]\right)$ if $f$ is also $\beta$-Lipschitz. 

\end{corollary}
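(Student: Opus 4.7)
The corollary is a direct composition result: feed the TV-distance sampler guaranteed by Theorem \ref{thm_soft_threshold_Dikin} into the TV-to-infinity-distance post-processing algorithm of \cite{OM_Sampling22}. The proof therefore reduces to (i) choosing the right target TV error $\delta$, (ii) plugging into the soft-threshold Dikin walk bound, and (iii) handling the warm start and the random overhead $\tau$.

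\textbf{Step 1: Choose $\delta$ to meet the infinity-distance requirement.} By the post-processing theorem (Theorem 2.2 of \cite{OM_Sampling22}), to obtain a sample from $\nu$ with $\mathrm{d}_\infty(\nu,\pi)\leq \eps$ it suffices to supply a sample within TV distance
\[
\delta \;\leq\; O\!\left(\eps \cdot \left(\frac{R(d\log(R/r)+LR)^2}{\eps r}\right)^{-d} e^{-LR}\right)
\]
of $\pi$. Taking logarithms, $\log(1/\delta) = O\!\left(LR + d\log\!\left(\tfrac{Rd+LRd}{r\eps}\right)\right)$, which will be exactly the logarithmic factor appearing in $T$.

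\textbf{Step 2: Obtain a warm start.} Since no warm start is assumed, I would use the uniform distribution on the inscribed ball $B(a,r)\subseteq K$, which by the argument already given in the discussion of Theorem \ref{thm_soft_threshold_Dikin} is $w$-warm with respect to $\pi$ for $w = \exp(d\log(R/r) + M)$, where $M = \log(\max_{\theta\in K} e^{f(a)-f(\theta)}) \leq LR$. Thus $\log w = O(d\log(R/r)+LR)$, which is of the same order as $\log(1/\delta)$.

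\textbf{Step 3: Apply the soft-threshold Dikin walk.} Invoking Theorem \ref{thm_soft_threshold_Dikin} from this warm start with the above $\delta$, the number of Markov chain steps is
\[
O\bigl((md + dL^2R^2)\,\log(w/\delta)\bigr) \;=\; O\!\left((md+dL^2R^2)\cdot\bigl[LR + d\log(\tfrac{Rd+LRd}{r\eps})\bigr]\right)
\]
in the $L$-Lipschitz case, and analogously with $\beta R^2$ in the $\beta$-smooth case. Each step costs one function evaluation plus $O(md^{\omega-1})$ arithmetic operations, giving the total operation count claimed.

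\textbf{Step 4: Account for the $\tau$ repetition.} The post-processing procedure of \cite{OM_Sampling22} is a rejection-style scheme; it rejects with some constant probability and must be restarted, producing a random number $\tau$ of attempts with $\mathbb{E}[\tau]\leq 3$, geometric-tail $\mathbb{P}(\tau\geq t)\leq (2/3)^t$, and $\tau\leq O(d\log(R/r)+LR)$ almost surely. I would just multiply the per-attempt cost by $\tau$, noting that the footnote's probabilistic bounds give the ``very high probability'' qualifier. The hardest-to-verify piece is really only Step 1 (the form of the sufficient $\delta$), but this is supplied wholesale by Theorem 2.2 of \cite{OM_Sampling22}, so the proof is essentially a bookkeeping exercise once those two ingredients are assembled.
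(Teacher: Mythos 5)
Your proposal is correct and takes essentially the same route as the paper: the paper explicitly says the proof is identical to how Theorem 2.2 implies Theorem 2.1 in \cite{OM_Sampling22}, i.e., precisely the composition you describe — choose $\delta$ small enough per the post-processing theorem, obtain a warm start of warmth $w=\exp(d\log(R/r)+LR)$ from the inscribed ball, invoke Theorem \ref{thm_soft_threshold_Dikin} with $\log(w/\delta)=O(LR+d\log\frac{Rd+LRd}{r\eps})$, and multiply by the random overhead $\tau$ from the rejection loop. No gaps.
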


\noindent
Corollary \ref{thm_infinity_divergence_sampler} further improves the dependence on the dimension $d$ over \cite{bassily2014private, OM_Sampling22}. %
Specifically, when each function evaluation takes $O(d^2)$ arithmetic operations (this is the case, e.g., in the setting where evaluating $f$ requires computing at most $d$ $d$-dimensional inner products), 
\cite{bassily2014private} implies a bound of   $O\left(\frac{1}{\eps^2}(md^{11} + md^7L^4R^4)\times \mathrm{polylog}\left(\frac{1}{\eps},\frac{1}{r}, R,L,d\right)\right)$ arithmetic operations. 
 Corollary \ref{thm_infinity_divergence_sampler} improves on this bound by a factor of roughly $\frac{1}{\epsilon^2 m}d^{10-\omega}$.  For example, when $m=O(d)$, as may be the case in  privacy applications, the improvement is   $\frac{1}{\epsilon^2}d^{9-\omega}$.
 Moreover, it improves by a factor of $d^3$ on the bound of $O\left((m^2d^3 + m^2 d L^2 R^2) \times \left[LR + d\log\left(\frac{Rd +LRd}{r \eps}\right)\right] \times md^{\omega-1}\right)$ (or  $O\left((m^2d^3 + m^2 d \beta R^2) \times \left[LR + d\log\left(\frac{Rd +LRd}{r \eps}\right)\right] \times md^{\omega-1}\right)$ if $f$ is also $\beta$-smooth) arithmetic operations obtained in \cite{OM_Sampling22} by plugging in the Dikin walk of \cite{narayanan2017efficient} into   \cite{OM_Sampling22}.

\paragraph{Differentially private optimization.}
A randomized mechanism $h: \mathcal{D}^n \rightarrow \mathcal{R}$ is  said to be  $\epsilon$-differentially private if for any datasets $x, x'  \in \mathcal{D}$ which differ by a single datapoint, and any $S \subseteq \mathcal{R}$, we have that $\mathbb{P}(h(x) \in S) \leq e^{\epsilon} \mathbb{P}(h(x') \in S);$ see \cite{dwork2014algorithmic}.
In the application of the exponential mechanism to $\epsilon$-differentially private low-rank approximation of a $p\times p$ symmetric matrix $M$ \cite{leake2020polynomial} (see also \cite{Ge}), one wishes to sample within infinity distance $O(\epsilon)$ from a log-linear distribution $\propto e^{-f}$ on the Gelfand-Tsetlin polytope  $K \subseteq \mathbb{R}^d$  (which generalizes the probability simplex), where $d= p^2$, and where $K$ has
 $m=d$ inequalities with diameter $R= O(\sqrt{d})$.
In this application, the log-linear density $f$ is  (trivially) $0$-smooth and $d^2 \sigma_1$-Lipschitz, where $\sigma_1 := \|M\|_2$ is the spectral norm of $M$.
Thus, when applied to the mechanism of \cite{leake2020polynomial}, our algorithm takes $d^{4.5+\omega}\sigma_1 \log(\frac{1}{\epsilon})$ arithmetic operations.
This improves by a factor of $d^3$ on the runtime bound of $O(d^{7.5 + \omega} \sigma_1)$ arithmetic operations implied by \cite{OM_Sampling22, narayanan2017efficient}, and improves by a factor of $\frac{1}{\epsilon^2}d^{11.5} \sigma_1^3$ on the bound of $O\left(\frac{1}{\epsilon^2}d^{16}\sigma_1^4\right)$ arithmetic operations for the bound in \cite{bassily2014private}.

Consider the problem of finding an (approximate) minimum $\hat{\theta}$ of an empirical risk function $f:  K  \times \mathcal{D}^n \rightarrow \mathbb{R}$ under the constraint that the output $\hat{\theta}$ is $\epsilon$-differentially private,   where  $f(\theta, x) := \sum_{i=1}^n \ell_i(\theta,x_i)$.
We assume that the $\ell_i(\cdot, x)$ are $\hat{L}$-Lipschitz for all $x\in \mathcal{D}^n$, $i \in \mathbb{N}$, for some given $\hat{L}>0$.
In this setting, \cite{bassily2014private} show that the minimum ERM utility bound under the constraint that $\hat{\theta}$ is pure $\epsilon$-differentially private,   $\mathbb{E}_{\hat{\theta}}[f(\hat{\theta},x)] - \min_{\theta \in K} f(\theta,x) = \Theta(\frac{d \hat{L} R}{\epsilon})$,
 is achieved if one samples $\hat{\theta}$ from the exponential mechanism $\pi \propto e^{- \frac{\eps}{2\hat{L}R}f}$ with infinity-distance error at most $O(\eps)$.
Plugging Corollary \ref{thm_infinity_divergence_sampler} into the framework of the exponential mechanism, we obtain a faster algorithm for pure $\eps$-differentially private mechanism which achieves the minimum expected risk (Corollary \ref{corr_DP}).

\begin{corollary}[Differentially private empirical risk minimization]\label{corr_DP}
There exists an\, algorithm which, given $\epsilon, \hat{L}, r, R>0$,  $A \in \mathbb{R}^{m \times d}$, $b\in \mathbb{R}^m$ (and possibly $\hat{\beta}>0$) that define a polytope $K := \{\theta \in \mathbb{R}^d : A \theta \leq b\}$  contained in a ball of radius $R$, a point $a \in \mathbb{R}^d$ such that $K$ contains a ball $B(a,r)$ of smaller radius $r$,
and an empirical risk function $f(\theta, x) := \sum_{i=1}^n \ell_i(\theta,x_i)$, where each $\ell_i: K \rightarrow \mathbb{R}$ is $\hat{L}$-Lipschitz (and possibly also $\hat{\beta}-smooth$),
outputs a random point $\hat{\theta} \in K$ which is pure $\eps$-differentially private and satisfies $\mathbb{E}_{\hat{\theta}}[f(\hat{\theta},x)] - \min_{\theta \in K} f(\theta,x)\leq O(\frac{d \hat{L} R}{\epsilon})$.
Moreover, this algorithm takes at most   $T \times md^{\omega-1}$ arithmetic operations plus $T$ evaluations of the function $f$, where $T = O((md + d n^2 \eps^2) \times (\eps n + d  \mathrm{log}(\frac{nRd}{r\eps}))$ if each $\ell_i$ is $\hat{L}$-Lipschitz (or $T = O((md + d n \frac{\hat{\beta}}{\hat{L}} R \eps) \times (\eps n + d  \mathrm{log}(\frac{nRd}{r\eps}))$ if $f$ is also $\beta$-Lipschitz).
\end{corollary}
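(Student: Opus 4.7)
The plan is to combine the exponential mechanism of \cite{mcsherry2007mechanism} with the utility analysis of \cite{bassily2014private} and invoke Corollary \ref{thm_infinity_divergence_sampler} to sample from it in infinity distance. First, I would define the target $\pi_x \propto e^{-g_x}$ on $K$, where $g_x(\theta) := \frac{\epsilon}{2\hat{L}R} f(\theta, x)$. Since $f(\cdot, x) = \sum_{i=1}^n \ell_i(\cdot, x_i)$ is $n\hat{L}$-Lipschitz, $g_x$ is $L$-Lipschitz with $L := \frac{n\epsilon}{2R}$, so $LR = \frac{n\epsilon}{2}$; in the smooth case $g_x$ is also $\beta$-smooth with $\beta := \frac{n\hat{\beta}\epsilon}{2\hat{L}R}$, so $\beta R^2 = \frac{n\hat{\beta}R\epsilon}{2\hat{L}}$. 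The standard analysis of the exponential mechanism gives that exact samples from $\pi_x$ are pure $\epsilon$-differentially private, and by \cite{bassily2014private} achieve expected excess risk $O(d\hat{L}R/\epsilon)$.

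Next, I would apply Corollary \ref{thm_infinity_divergence_sampler} to $\pi_x$ to produce a sample from a distribution $\nu$ with $\mathrm{d}_\infty(\nu, \pi_x) \leq \epsilon'$ for some $\epsilon' = c\epsilon$ with $c$ a small absolute constant. For any neighboring datasets $x, x'$ and any measurable $S \subseteq K$,
\[
\mathbb{P}(\hat{\theta}(x) \in S) \leq e^{\epsilon'} \pi_x(S) \leq e^{\epsilon + \epsilon'} \pi_{x'}(S) \leq e^{\epsilon + 2\epsilon'} \mathbb{P}(\hat{\theta}(x') \in S),
\]
and if at the outset I replace $\epsilon$ by $\epsilon/(1+2c)$ in the exponent of $g_x$, the final output is pure $\epsilon$-differentially private while the utility bound changes only by the constant factor $1+2c$. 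The same $\mathrm{d}_\infty$ closeness preserves $\mathbb{E}[f]$ up to a constant factor, so the utility bound $O(d\hat{L}R/\epsilon)$ of the exponential mechanism is retained.

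Finally, I would read off the runtime from Corollary \ref{thm_infinity_divergence_sampler}. Substituting $LR = \tfrac{n\epsilon}{2}$ into $md + dL^2R^2$ gives $O(md + dn^2\epsilon^2)$, and into the bracketed factor gives $LR + d\log(\tfrac{Rd + LRd}{r\epsilon}) = O(\epsilon n + d\log(\tfrac{nRd}{r\epsilon}))$; similarly $\beta R^2 = O(n\tfrac{\hat{\beta}}{\hat{L}}R\epsilon)$ handles the smooth case. The resulting product matches the $T$ in the statement; each Markov-chain step uses one evaluation of $f$ and $O(md^{\omega-1})$ arithmetic operations, so the total cost is $T$ function evaluations plus $T \cdot md^{\omega-1}$ arithmetic operations. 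The argument has no genuine obstacle since it is a reduction to Corollary \ref{thm_infinity_divergence_sampler}; the only care required is (i) the constant-factor rescaling of the exponent that absorbs the $2\epsilon'$ slack from infinity-distance approximation back into pure $\epsilon$-privacy, and (ii) tracking the Lipschitz and smoothness constants of $g_x$ through the parameters of Corollary \ref{thm_infinity_divergence_sampler}.
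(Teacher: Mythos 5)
Your proposal is correct and follows essentially the same route the paper uses (the paper simply defers to the analogous derivation in \cite{OM_Sampling22}): rescale the objective to form the exponential-mechanism density $\pi_x \propto e^{-\frac{\epsilon}{2\hat{L}R}f(\cdot,x)}$, track $LR = O(n\epsilon)$ and $\beta R^2 = O(n\hat{\beta}R\epsilon/\hat{L})$, invoke Corollary~\ref{thm_infinity_divergence_sampler} for the $\mathrm{d}_\infty$-accurate sampler, absorb the $O(\epsilon')$ infinity-distance slack into the privacy budget by a constant rescaling, and cite \cite{bassily2014private} for the utility bound. The substitutions you carry out reproduce the stated $T$ in both the Lipschitz and smooth cases.
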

Corollary \ref{corr_DP} improves on \cite{bassily2014private,OM_Sampling22}. 
In particular, it improves upon the bound of  $O((\frac{1}{\epsilon^2}(m+n)d^{11}+ \epsilon^2 n^4 (m + n) d^7)  \times \mathrm{polylog}(\frac{nRd}{r\eps})))$ arithmetic operations in \cite{bassily2014private} by a factor of roughly $\max(\frac{d^{10-\omega}}{\epsilon^2 m},  \frac{1}{\epsilon}n d^{5})$, in the setting where the  $\ell_i$ are $\hat{L}$-Lipschitz on a polytope $K$ and each $\ell_i$ can be evaluated in $O(d)$ operations.
 And it improves by a factor of (at least) $md$ on the bound of $O((m^2d^3 + m^2d n^2 \eps^2) \times (\eps n + d) \mathrm{log}^2(\frac{nRd}{r\eps})) \times md^{\omega-1})$ arithmetic operations obtained in  \cite{OM_Sampling22}.
The proof is identical to that of how Theorem 2.2  implies Corollary 2.4 in \cite{OM_Sampling22}, and we refer the reader to \cite{OM_Sampling22}.
 For instance, when applying the exponential mechanism to the Lasso logistic regression problem, each loss $\ell_i$ is both $\hat{\beta}$-smooth and  $\hat{L}$-Lipschitz  for $\hat{\beta} = \hat{L} = 1$ and $R=O(1)$.
Thus, if $n=d$ and $\epsilon <1$, for this problem our algorithm requires $O(d^{3+\omega})$ arithmetic operations, which improves by a factor of $d^3$ on the bound of $d^{6+\omega}$ arithmetic operations implied by Corollary 2.4 in \cite{OM_Sampling22} and by roughly $d^{9-\omega}$ on the bound of $O(d^{12})$ arithmetic operations implied by \cite{bassily2014private}.
In another example, when training a support vector machine model with hinge loss and Lasso constraints under $\epsilon$-differential privacy, one has that $\ell_i$ is $\hat{L}$-Lipschitz but not smooth, for $\hat{L}=1$, and $R=O(1)$.
Thus, if $n=d$ and $\epsilon <1$, our algorithm requires $O(d^{4 +\omega})$ arithmetic operations,  which improves by a factor of $d^2$ on the bound of $d^{6+\omega}$ arithmetic operations implied by Corollary 2.4 in \cite{OM_Sampling22} and by roughly $d^{8-\omega}$ on the bound of $O(d^{12})$ implied by \cite{bassily2014private}.

\section{Overview of proof of Theorem \ref{thm_soft_threshold_Dikin} -- Main Result} \label{sec_technical_overview}
Suppose we are given any polytope $K=\{\theta \in \mathbb{R}^d : A \theta \leq b\}$ defined by $m$ inequalities, and a convex function $f: K \rightarrow \mathbb{R}^d$ which is  $L$-Lipschitz (or $\beta$-smooth) and given by an oracle which returns the value $f(\theta)$ at any point $\theta$.
Our  goal is to sample from the log-Lipschitz log-concave density $\pi \propto e^{-f}$ on $K$ within any total variation error $\delta>0$, 
in a number of arithmetic operations and oracle calls that has a dependence on the dimension $d$ that is a lower-order polynomial than currently available bounds for sampling from log-Lipschitz log-concave distributions, and is logarithmic in $\frac{1}{\delta}$.

\paragraph{Extending the Dikin walk to sample from log-concave distributions on polytopes.}  
As a first step, we begin by attempting to generate samples from $\pi$ via the (Gaussian) Dikin walk Markov chain, by extending the analysis given in \cite{sachdeva2016mixing} for the special case when $\pi$ is the uniform distribution on $K$ to the more general setting where $\pi$ is a log-Lipschitz log-concave density on a polytope $K$.

In the special case where $\pi$ is the uniform distribution on $K$, from any point $\theta$ in the interior of $K$, the Dikin walk proposes updates $z = \theta + \sqrt{\alpha H^{-1}(\theta)} \, \xi$ where $\xi \sim N(0,I_d)$ and $H(\theta) = \nabla^2 \varphi(\theta)$ is the Hessian of the log-barrier function $\varphi(\theta) = -  \sum_{j=1}^{m} \log(b_j-a_j^\top \theta )$ for $K=\{\theta \in \mathbb{R}^d : A \theta \leq b\}$, and $\alpha>0$ is a scalar hyperparameter.
To ensure that the stationary distribution of the Dikin walk is the uniform distribution on $K$, if a proposed update falls in the interior of $K$, it is accepted with probability $\min \left (\frac{\sqrt{\mathrm{det}(H(z))}}{\sqrt{\mathrm{det}(H(\theta))}} e^{\|z- \theta\|_{\Phi(\theta)}^2 - \|\theta- z\|_{\Phi(z)}^2}
 , 1 \right)$ determined by the metropolis rule; otherwise it is rejected.
The use of the log-barrier function is to ensure that the steps proposed by the Dikin walk Markov chain remain inside the polytope $K$ w.h.p.

The hyperparameter $\alpha^{-1}$ is chosen as large as possible while still ensuring that the proposed steps remain in $K$ and are accepted w.h.p.
On the one hand, since the covariance matrix  $\alpha^{-1} H^{-1}(\theta)$ of the proposed updates  is proportional to $\alpha^{-1}$, larger values of the hyperparameter $\alpha^{-1}$ allow the Dikin walk to propose larger update steps.
On the other hand, if one chooses $\alpha^{-1}$ too large, then the proposed steps of the Dikin walk may fall outside the polytope and be rejected with high probability, requiring the Dikin walk to propose a very large number of updates before it is able to take a step.
To  see how to choose the hyperparameter $\alpha^{-1}$, note that for any $\theta \in \mathrm{Int}(K)$, the Dikin ellipsoid $E(\theta) = \{w: w^\top H^{-1}(\theta) w \leq 1\}$ is contained in $K$.
Thus, standard Gaussian concentration inequalities which guarantee that $\| \xi\|_2 = O(\sqrt{d})$ w.h.p. whenever $\xi \sim N(0,I_d)$, imply that $\sqrt{\alpha^{-1} H^{-1}(\theta)} \, \xi$ is contained in $K$ w.h.p. if we set the step size parameter to be $\alpha^{-1} \leq O(\frac{1}{d})$. 
Moreover, using properties of log-barrier functions, one can show that the term $\frac{\mathrm{det}(H(z))}{\mathrm{det}(H(\theta))}$ in the acceptance ratio is also $\Omega(1)$  for $\alpha^{-1} = O(\frac{1}{d})$, as is done in \cite{kannan2012random, sachdeva2016mixing}.
To see why, Lemma 4.3 of \cite{vaidya1993technique}  implies that, if $H(\theta)$ is the Hessian of a log-barrier function for a polytope $K$,  then its log-determinant $V(\theta) = \log(\mathrm{det}(H(\theta)))$ satisfies the following inequality
\begin{equation} \label{eq_t4}
      (\nabla V(\theta)) ^\top [H(\theta)]^{-1} \nabla V(\theta) \leq O(d) \qquad \forall \theta \in \mathrm{Int}(K).
\end{equation}
Thus, if we choose $\alpha^{-1} \leq \frac{1}{d}$, the proposed update $z= \theta + \sqrt{\alpha^{-1} H^{-1}(\theta)} \, \xi$, where  $\xi \in N(0,I_d)$, has variance $\Omega(1)$ in the direction  $\nabla V(\theta)$, and (by standard Gaussian concentration inequalities), we have $(\theta-z)^\top \nabla V(\theta) \leq O(1)$ w.h.p.
This implies that $V(z) - V(\theta) = \log \frac{\mathrm{det}(H(z))}{\mathrm{det}(H(\theta))} = \Omega(1)$, and hence that $\frac{\mathrm{det}(H(z))}{\mathrm{det}(H(\theta))} = \Omega(1)$  w.h.p.

In \cite{narayanan2017efficient}, the Dikin walk Markov chain is applied to the more general problem of sampling from a $L$-Lipschitz (or $\beta$-smooth) log-concave distribution $\pi \propto e^{-f}$ on $K$ (the problem of interest in this paper). 
To guarantee that the Dikin walk has the correct stationary distribution $\pi$, the Metropolis acceptance probability of the proposed updates $z = \theta + \sqrt{\gamma^{-1} H^{-1}(\theta)} \,  \xi$, where $\gamma^{-1}$ is a hyperparameter,  gains an additional factor $\frac{e^{-f(z)}}{e^{-f(\theta)}}$. %
To ensure that this acceptance probability remains $\Omega(1)$, they modify the value of the scalar step size hyperparameter $\gamma^{-1}$ in such a way that w.h.p. the Markov chain takes steps where the value of $f$ changes by an amount at most $O(1)$.
To see how to choose $\gamma^{-1}$, note that since $f$ is $L$-Lipschitz, $e^{f(\theta)-f(z)}=\Omega(1)$ if the Euclidean distance $\|z-\theta\|_2$ is $O(\frac{1}{L})$.
This can be shown to occur with high probability if  $\gamma = O((\frac{1}{LR})^2)$, since the fact that the Dikin ellipsoid is contained in $K \subseteq B(0,R)$ implies that the eigenvalues of $H(\theta)$ must all be at most $R^2$ and hence that the variance of the proposed step would be at most $\frac{1}{dL^2}$ in any given direction.
Thus, it is sufficient for them to choose $\gamma = \min(\frac{1}{d}, (\frac{1}{LR})^2)$ to ensure the proposed step $z$ both remains in $K$ and is accepted with high probability by the Metropolis rule.

On the one hand, to ensure that the Markov chain proposes steps that change $f$ by an amount at most $O(1)$ for {\em any} $L$-Lipschitz function $f$, it is necessary and sufficient to ensure that from any point $\theta \in \mathrm{Int}(K)$, the Markov chain makes updates which fall w.h.p. inside a Euclidean ball $B(\theta, \frac{1}{L})$ of radius $\frac{1}{L}$ centered at $\theta$.
This is because the Lipschitz condition on $f$: $\|f(\theta) - f(z)\|_2 \leq L \|\theta - z \|_2$ for all $\theta, z \in K$ holds with respect to the Euclidean norm $\|\cdot \|_2$.
On the other hand, to ensure that the Markov chain remains inside the polytope $K$, it is sufficient for the Markov chain to propose steps which lie inside the Dikin ellipsoid $E(\theta) := \{w: w^\top H^{-1}(\theta) w \leq 1\}$ centered at $\theta$.
Roughly speaking, the scalar step size $\gamma^{-1}$ is chosen such that this Dikin ellipsoid is contained inside the Euclidean ball $B(\theta, \frac{1}{L})$, as this guarantees that w.h.p. the steps proposed by the Dikin walk will both remain inside the polytope $K$ and will also not change the value of $f$ by more than $O(1)$.

However, at many points $\theta$ the Dikin ellipsoid $E(\theta)$ is such that the ratio of the largest to smallest eigenvalues of $H^{-1}(\theta)$ may be very large (in fact this ratio can become arbitrarily large as $\theta$ approaches a face of the polytope).
Thus, roughly speaking, modifying the covariance matrix of the Dikin walk by a scalar constant ($\gamma^{-1}H^{-1}(\theta)$) can cause the Dikin walk to propose steps whose variance in some directions is much smaller than is required for {\em either} of the two goals: staying inside the polytope $K$ and staying inside the ball $B(\theta, \frac{1}{L})$ defined by the Lipschitz condition on $f$.
This suggests that modifying the log-barrier function for $K$ by a scalar multiple may not be the most efficient way of extending the Dikin walk to the problem of sampling from a general $L$-Lipschitz (or $\beta$-smooth) log-concave distribution on $K$, and 
that one may be able to obtain faster runtimes by making other modifications to the barrier function.

\paragraph{A soft-threshold version of the Dikin walk.} 

Before we introduce our soft-threshold Dikin walk, we first note that, even in the special case where $\pi$ is the uniform distribution on $K$, the analysis in \cite{narayanan2017efficient} does not recover the runtime bounds given in \cite{kannan2012random, sachdeva2016mixing} for this special case, as  \cite{narayanan2017efficient} use a different runtime analysis geared to time-varying distributions studied in that paper.
Namely, the results in \cite{narayanan2017efficient} imply a bound of $O(m^2 d^3 \log(\frac{\omega}{\delta}))$ Dikin walk steps to sample from a uniform distribution on $K$, while \cite{kannan2012random, sachdeva2016mixing} show bound of $O(md \log(\frac{\omega}{\delta}))$ in the special case where $\pi$ is the uniform distribution.
For this reason, we first extend the analysis of the Gaussian Dikin walk given in \cite{kannan2012random, sachdeva2016mixing} for the special case of uniform $\pi$, to the more general problem of sampling from an $L$-Lipschitz or $\beta$-mooth log-concave distribution. 
The analysis  in \cite{kannan2012random, sachdeva2016mixing}  uses the cross-ratio distance metric.
More specifically, if for any distinct points $u,v \in \mathrm{Int}(K)$ we let $p,q$ be the endpoints of the chord in $K$ which passes through $u$ and $v$ such that the four points lie in the order $p,u,v,q$, then cross-ratio distance is
\begin{equation}
\sigma(u,v) :=  \frac{\|u-v\|_2 \times \|p-q\|_2}{\|p-u\|_2 \times \|v-q\|_2}.
\end{equation}
As the usual Dikin walk Markov chain takes steps that have roughly identity covariance matrix $I_d$ with respect to the local norm  $\|u\|_{\gamma H(\theta)} := \sqrt{ u^\top \gamma  H(\theta) u}$,
and one can show that for any $u,v\in \mathrm{Int}(K)$, $\sigma^2(u,v) \geq \frac{1}{m \gamma^{-1}}\|u-v\|^2_{\gamma H(u)}$  (see e.g. \cite{kannan2012random, sachdeva2016mixing}), the variance in any given direction with respect to the cross-ratio distance is bounded below by  $\Delta = O(\frac{1}{m \gamma^{-1}})$.
Thus, for $\gamma = \min(\frac{1}{d}, (\frac{1}{LR})^2)$, the bound we would obtain on the number of steps until the Dikin walk is within TV error $\delta$ from $\pi$ is  $O(\Delta^{-1} \log(\frac{\omega}{\delta})) = O((md + mL^2R^2) \log(\frac{\omega}{\delta}))$ from an $\omega$-warm start.
To obtain even faster bounds, we would ideally like to allow the Dikin walk to take larger steps by choosing a larger value of $\gamma$, closer to the value of $\frac{1}{d}$ that is sufficient to ensure an $\Omega(1)$ acceptance probability in the special case when $\pi$ is uniform. 
Unfortunately, if e.g. $LR\geq d$, reducing $\gamma$  from a value of $\frac{1}{d}$ to a value of $(\frac{1}{LR})^2$, may be necessary to ensure that the variance of the Dikin walk steps is less than $\frac{1}{dL^2}$ in every direction, and hence that the acceptance probability is $\Omega(1)$.

To get around this problem, we introduce a new variant of the Dikin walk Markov chain for sampling from any $L$-Lipschitz (or $\beta$-smooth) log-concave distributions on a polytope $K$, which generalizes the Dikin walk Markov chain introduced in \cite{kannan2012random} for sampling from $\pi$ in the special case when $\pi$ is the uniform distribution on $K$.
The main difference between our Dikin walk and the usual Dikin walk of  \cite{kannan2012random} (and the version of the Dikin walk in \cite{narayanan2017efficient}) is that our Dikin walk regularizes the Hessian $\alpha^{-1}H(\theta)$ of the log-barrier for $K$ by adding a ``soft-thresholding'' term $\eta^{-1} I_d$ proportional to the identity matrix, where $\eta^{-1}$ is a hyperparameter and $\alpha^{-1}$ is the same hyperparameter appearing in the original Dikin walk of \cite{kannan2012random}.
Since the log-barrier Hessian $\alpha^{-1}H(\theta)$ and the regularization term $\eta^{-1} I_d$ have different scalar hyperparameters, we can set these two hyperparameters $\alpha^{-1}$ and $\eta^{-1}$ independently from each other: roughly speaking, $\alpha^{-1}$ is chosen to be the largest value such that the Dikin ellipsoid defined by the matrix $\alpha^{-1}H(\theta)$ remains inside the polytope $K$, while $\eta^{-1}$ is independently chosen to be the largest value such that, with high probability, the steps proposed by our Markov chain remain inside the ball $B(\theta, \frac{1}{L})$ defined by the Lipschitz condition on $f$.
Roughly speaking, this allows us to reduce the variance of the proposed steps of the Dikin walk Markov chain only in those directions where a choice of $\alpha = \frac{1}{d}$ would cause the variance to be greater than $\frac{1}{dL^2}$, while leaving the variance in other directions unchanged (up to a factor of 2).
More specifically, the steps proposed by our soft-threshold Dikin walk Markov chain are Gaussian with mean $0$ and covariance matrix $$\Phi^{-1}(\theta) := (\alpha^{-1} H(\theta) + \eta^{-1} I_d)^{-1},$$ for some hyperparameters $\alpha, \eta >0$.
The matrix $\Phi(\theta)$ is the Hessian of the function $\psi(\theta) = \alpha^{-1} \varphi(\theta) + \eta^{-1}\|\theta\|^2$ where $\varphi(\theta) = -  \sum_{j=1}^{m} \log(b_j-a_j^\top \theta )$ is the log-barrier for $K$.
The modified function $\psi(\theta)$ can be seen to also be a  self-concordant barrier function for the polytope $K$.
In the special case where $\pi$ is the uniform distribution, $L$ and  $\eta^{-1}$ are both equal to $0$, and our ``soft-threshold'' Dikin walk Markov chain recovers the original Dikin walk of \cite{kannan2012random}.
Thus, our  ``soft-threshold'' Dikin walk generalizes the original Dikin walk Markov chain to the problem of sampling from a general $L$-Lipschitz (or $\beta$-smooth) log-concave distribution on a polytope $K$.

\paragraph{Bounding the number of Markov chain steps.} 
Setting $\eta = \frac{1}{d L^2}$ ensures that the variance of the proposed update $z-\theta$ of our Markov chain is at most $O\left(\frac{1}{d L^2}\right)$ in any given direction, and hence that the term $e^{f(z)-f(\theta)}$ in the Metropolis acceptance rule is $\Omega(1)$ with high probability (Lemma \ref{lemma_density_ratio}).
Moreover, we also show that, if we choose $\alpha= \frac{1}{d}$, the other terms in the Metropolis acceptance rule are also $\Omega(1)$ (Lemmas \ref{lemma_remain_in_ellipsoid}, \ref{lemma_det}). 
While the proofs of these lemmas follow roughly the same outline as in the special case of the original Dikin walk where $\pi$ is uniform (e.g., \cite{kannan2012random, sachdeva2016mixing}), our bound on the determinantal term $\frac{\mathrm{det}\Phi(z)}{\mathrm{det}\Phi(\theta)}$ must deal with additional challenges, which we discuss in the next subsection.

To bound the number of steps required by our Markov chain to sample with TV error $O(\delta)$, we first bound the cross-ratio distance $\sigma(u,v)$ between any points $u,v \in \mathrm{Int}(K)$ by the local norm $\|u-v\|_{\Phi(u)}$ (Lemma \ref{lemma_cross_ratio}):
\begin{align} \label{eq_t1}
     \sigma^2(u,v) 
        \geq \left(\frac{1}{2m} \sum_{i=1}^m \frac{(a_i^\top(u-v))^2}{(a_i^\top u-b_i)^2} \right) + \frac{1}{2} \frac{\|u-v\|_2^2}{R^2}
                \geq \frac{1}{2m \alpha^{-1} + 2 \eta^{-1} R^2}\|u-v\|^2_{\Phi(u)}.
\end{align}
Roughly, this means the variance of the cross-ratio distance of our Markov chain's step in any given direction is bounded below by some number $\Delta$, where $\Delta = \Omega\left(\frac{1}{2m \alpha^{-1} + 2 \eta^{-1} R^2}\right)$.
Using \eqref{eq_t1} together with the isoperimetric inequality for the cross-ratio distance  (Theorem 2.2 of \cite{lovasz2003hit}), we show that, if the acceptance probability of our Markov chain is $\Omega(1)$ at each step, then the number of steps for our Markov chain to obtain a sample within a TV distance of $\delta$ from $\pi$ is $O(\Delta^{-1} \log(\frac{\omega}{\delta})) %
= O((md + d L^2 R^2)\log(\frac{\omega}{\delta}))$ from an $\omega$-warm start.
In particular, in the regime where $m=O(d)$ and $LR > d$, this improves on the bound we would get for the basic Dikin walk by a factor of $d$.

\paragraph{Bounding the determinantal term in the acceptance probability.} 
For our mixing time bound to hold, we still need to show that the determinantal term $\frac{\mathrm{det}(\Phi(z))}{\mathrm{det}(\Phi(\theta))}$ is $\Omega(1)$ with high probability.
To bound this term, we would ideally like to follow the general approach that previous works \cite{kannan2012random, sachdeva2016mixing} use to show that the determinantal term $\frac{\mathrm{det}(H(z))}{\mathrm{det}(H(\theta))}$ in the  basic Dikin walk is $\Omega(1)$ with high probability, which relies on the  property of log-barrier functions in Inequality \eqref{eq_t4}.
Unfortunately, since $\Phi(\theta)$ is not the Hessian of a log-barrier function for any system of inequalities defining the polytope $K$, we cannot directly apply Inequality \eqref{eq_t4} to $\Phi(\theta)$. 

To get around this problem, we show that, while $\Phi(\theta)$ is not the Hessian of a log-barrier function, it is in fact the limit of a sequence of matrices $H_i(\theta)$, $i\in \mathbb{N}$, where each matrix $H_i(\theta)$ in this sequence {\em is} the Hessian of a (different) log-barrier function for $K$. 
Specifically, for every $j \in \mathbb{N}$, we consider the matrices $A^j = [A^\top, I_d, \ldots, I_d]^\top$ where $A$ is concatenated with $\frac{m_j -1}{d}$ copies of the identity matrix $I_d$,  and $m_j  = m + \lfloor \alpha \eta^{-1} j^2 \rfloor d$.
And we consider the vectors $b^j = (b^\top, j \textbf{1}^\top, \ldots, j \textbf{1}^\top)^\top$, where $b$ is concatenated with $\frac{m_j -1}{d}$ copies of the vector $j \textbf{1}$, where $\textbf{1}= (1,\ldots,1)^\top \in \mathbb{R}^d$ is the all-ones vector.
Then, for large enough $j$, we have $K = \{\theta \in \mathbb{R}^d: A^j \theta \leq b^j\}$, and the Hessian of the corresponding log-barrier functions is
\begin{align} \label{eq_t8}
  H_j(w) &=\sum_{i=1}^{m^j} \frac{a^j_i (a^j_i)^\top}{((a^j_i)^\top w -b^j_i)^2}
=H(w) +  \lfloor \alpha \eta^{-1} j^2 \rfloor  \sum_{i=1}^{d} \frac{e_i e_i^\top}{(e_i^\top w -j)^2}.
\end{align}
 Using this fact \eqref{eq_t8}, we show that, for every $\theta \in \mathrm{int}(K)$, every $z \in \frac{1}{2}D_\theta$, and every sequence $\{z_j\}_{j=1}^\infty \subseteq \frac{1}{2}D_\theta$ such that $z_j \rightarrow z$,  we have (Lemma \ref{lemma_limits}),
 \begin{equation} \label{eq_t7}
  \lim_{j\rightarrow \infty}     \frac{\mathrm{det}(H_j(z_j))}{\mathrm{det}(H_j(\theta))} = \frac{\mathrm{det}(\Phi(z))}{\mathrm{det}(\Phi(\theta))}.
\end{equation}
Moreover, since each $H_j$ is the Hessian of a log-barrier function for $K$, we have that \eqref{eq_t4} does hold for $H_j$ and hence (from the work of \cite{kannan2012random, sachdeva2016mixing}) that %
$\frac{\mathrm{det}(H_j(z_j))}{\mathrm{det}(H_j(\theta))} = \Omega(1)$
  w.h.p. for all $j \in \mathbb{N}$, if we set 
$z_j = \theta+ \alpha^{\frac{1}{2}} H_j^{-\frac{1}{2}}(\theta) \xi$, $\xi \sim N(0, I_d)$, and choose $\alpha = \frac{1}{d}$.
Thus,  \eqref{eq_t7} implies that $\frac{\mathrm{det}(\Phi(z))}{\mathrm{det}(\Phi(\theta))} = \Omega(1)$ w.h.p. as well (Lemma \ref{lemma_det}), and hence the acceptance probability of the soft-threshold Dikin walk is $\Omega(1)$ at each step.

\paragraph{Bounding the number of arithmetic operations.} \label{sec_soft_threshold_concluding}

Since the acceptance probability is $\Omega(1)$, from the above discussion we have that the number of steps for our Markov chain to obtain a sample within a TV distance $\delta>0$ from $\pi$ is $O((md + L^2 R^2)\log(\frac{\omega}{\delta}))$ from an $\omega$-warm start.

Each time our Markov chain proposes a step $\theta + \Phi(\theta)^{-\frac{1}{2}} \xi$, it must first sample a Gaussian vector $\xi \sim N(0,I_d)$ which takes $O(d)$ arithmetic operations.
 To compute $\Phi(\theta)$, it must then compute the log-barrier Hessian $H(\theta)$, and invert the matrix $\Phi(\theta) = \alpha^{-1} H(\theta) + \eta^{-1} I_d$.

 Since   $H(w) = C(w) C(w)^\top$, where $C(w)$ is a $d \times m$ matrix with columns $c_j(w) = \frac{a_j}{a_j^\top w - b_j}$ for all $j \in [m]$, we can compute $H(w)$ in $m d^{\omega -1}$ arithmetic operations using fast matrix multiplication.
    And since $\Phi(\theta)$ is a $d \times d$ matrix, computing $\Phi(\theta)^{-\frac{1}{2}}$ can be accomplished in $d^\omega$ arithmetic operations by computing the singular value decomposition of $\Phi$.
    Next, the Markov chain must compute the acceptance probability\\ $$  \min  \left (\frac{e^{-f(z)} \sqrt{\mathrm{det}(\Phi(z))}}{e^{-f(\theta)} \sqrt{\mathrm{det}(\Phi(\theta))}} e^{\|z- \theta\|_{\Phi(\theta)}^2 - \|\theta- z\|_{\Phi(z)}^2}
 , 1 \right).$$
    Here, the two determinantal terms can be computed in $O(d^\omega)$ arithmetic operations by again computing the singular value decomposition of $\Phi$,
    and the two values of $f$ can be computed in two calls to the oracle for $f$.

    Thus, from an $\omega$-warm start, the soft-threshold Dikin walk takes at most $O((md + d L^2 R^2) \times \log(\frac{w}{\delta}))$ Markov chain steps to obtain a sample from $\pi$ with total variation error $\delta>0$, where each step takes $O(md^{\omega-1})$ arithmetic operations, and one function evaluation.

\section{Extension to general barrier functions?}\label{sec:barrier}

From any point $\theta$, our algorithm proposes a step with Gaussian distribution $N(\theta, \frac{1}{d}(\nabla^2 g(\theta))^{-1})$, where $g$ is the following barrier function
\begin{equation} \label{eq_soft_threshold_barrier}
g(\theta) = \varphi(\theta) + \hat{\eta}^{-1} \theta^\top \theta,
\end{equation}
where $\varphi(\theta)$ is a barrier function for the polytope $K$, and the parameter $\hat{\eta}^{-1} = \Omega(\frac{1}{L^2})$ in the setting where $f$ is guaranteed to be $L$-Lipschitz and $\hat{\eta}^{-1} = \Omega(\frac{1}{\beta})$ in the setting where $f$ is guaranteed to be $\beta$-smooth.
Thus, most of the probability mass of the Gaussian distribution concentrates in the Dikin ellipsoid $E(\theta) := \theta + \{w: w^\top (\nabla^2 g(\theta))^{-1} w \leq 1\}$ for the barrier function $g$.
For simplicity, in our main result, we assume that $\varphi(\theta):= -  \sum_{j=1}^{m} \log(b_j-a_j^\top \theta )$ (which has self-concordance parameter $m$), however, we can in principle choose any barrier function $\varphi$ for the polytope $K$, such as the entropic barrier \cite{bubeck2015entropic} or the Lee-Sidford Barrier \cite{lee2019solving} which have self-concordance parameter roughly $\nu = d$.

To arrive at our barrier function from a more axiomatic approach, we first consider the definition of $\nu$-self concordant barrier function:
\begin{definition} [$\nu$-self-concordant barrier function for $K$] \label{def_barrier}
We say that $g$ is a $\nu$-self-concordant barrier function if $g: \mathrm{Int}(K) \rightarrow \mathbb{R}$ and $g$ satisfies the following conditions:
\begin{enumerate}
    \item \textbf{Convex and differentiable barrier:} $g$ is convex and third-order differentiable, and $g(x) \rightarrow + \infty$ as $x \rightarrow \partial K$.

\item \textbf{Self-concordance:}   $\nabla^3g(x)[h,h,h] \leq 2 (\nabla^2g(x)[h,h])^{3/2}$ for all $h \in \mathbb{R}^d$            (this ensures that the Hessian of the barrier function does not change too much each time the Dikin walk takes a step from $x$ to $z$,  that is, $\frac{1}{2} \nabla^2g(z) \preceq \nabla^2g(x) \preceq 2 \nabla^2g(z)$)

\item  \textbf{$g$ is $\nu$-self concordant:} 
$h^\top \nabla g(x) \leq \sqrt{\nu h^\top \nabla^2 g(x) h}$ for every $x \in \mathrm{Int}(K)$,  $h \in \mathbb{R}^d$
\end{enumerate}
\end{definition}
The fact that our barrier function \eqref{eq_soft_threshold_barrier} satisfies parts (1) and (2) of Definition \ref{def_barrier} follows from the fact that $\varphi$ satisfies Definition \ref{def_barrier} and that $\nabla^3(\theta^\top \theta) = 0$.  We discuss the self-concordance parameter $\nu$ for which our barrier function satisfies  \eqref{eq_soft_threshold_barrier}  below.

 To ensure that the steps $z \sim N(\theta, (\frac{1}{d}\nabla^2 g(\theta))^{-1})$ proposed by our Dikin walk Markov chain arising from the barrier function $g$ has an $\Omega(1)$ acceptance ratio $e^{f(z)-f(\theta)}$, we require that the function $g$ satisfies the following property.
This property says that at least $\frac{1}{4}$ of the volume of the Dikin ellipsoid $E(\theta)$ is contained in the sublevel set $\{z \in K: f(\theta) < f(\theta) + 2 \}$ where the value of $f$ does not increase by more than $2$.
\begin{property}\label{property_ellipsoid}\textbf{(Dikin ellipsoid  mostly contained in sublevel set)}  At every $\theta \in \mathrm{Int}(K)$, the Dikin Ellipsoid, satisfies
\begin{eqnarray*}
  \mathrm{Vol}\left( E(\theta) \cap \{z \in K: f(z) < f(\theta) + 2 \} \right) \geq \frac{1}{4} \mathrm{Vol}(E(\theta)).
\end{eqnarray*}
\end{property}
When designing a barrier function $g$, there is a trade-off between choosing $g$ such that the self-concordance parameter $\nu$ is small, while at the same time ensuring that  Property \ref{property_ellipsoid} holds.
Roughly, to make the parameter $\nu$ as small as possible, we would like the Dikin ellipsoid to be large relative to the Hilbert-distance metric for the convex body $K$ (This is because by Proposition 2.3.2(iii) of \cite{nesterov1994interior}, any $\nu$-self concordant function $g$ satisfies $(h^\top \nabla^2 g(x) h)^{-\frac{1}{2}} \leq |h|_x \leq (1+ 3\nu) (h^\top \nabla^2 g(x) h)^{-\frac{1}{2}}$, for any $h \in \mathbb{R}^d$ where $|h|_x := \sup \{ \alpha>0 :   x \pm \alpha h \in  \{z \in K \}$).
On the other hand, if we make the Dikin ellipsoid too large (with respect to the sublevel set $\{z \in K: f(z) < f(x) + 2 \}$) then  Property \ref{property_ellipsoid} will not be satisfied, and the steps proposed by the Dikin walk will have a very low acceptance probability.
Setting the hyperparameter $\hat{\eta} = \frac{1}{L^2}$ when $f$ is $L$-Lipschitz or $\hat{\eta} = \frac{1}{\beta}$ when $f$ is $\beta$ smooth ensures that our barrier function in  \eqref{eq_soft_threshold_barrier} satisfies  Property \ref{property_ellipsoid}.

In the special case where $f$ is the log-density of the uniform distribution ($f\equiv 0$), or when $f$ is any linear function, we have that $f$ is $\beta$-smooth for $\beta=0$.
Thus, our barrier function \eqref{eq_soft_threshold_barrier} we use to encode the geometry of $f$ on $K$ is the same as the barrier function $\varphi$ for the polytope $K$.
This is because, since the level sets of linear functions $f$ are half-planes,  {\em any} ellipsoid centered at $\theta$ has at least half of its volume in the sublevel set $\{z \in K: f(z) \leq f(\theta) \} \subseteq \{z \in K: f(z) < f(\theta) + 2 \}$, satisfying Property \ref{property_ellipsoid}.

The following lemma shows that our barrier function in \eqref{eq_soft_threshold_barrier} is $\nu$-self concordant with $\nu = O(\nu' + \hat{\eta} R^2)$, where $\nu'$ is the self-concordance parameters of $\varphi$ (which is $\nu'= m$ if we choose $\varphi$ to be log-barrier function).
Thus, our barrier function in \eqref{eq_soft_threshold_barrier} is $\nu = O(\nu' + L^2R^2)$ self-concordant in the setting where $f$ is $L$-Lipschitz,  and $\nu = O(\nu' + \beta R^2)$ in the setting where $f$ is $\beta$-smooth:

\begin{lemma}\label{lemma_nu}
Suppose that $\phi(x)$ is a $\nu'$-self concordant barrier function for a convex body $K \subset \mathbb{R}^d$ where $B(0,r) \subseteq K \subseteq B(0,R)$ for some $R>r>0$.
Let $g(x) = \phi(x) + \frac{\alpha}{2} x^\top x$ for some $\alpha>0$.
Then $g$ is $\nu$-self concordant for $\nu = 4 \nu' + 4\alpha R^2$.
\end{lemma}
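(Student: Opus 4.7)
The plan is to verify the three parts of Definition \ref{def_barrier} for $g$. Parts (1) and (2) are essentially immediate from the corresponding properties of $\phi$ combined with the observation that the quadratic term $\frac{\alpha}{2}x^\top x$ has identically zero third derivative and positive semidefinite Hessian $\alpha I_d$. The real work is in Part (3), where we must establish the self-concordance parameter bound $\nu = 4\nu' + 4\alpha R^2$.

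For Parts (1) and (2): Convexity of $g$ follows from convexity of $\phi$ plus convexity of $\frac{\alpha}{2}x^\top x$, and $g(x) \to +\infty$ as $x \to \partial K$ since $\phi$ has this property. For self-concordance, we use $\nabla^3 g(x)[h,h,h] = \nabla^3 \phi(x)[h,h,h]$ (the quadratic has vanishing third derivative), so
\[
\nabla^3 g(x)[h,h,h] \leq 2 (\nabla^2 \phi(x)[h,h])^{3/2} \leq 2(\nabla^2 g(x)[h,h])^{3/2},
\]
using the self-concordance of $\phi$ for the first inequality and $\nabla^2 g(x) = \nabla^2\phi(x) + \alpha I_d \succeq \nabla^2\phi(x)$ for the second.

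For Part (3), the main step, I would start by decomposing $\nabla g(x) = \nabla \phi(x) + \alpha x$ and applying the elementary bound $(a+b)^2 \leq 2a^2 + 2b^2$ to obtain
\[
(h^\top \nabla g(x))^2 \leq 2 (h^\top \nabla \phi(x))^2 + 2\alpha^2 (h^\top x)^2.
\]
The first term is handled directly by $\nu'$-self-concordance of $\phi$ and the inequality $\nabla^2 \phi(x) \preceq \nabla^2 g(x)$:
\[
2(h^\top \nabla \phi(x))^2 \leq 2\nu'\, h^\top \nabla^2 \phi(x)h \leq 2\nu'\, h^\top \nabla^2 g(x)h.
\]

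The step requiring the most care is bounding the second term $2\alpha^2(h^\top x)^2$; this is where the radius $R$ enters. By Cauchy--Schwarz and $K \subseteq B(0,R)$, one has $(h^\top x)^2 \leq \|x\|^2 \|h\|^2 \leq R^2 \|h\|^2$. Now using the complementary lower bound $\nabla^2 g(x) \succeq \alpha I_d$, we get $\|h\|^2 = h^\top I_d h \leq \alpha^{-1} h^\top \nabla^2 g(x) h$, so
\[
2\alpha^2(h^\top x)^2 \leq 2\alpha^2 R^2 \|h\|^2 \leq 2\alpha R^2 \, h^\top \nabla^2 g(x) h.
\]
Combining the two bounds yields $(h^\top \nabla g(x))^2 \leq (2\nu' + 2\alpha R^2)\, h^\top \nabla^2 g(x) h$, which is in fact even sharper than the claimed constant $4\nu' + 4\alpha R^2$; taking square roots gives the required $\nu$-self-concordance inequality. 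The key conceptual point is that the added regularizer $\frac{\alpha}{2}\|x\|^2$ simultaneously helps (by ensuring the Hessian is uniformly bounded below by $\alpha I_d$, which absorbs the Euclidean norm of $h$) and hurts (by contributing the $\alpha x$ term to the gradient, whose size is controlled only by the diameter $R$); these two effects combine to give the $\alpha R^2$ contribution to the self-concordance parameter.
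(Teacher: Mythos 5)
Your proof is correct and follows essentially the same route as the paper's: decompose $\nabla g(x) = \nabla\phi(x) + \alpha x$, bound the cross-term via Cauchy--Schwarz and $K \subseteq B(0,R)$, invoke $\nu'$-self-concordance of $\phi$, and absorb the remainders using $\nabla^2 g \succeq \nabla^2\phi$ and $\nabla^2 g \succeq \alpha I_d$. The only difference is that you apply the tight inequality $(a+b)^2 \leq 2a^2 + 2b^2$ where the paper uses the looser $(a+b)^2 \leq 4a^2 + 4b^2$ and then merges the two summands via $ac_1 + bc_2 \leq (a+b)(c_1+c_2)$; your version yields the slightly sharper $\nu = 2\nu' + 2\alpha R^2$, which of course still implies the stated bound.
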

The proof of Lemma \ref{lemma_nu} is given in Appendix \ref{sec_lemma_nu}. 
The polynomial dependence of $\nu$ on $LR$ or $\beta R^2$ is a necessary feature of any objective function satisfying Definition \ref{def_barrier} and Property \ref{property_ellipsoid}.
In appendix \ref{appendix_lower_bounds_nu}, we give explicit examples of classes of objective functions $f$ and polytopes $K$ for which the minimum value of $\nu$ depends polynomially on $LR$ (and classes of smooth functions $f$ where $\nu$ depends polynomially on $\sqrt{\beta}R$.

An open problem is whether one can design versions of the Dikin walk which sample from log-concave distributions with runtime that depends only on the dimension $d$, and is independent of $L$, $R$ or $\beta$ and which are invariant to affine transformations.
The difficulty in achieving this lies in the fact that (e.g., in the setting where $f$ is $L$-Lipschitz), on the one hand, the level sets of $f$ where most of the probability mass of $\propto e^{-f}$ concentrates may have a diameter roughly $RL$ times smaller than the diameter $R$ of $K$.
Thus, to have an $\Omega(1)$ acceptance probability the Dikin walk may need to take steps that  are roughly $RL$ times smaller than the diameter of $K$.
On the other hand, the isoperimetric inequality currently used to bound the mixing time of the Dikin walk uses a metric-- the cross-ratio distance for $K$--which, roughly speaking, defines distances between steps by how quickly these steps approach the boundary of $K$.
Thus, measured in the cross-ratio distance, the steps proposed by the Dikin walk are of size roughly proportional to $\frac{1}{RL}$, and mixing time bounds obtained with this isoperimetric inequality thus depend polynomially on $RL$.
To obtain mixing time bounds independent of the $R, L, \beta$ one may need to show a new isoperimetric inequality which is based on a different metric that encodes the geometry of all of the level sets of $f$--rather than just the geometry of its support $K$.

\section{Algorithm}\label{sec_Algorithms}

\begin{algorithm}[h]
\caption{Soft-threshold Dikin walk} \label{alg_Soft_Dikin_Walk}
\KwIn{$m,d \in \mathbb{N}$}
\KwIn{$A \in \mathbb{R}^{m \times d}$, $b \in \mathbb{R}^m$, which define the polytope $K := \{\theta \in \mathbb{R}^d: A \theta \leq b\}.$}

\KwIn{Oracle which returns the value of a convex function $f: K \rightarrow \mathbb{R}$.}
\KwIn{An initial point $\theta_0 \in \mathrm{Int}(K)$.}
\medskip
 \KwOut{A point $\theta$.}

\medskip
  
\KwHyperparameters{ $\alpha>0$, $\eta>0$, and $T\in \mathbb{N}$.}

\medskip

Set $\theta \leftarrow \theta_0$

\For{$i = 1, \ldots, T$}{

Sample a point $\xi \sim N(0,I_d)$ \label{Line_Gaussian}

Set $H(\theta) \leftarrow \sum_{j=1}^{m} \frac{a_j a_j^\top}{(a_j^\top \theta - b_j)^2}$ \label{line_Hessian_1}

Set $\Phi(\theta) \leftarrow \alpha^{-1} H(\theta) +  \eta^{-1}I_d$ \label{line_Phi_1}

Set $z \leftarrow \theta + \Phi(\theta)^{-\frac{1}{2}} \xi$ \label{line_proposed_update}

\If{$z \in \mathrm{Int}(K)$}{\label{Line_membership}

Set $H(z) \leftarrow \sum_{j=1}^{m} \frac{a_j a_j^\top}{(a_j^\top z - b_j)^2}$  \label{line_Hessian_2}

Set $\Phi(z) \leftarrow \alpha^{-1} H(z) +  \eta^{-1}I_d$ \label{line_Phi_2}

Accept $\theta \leftarrow z$ with probability $\frac{1}{2} \times \min \left (\frac{e^{-f(z)} \sqrt{\mathrm{det}(\Phi(z))}}
{e^{-f(\theta)} \sqrt{\mathrm{det}(\Phi(\theta))}} \times e^{\|z- \theta\|_{\Phi(\theta)}^2 - \|\theta- z\|_{\Phi(z)}^2}, \, \, 1 \right) $
} \label{Line_accept_reject}

\Else{Reject $z$}
}
Output $\theta$

\end{algorithm}

\noindent 
In Theorem \ref{thm_soft_threshold_Dikin}, we set the step size hyperparameters 
$$\alpha = \frac{1}{10^5 d} \mbox{  and  } \eta = \frac{1}{10^4 d  L^2}$$ if $f$ is $L$-Lipschitz, and the number of steps to be $$T= 10^9 \left( 2m \alpha^{-1} + \eta^{-1} R^{2} \right) \times \log(\frac{w}{\delta}).$$
  To obtain the bounds 
  when $f$ is $\beta$-smooth (but not necessarily Lipschitz), we instead set 
  $$\alpha = \frac{1}{10^5 d} \mbox{  and  } \eta = \frac{1}{10^4 d  \beta}.$$

\section{Proof of Theorem \ref{thm_soft_threshold_Dikin}}\label{sec_mainproof}

\subsection{Bounding the number of arithmetic operations}

In the following, we assume the hyperparameters $\alpha, \eta$ satisfy $\alpha \leq \frac{1}{10^5 d}$, and either $\eta \leq \frac{1}{10^4 d  L^2}$ (in the setting where $f$ is $L$-Lipschitz) or $\eta \leq \frac{1}{10^4 d  \beta}$ (in the setting where $f$ is $\beta$-smooth).

\begin{lemma} \label{Lemma_operation_count}
Each iteration of Algorithm \ref{alg_Soft_Dikin_Walk} can be implemented in $ O(md^{\omega-1})$ arithmetic operations plus $O(1)$ calls to the oracle for the value of $f$. 
\end{lemma}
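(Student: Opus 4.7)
The plan is to walk through the lines of Algorithm \ref{alg_Soft_Dikin_Walk} one by one, bound the cost of each, and verify that the bottleneck is the construction of the log-barrier Hessian, which costs $O(md^{\omega-1})$ arithmetic operations. Everything else will either cost at most $O(d^\omega)$ (which is absorbed into $md^{\omega-1}$ since any bounded polytope in $\R^d$ requires $m \geq d+1$ inequalities) or will be charged to one of the $O(1)$ function evaluations.

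First I would handle the sampling and membership steps. Line \ref{Line_Gaussian} produces $\xi \sim N(0,I_d)$ in $O(d)$ operations, and the membership test on Line \ref{Line_membership} amounts to verifying $Az < b$ entrywise, which costs $O(md)$ since $A \in \mathbb{R}^{m \times d}$. Next I would analyze the two Hessian computations on Lines \ref{line_Hessian_1} and \ref{line_Hessian_2}. The key observation, already used in the technical overview, is that
\[
H(w) \;=\; C(w) C(w)^\top, \qquad \text{where } C(w)\in \R^{d\times m} \text{ has columns } c_j(w) = \frac{a_j}{a_j^\top w - b_j}.
\]
Building $C(w)$ column by column requires $m$ inner products of length $d$ plus $m$ scalar divisions, i.e.\ $O(md)$ work. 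Multiplying the $d \times m$ matrix $C(w)$ by its $m \times d$ transpose via a fast rectangular matrix multiplication routine costs $O(md^{\omega-1})$; this is the single dominant term in the iteration.

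Then I would handle the linear-algebra manipulations on Lines \ref{line_Phi_1}--\ref{line_proposed_update} and \ref{line_Phi_2}--\ref{Line_accept_reject}. Forming $\Phi(\theta) = \alpha^{-1} H(\theta) + \eta^{-1} I_d$ and $\Phi(z)$ from the corresponding Hessians requires $O(d^2)$ operations each. To sample $z = \theta + \Phi(\theta)^{-1/2}\xi$ I would compute the symmetric eigendecomposition (or SVD) of $\Phi(\theta)$ in $O(d^\omega)$ time, then apply $\Phi(\theta)^{-1/2}$ to $\xi$ in $O(d^2)$ time. The same eigendecompositions for $\Phi(\theta)$ and $\Phi(z)$ simultaneously give $\sqrt{\det \Phi(\theta)}$ and $\sqrt{\det \Phi(z)}$ as the product of singular values in $O(d)$ additional work. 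Finally, the two quadratic forms $\|z-\theta\|_{\Phi(\theta)}^2$ and $\|z-\theta\|_{\Phi(z)}^2$ each take $O(d^2)$ operations, and the two evaluations $f(\theta), f(z)$ contribute $O(1)$ calls to the function oracle, after which the acceptance ratio on Line \ref{Line_accept_reject} is a constant-time comparison against a scalar.

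Summing the contributions yields $O(d) + O(md) + O(md^{\omega-1}) + O(d^\omega) + O(d^2)$ arithmetic operations per iteration, together with $O(1)$ oracle calls. Since $K$ is a bounded polytope in $\R^d$ we have $m \geq d+1$, so $d^\omega \leq md^{\omega-1}$, and since $\omega \geq 2$ we also have $md \leq md^{\omega-1}$. Hence the total per-iteration cost collapses to $O(md^{\omega-1})$, establishing the lemma. The only mildly non-routine part is recognizing the $CC^\top$ factorization to get the fast matrix-multiplication exponent into the Hessian assembly; otherwise this is a straightforward accounting argument.
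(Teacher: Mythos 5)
Your proof is correct and follows essentially the same line-by-line accounting as the paper's: the $H(w) = C(w)C(w)^\top$ factorization to invoke fast rectangular matrix multiplication, SVDs for $\Phi^{-1/2}$ and $\det\Phi$, and absorption of the $O(d^\omega)$ terms into $O(md^{\omega-1})$. The only addition is your explicit note that a bounded polytope forces $m \geq d+1$, which the paper uses implicitly.
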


\begin{proof}
We go through each step of Algorithm \ref{alg_Soft_Dikin_Walk} and add up the number of arithmetic operations and oracle calls for each step:

\begin{enumerate}
 \item Line \ref{Line_Gaussian} samples a $d$-dimensional Gaussian random vector $\xi \sim N(0, I_d)$, which can be performed in $O(d)$ arithmetic operations.
 
    \item 
At each iteration, Algorithm \ref{alg_Soft_Dikin_Walk} computes the matrix $H(w)$ at $w=\theta$ (line \ref{line_Hessian_1}) and $w=z$ (line \ref{line_Hessian_2}):
\begin{align*}
    H(w) = \sum_{j=1}^{m} \frac{a_j a_j^\top}{(a_j^\top w - b_j)^2}.
\end{align*}
Computing $H(w)$ at any $w\in \mathbb{R}^d$ can be accomplished in $m d^{\omega -1}$ operations as follows:

Define $C(w)$ to be the $d \times  m$ matrix where each column $c_j(w) = \frac{a_j}{a_j^\top w - b_j}$ for all $j \in [m]$.
Then
\begin{equation*}
    H(w) = C(w) C(w)^\top.
\end{equation*} 
Since $C(w)$ is a $d \times m$ matrix the product $C(w) C(w)^\top$ can be computed in $m d^{\omega -1}$ arithmetic operations.
Thus, Lines \ref{line_Hessian_1} and \ref{line_Hessian_2} of  Algorithm \ref{alg_Soft_Dikin_Walk} can each be computed in $m d^{\omega -1}$ arithmetic operations.

\item Since Lines \ref{line_Phi_1} and \ref{line_Phi_2}  compute a sum of two $d \times d$ matrices, Lines \ref{line_Phi_1} and \ref{line_Phi_2} can each be performed in $d^2$ arithmetic operations.

\item 
Line \ref{line_proposed_update} computes the proposed update
\begin{equation*}
    z = \theta + \Phi(\theta)^{-\frac{1}{2}} \xi.
\end{equation*}
    Computing $\Phi(\theta)^{-\frac{1}{2}}$ can be performed by taking the singular value decomposition of the matrix $\Phi(\theta)$, and then inverting and taking the square root of its eigenvalues.
    This can be accomplished in $d^\omega$ arithmetic operations since $\Phi(\theta)$ is a $d \times d$ matrix.
    Once  $\Phi(\theta)^{-\frac{1}{2}}$ is computed, the computation $\theta +  \Phi(\theta)^{-\frac{1}{2}} \xi$ can be performed in $d^2$ arithmetic operations.
    Thus Line \ref{line_proposed_update} can be computed in $O(d^\omega) \leq O(m d^{\omega-1})$ arithmetic operations.
    
    \item Line \ref{Line_membership} requires determining whether $z \in K$. 
    This can be accomplished in $O(md)$ arithmetic operations, by checking whether the inequality $Az \leq b$ is satisfied.
    
    \item Line \ref{Line_accept_reject}  requires computing the determinant $\mathrm{det}(\Phi(w))$ and $f(w)$ for $w = \theta$ and $w = z$. 
    Computing $\mathrm{det}(\Phi(w))$ can be accomplished by computing the singular value decomposition of $\mathrm{det}(\Phi(w))$ and then taking the product of the resulting singular values to compute the determinant.
    Since $\Phi(w)$ is a $d \times d$ matrix, computing the singular value decomposition can be done in $d^\omega$ arithmetic operations.
    Computing $f(w)$ for any $w \in \mathbb{R}^d$ can be accomplished in one call to the oracle for the value of $f$.
    Thus, Line \ref{Line_accept_reject} can be computed in $O(d^\omega) \leq O(m d ^{\omega -1})$ arithmetic operations, plus two calls to the oracle for the value of $f$.
\end{enumerate}
Therefore, adding up the number of arithmetic operations and oracle calls from all the different steps of Algorithm \ref{alg_Soft_Dikin_Walk}, we get that each iteration of Algorithm \ref{alg_Soft_Dikin_Walk} can be computed in $ O(md^{\omega-1})$ arithmetic operations plus $O(1)$ calls to the oracle for the value of $f$.

\end{proof}

\subsection{Bounding the step size}

\begin{definition}[Cross-ratio distance]

Let $u,v \in \mathrm{Int}(K)$.  If $u \neq v$, let $p,q$ be the endpoints of the chord in $K$ which passes through $u$ and $v$ such that the four points lie in the order $p,u,v,q$. 
Define 
\begin{equation*}
\sigma(u,v) :=  \frac{\|u-v\|_2 \times \|p-q\|_2}{\|p-u\|_2 \times \|v-q\|_2}
\end{equation*}
if $u \neq v$, and $\sigma(u,v) = 0$ if $u=v$.
\end{definition}
For convenience, we define the cross-ratio distance between any two subsets $S_1, S_2 \subseteq K$ as
\begin{equation*}
    \sigma(S_1, S_2) = \min\{\sigma(u, v) : u \in S_1, v \in S_2\}.
\end{equation*}
And for any $S \subseteq \mathbb{R}^d$ and any density $\nu: \mathbb{R}^d \rightarrow \mathbb{R}$ we define the induced measure:
\begin{equation*}
    \nu^\star(S) = \int_{z \in S} \nu(z) \mathrm{d}z.
\end{equation*}

\begin{definition}
For any positive-definite matrix $M \in \mathbb{R}^d \times \mathbb{R}^d$, and any $u \in \mathbb{R}^d$, we define
\begin{equation*}
     \|u\|_{M} := \sqrt{ u^\top Mu}.
\end{equation*}
\end{definition}

\begin{lemma}\label{lemma_cross_ratio}
For any $u,v \in \mathrm{Int}(K)$, we have
\begin{equation*}
    \sigma(u,v) \geq \frac{1}{\sqrt{2m \alpha^{-1} + \eta^{-1} R^{2}}} \|u-v\|_{\Phi(u)}.
\end{equation*}
\end{lemma}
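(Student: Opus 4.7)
The plan is to derive two independent lower bounds for $\sigma^2(u,v)$---one coming from the defining inequalities of $K$ and one from the containment $K \subseteq B(0,R)$---and then combine them via a convex combination whose weights are tuned to produce exactly $\|u-v\|^2_{\Phi(u)}$ in the numerator.

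For the first (polytope) bound, I would parametrize the chord by writing $d := v-u$ and the endpoints as $p = u - t_1 d$, $q = u + t_2 d$ with $t_1 > 0$, $t_2 > 1$. A direct computation gives $\sigma(u,v) = \frac{t_1+t_2}{t_1(t_2-1)}$, which dominates both $\frac{1}{t_1}$ and $\frac{1}{t_2-1} \geq \frac{1}{t_2}$. Since some inequality $a_j^\top x \leq b_j$ becomes tight at $p$ (with $a_j^\top d < 0$) and at $q$ (with $a_j^\top d > 0$),
\[
\frac{1}{t_1^2} = \max_{j:\, a_j^\top d < 0} \frac{(a_j^\top d)^2}{(a_j^\top u - b_j)^2}, \qquad \frac{1}{t_2^2} = \max_{j:\, a_j^\top d > 0} \frac{(a_j^\top d)^2}{(a_j^\top u - b_j)^2}.
\]
Combining $\sigma^2 \geq \frac{1}{2}\left(\frac{1}{t_1^2}+\frac{1}{t_2^2}\right)$ (since $\sigma^2$ dominates each summand individually) with the bound that each maximum is at least the average over the corresponding index subset produces the standard polytope lower bound $\sigma^2(u,v) \geq \frac{1}{2m}\sum_{j=1}^m \frac{(a_j^\top(u-v))^2}{(a_j^\top u - b_j)^2}$.

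For the second (ball) bound, since $p,q \in K \subseteq B(0,R)$ we have $\|p-u\|_2 + \|v-q\|_2 \leq \|p-q\|_2 \leq 2R$, so $\min(\|p-u\|_2, \|v-q\|_2) \leq R$. Combined with $\|p-q\|_2 \geq \max(\|p-u\|_2, \|v-q\|_2)$ by collinearity, cancelling one matched pair of factors in $\sigma(u,v) = \frac{\|u-v\|_2 \|p-q\|_2}{\|p-u\|_2\|v-q\|_2}$ yields $\sigma(u,v) \geq \frac{\|u-v\|_2}{\min(\|p-u\|_2, \|v-q\|_2)} \geq \frac{\|u-v\|_2}{R}$, and hence $\sigma^2(u,v) \geq \frac{\|u-v\|_2^2}{R^2}$.

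Writing the two established bounds as $X_1$ and $X_2$, I would then form the convex combination $\sigma^2(u,v) \geq \lambda X_1 + (1-\lambda) X_2$ with $\lambda := \frac{2m\alpha^{-1}}{2m\alpha^{-1}+\eta^{-1}R^2}$. This choice makes the $2m$ in the denominator of $X_1$ cancel against the numerator of $\lambda$, and likewise the $R^2$ in the denominator of $X_2$ cancel against the numerator of $1-\lambda$, so the right-hand side collapses to
\[
\frac{\alpha^{-1}\sum_j \frac{(a_j^\top(u-v))^2}{(a_j^\top u - b_j)^2} + \eta^{-1}\|u-v\|_2^2}{2m\alpha^{-1}+\eta^{-1}R^2} = \frac{\|u-v\|^2_{\Phi(u)}}{2m\alpha^{-1}+\eta^{-1}R^2},
\]
and taking square roots yields the lemma. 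The main obstacle is less in any one step---each being a short calculation---than in spotting this precise weighting, which is what aligns the two heterogeneous bounds with the two summands $\alpha^{-1}H(u)$ and $\eta^{-1}I_d$ composing $\Phi(u)$; a naive additive combination of $X_1$ and $X_2$ would leave a parameter-dependent slack that fails to reproduce the clean constant $2m\alpha^{-1}+\eta^{-1}R^2$ appearing in the statement.
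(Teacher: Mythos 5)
Your proof is correct and follows essentially the same route as the paper: both derive a polytope lower bound $\sigma^{2}(u,v)\geq \frac{1}{2m}\sum_{j}\frac{(a_j^\top (u-v))^2}{(a_j^\top u - b_j)^2}$ via the max-dominates-average step, a ball lower bound involving $\|u-v\|_2^2/R^2$ from $K\subseteq B(0,R)$, and then merge them so that the numerator assembles into $\|u-v\|_{\Phi(u)}^2$. The one meaningful difference is the merge: the paper fixes a $\tfrac12$--$\tfrac12$ split of $\max\left(\max_j \cdot,\ \|u-v\|_2^2/\|p-q\|_2^2\right)$ and then applies $\min(a^{-1},b^{-1})\geq (a+b)^{-1}$ to the scalar prefactors, whereas you pick the convex-combination weight $\lambda = \frac{2m\alpha^{-1}}{2m\alpha^{-1}+\eta^{-1}R^2}$ so that the two denominators factor out in one step. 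Together with your slightly tighter ball argument (reaching $\sigma^2\geq \|u-v\|_2^2/R^2$ via $\min(\|p-u\|_2,\|v-q\|_2)\leq R$, rather than asserting $\|p-q\|_2\leq R$, which from $K\subseteq B(0,R)$ only holds up to a factor of $2$), this reproduces exactly the constant $\frac{1}{\sqrt{2m\alpha^{-1}+\eta^{-1}R^2}}$ in the statement; the paper's own displayed chain in fact ends with $2m\alpha^{-1}+2\eta^{-1}R^2$ in the denominator, a factor-of-$\sqrt{2}$ weaker than what the lemma claims, so your derivation is marginally sharper and closes that small constant gap.
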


\begin{proof}
Let $p,q$ be the endpoints of the chord in $K$ which passes through $u$ and $v$ such that the four points lie in the order $p,u,v,q$.
Then
\begin{align*}
    \sigma^2(u,v) &= \left(\frac{\|u-v\|_2 \times \|p-q\|_2}{\|p-u\|_2 \times \|v-q\|_2}\right)^2\\
    &\geq \max \left(\frac{\|u-v\|_2^2}{\|p-u\|_2^2}, \, \, \frac{\|u-v\|_2^2}{\|u-q\|_2^2}, \, \, \frac{\|u-v\|_2^2}{\|p-q\|_2^2}\right)\\
    &= \max \left( \max_{i \in [m]} \frac{(a_i^\top(u-v))^2}{(a_i^\top u-b_i)^2}, \, \, \,  \frac{\|u-v\|_2^2}{\|p-q\|_2^2} \right)\\
    &\geq \frac{1}{2} \max_{i\in [m]} \frac{(a_i^\top(u-v))^2}{(a_i^\top u-b_i)^2} + \frac{1}{2} \frac{\|u-v\|_2^2}{\|p-q\|_2^2}\\
        &\geq \left(\frac{1}{2m} \sum_{i=1}^m \frac{(a_i^\top(u-v))^2}{(a_i^\top u-b_i)^2} \right) + \frac{1}{2} \frac{\|u-v\|_2^2}{R^2}\\
                &= (u-v)^\top \left(\frac{1}{2m \alpha^{-1}} \times \alpha^{-1} \sum_{i=1}^m \frac{(a_i a_i^\top)^2}{(a_i^\top u-b_i)^2} + \frac{1}{2 R^2 \eta^{-1}} \times \eta^{-1} I_d   \right) (u-v) \\
                &\geq \frac{1}{2m \alpha^{-1} + 2 \eta^{-1} R^2} (u-v)^{\top} \Phi(u) (u-v)\\
                &
                = \frac{1}{2m \alpha^{-1} + 2 \eta^{-1} R^2}\|u-v\|_{\Phi(u)}^2.
\end{align*}
Thus, we have
\begin{equation*}
    \sigma(u,v) \geq \frac{1}{\sqrt{2m \alpha^{-1} + \eta^{-1} R^{2}}} \|u-v\|_{\Phi(u)}.
\end{equation*}
\end{proof}

\begin{lemma} \label{lemma_PD}
For any $u,v \in \mathrm{Int}(K)$ such that $\|u-v\|_{\Phi(u)} \leq \frac{1}{2\alpha^{\nicefrac{1}{2}}}$ we have that
\begin{equation*}
(1- \alpha^{\nicefrac{1}{2}} \|u-v\|_{\Phi(u)})^2 \Phi(v) \preceq \Phi(u)  \preceq (1 + \alpha^{\nicefrac{1}{2}} \|u-v\|_{\Phi(u)})^2 \Phi(v).
\end{equation*}
\end{lemma}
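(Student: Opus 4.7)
\begin{proofof}{plan for Lemma \ref{lemma_PD}}
The plan is to first establish the analogous sandwich bound for the log-barrier Hessian $H$ alone, and then note that the soft-threshold term $\eta^{-1} I_d$ comes along for free because $(1-t)^2 \leq 1 \leq (1+t)^2$ for any $t \in [0,1/2]$.

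Set $t := \alpha^{1/2} \|u-v\|_{\Phi(u)}$, so $t \leq 1/2$ by hypothesis. For each $j \in [m]$, write $s_j(w) := b_j - a_j^\top w$, which is positive on $\mathrm{Int}(K)$. Define $\epsilon_j := a_j^\top(v-u)/s_j(u)$, so that $s_j(v) = s_j(u)(1-\epsilon_j)$. I would first observe that, by the definition of $\Phi(u) = \alpha^{-1}H(u) + \eta^{-1}I_d$,
\begin{equation*}
\alpha^{-1}\sum_{j=1}^{m} \frac{(a_j^\top(v-u))^2}{s_j(u)^2} \leq (v-u)^\top \Phi(u) (v-u) = \|u-v\|_{\Phi(u)}^2,
\end{equation*}
so that in particular, for each single index $j$, $\alpha^{-1}\epsilon_j^2 \leq \|u-v\|_{\Phi(u)}^2$, and hence $|\epsilon_j| \leq t \leq 1/2$. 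This is the only place the hypothesis $\|u-v\|_{\Phi(u)} \leq \frac{1}{2\alpha^{1/2}}$ is used.

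Next I would convert this bound into a spectral sandwich for $H$. Since $s_j(v)/s_j(u) = 1 - \epsilon_j \in [1-t, 1+t]$, we get $(1+t)^{-2} \leq s_j(u)^2/s_j(v)^2 \leq (1-t)^{-2}$, and multiplying the rank-one PSD matrix $a_j a_j^\top/s_j(u)^2$ by this scalar and summing over $j$ gives
\begin{equation*}
(1+t)^{-2} H(u) \preceq H(v) \preceq (1-t)^{-2} H(u),
\end{equation*}
which rearranges to $(1-t)^2 H(v) \preceq H(u) \preceq (1+t)^2 H(v)$.

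Finally, I add the identity term back. Since $(1+t)^2 \geq 1$,
\begin{equation*}
\Phi(u) = \alpha^{-1}H(u) + \eta^{-1}I_d \preceq \alpha^{-1}(1+t)^2 H(v) + (1+t)^2 \eta^{-1} I_d = (1+t)^2 \Phi(v),
\end{equation*}
and since $(1-t)^2 \leq 1$,
\begin{equation*}
\Phi(u) \succeq \alpha^{-1}(1-t)^2 H(v) + (1-t)^2 \eta^{-1} I_d = (1-t)^2 \Phi(v),
\end{equation*}
which is the claimed inequality after substituting $t = \alpha^{1/2}\|u-v\|_{\Phi(u)}$. The only mildly delicate point is remembering that the $\Phi$-norm--rather than the $H$-norm--governs $|\epsilon_j|$, which introduces the $\alpha^{1/2}$ factor in the definition of $t$; everything else is routine scalar manipulation.
\end{proofof}
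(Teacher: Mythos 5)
Your proof is correct and follows essentially the same approach as the paper: both establish the componentwise slack bound $|a_j^\top(u-v)| \le \alpha^{1/2}\|u-v\|_{\Phi(u)}\,|a_j^\top u - b_j|$ from the definition of the $\Phi(u)$-norm, turn it into a sandwich for the log-barrier Hessian $H$, and absorb the regularizer $\eta^{-1}I_d$ using $(1-t)^2 \le 1 \le (1+t)^2$. You spell out the last step explicitly, whereas the paper leaves it implicit in passing from its middle inequality (which keeps $\eta^{-1}I_d$ unscaled) to the final statement, but there is no substantive difference.
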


\begin{proof}
\begin{align*}
    \|u-v\|_{\Phi(u)}^2 &= \alpha^{-1} \sum_{i=1}^m \frac{(a_i^\top(u-v))^2}{(a_i^\top u-b_i)^2}  + \eta^{-1}\|u-v\|^2\\
    &\geq \alpha^{-1} \max_{i \in [m]} \frac{(a_i^\top(u-v))^2}{(a_i^\top u-b_i)^2}.
\end{align*}
Thus,
\begin{equation*}
    |(a_i^\top u - b_i) - (a_i^\top v - b_i)| \leq  \alpha^{\nicefrac{1}{2}}  \|u-v\|_{\Phi(u)} |a_i^\top u - b_i| \qquad \forall i \in [m].
\end{equation*}
Therefore, for all $w \in \mathbb{R}^d$ we have
\begin{align*}
    &w^\top \left[(1- \alpha^{\nicefrac{1}{2}}  \|u-v\|_{\Phi(u)})^2 \alpha^{-1} \sum_{i=1}^m \frac{a_i a_i^\top}{(a_i^\top v-b_i)^2} + \eta^{-1} I_d \right] w\\
    & \leq  w^\top \left[ \alpha^{-1} \sum_{i=1}^m \frac{a_i a_i^\top}{(a_i^\top u-b_i)^2} + \eta^{-1} I_d \right] w\\
    &\leq w^\top \left[(1+ \alpha^{\nicefrac{1}{2}}  \|u-v\|_{\Phi(u)})^2 \alpha^{-1} \sum_{i=1}^m \frac{a_i a_i^\top}{(a_i^\top v-b_i)^2} + \eta^{-1} I_d \right] w.
\end{align*}
Thus,

\begin{equation*}
    (1- \alpha^{\nicefrac{1}{2}}  \|u-v\|_{\Phi(u)})^2 \Phi(v) \preceq \Phi(u) \preceq (1+ \alpha^{\nicefrac{1}{2}}  \|u-v\|_{\Phi(u)})^2 \Phi(v).
\end{equation*}
\end{proof}

\subsection{Bounding the acceptance probability}

\begin{lemma} \label{lemma_acceptance_ratio}
Let $\theta \in \mathrm{Int}(K)$. Then the acceptance ratio  satisfies
\begin{equation*}
\mathbb{P}_{z \sim N(\theta, \Phi^{-1}(\theta))} \left( \frac{\pi(z) \sqrt{\mathrm{det}(\Phi(z))}}
{\pi(\theta) \sqrt{\mathrm{det}(\Phi(\theta))}} \times \exp\left(\|z- \theta\|_{\Phi(\theta)}^2 - \|\theta- z\|_{\Phi(z)}^2\right) \times \mathbbm{1}\{z\in K\} \geq \frac{3}{10} \right) \geq \frac{1}{3}.
\end{equation*}
\end{lemma}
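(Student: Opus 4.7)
The plan is to show that each of the four factors in the product is not too small with sufficient probability, and then combine via a union bound. Concretely, I will argue that with probability at least $\tfrac{1}{3}$ the following four events occur simultaneously: (i) $z\in K$, (ii) $\|z-\theta\|_{\Phi(\theta)}^2 \le \tfrac{1}{4\alpha}$ (so in particular $\|z-\theta\|_{\Phi(\theta)} \le \tfrac{1}{2\alpha^{1/2}}$, enabling Lemma \ref{lemma_PD}), (iii) the determinantal term $\sqrt{\mathrm{det}(\Phi(z))/\mathrm{det}(\Phi(\theta))}\ge c_1$ for a universal constant $c_1$, and (iv) $e^{f(\theta)-f(z)}\ge c_2$ for a universal constant $c_2$. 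Each holds with probability at least $1-\tfrac{1}{12}$ (for the relevant choice of constants in the hyperparameters $\alpha,\eta$), so by a union bound they all hold with probability at least $\tfrac{2}{3}\ge\tfrac{1}{3}$, and the constants can be tracked to give the product at least $\tfrac{3}{10}$.

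For (i), I would use that $\Phi^{-1}(\theta)\preceq \alpha H^{-1}(\theta)$, so $z-\theta = \Phi^{-1/2}(\theta)\xi$ with $\xi\sim N(0,I_d)$ satisfies $\|z-\theta\|_{H(\theta)}^2 \le \alpha\,\xi^\top \xi$. Standard Gaussian concentration gives $\|\xi\|_2^2\le O(d)$ with overwhelming probability, and since $\alpha=\Theta(1/d)$ this places $z$ well inside the Dikin ellipsoid $E(\theta)\subseteq K$. For (ii), $\|z-\theta\|_{\Phi(\theta)}^2 = \|\xi\|_2^2$ is $\chi^2_d$-distributed, so again Gaussian tail bounds yield $\|\xi\|_2^2 \le \frac{1}{4\alpha}=\Omega(d)$ with high probability. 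For (iv), in addition to $\Phi^{-1}(\theta)\preceq \alpha H^{-1}(\theta)$ we use $\Phi^{-1}(\theta)\preceq \eta I_d$ (since $\Phi(\theta)\succeq \eta^{-1}I_d$), which gives $\|z-\theta\|_2^2\le \eta\|\xi\|_2^2 \le O(\eta d) = O(1/L^2)$ with high probability. Lipschitzness of $f$ then yields $|f(z)-f(\theta)|\le L\|z-\theta\|_2 = O(1)$, so $e^{f(\theta)-f(z)}=\Omega(1)$ (in the $\beta$-smooth case, the analogous bound $|f(z)-f(\theta)-\nabla f(\theta)^\top(z-\theta)|\le \tfrac{\beta}{2}\|z-\theta\|_2^2$ is used together with $\|\nabla f(\theta)\|_2\|z-\theta\|_2$ being an expectation-zero Gaussian bounded by a constant w.h.p.). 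Finally, once event (ii) holds, Lemma \ref{lemma_PD} yields $\tfrac{1}{4}\Phi(\theta)\preceq \Phi(z)\preceq 4\Phi(\theta)$, which immediately gives $\|\theta-z\|_{\Phi(z)}^2\le 4\|z-\theta\|_{\Phi(\theta)}^2$, so the exponential factor $\exp(\|z-\theta\|_{\Phi(\theta)}^2 - \|\theta-z\|_{\Phi(z)}^2)$ is controlled; moreover, with $\|\xi\|_2^2$ concentrated around $d$ and $\alpha^{1/2}\|\xi\|_2=O(1)$ small, the two quadratic form terms are in fact within $O(1)$ of each other, so this factor is $\Omega(1)$.

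The main obstacle is event (iii): bounding the determinantal ratio $\mathrm{det}(\Phi(z))/\mathrm{det}(\Phi(\theta))$ from below. As discussed in the technical overview, $\Phi(\theta)=\alpha^{-1}H(\theta)+\eta^{-1}I_d$ is not itself the Hessian of a log-barrier for any polytope, so one cannot directly apply Inequality \eqref{eq_t4} together with the standard Gaussian-concentration argument of \cite{kannan2012random, sachdeva2016mixing}. My plan is to invoke the approximation described in the overview: define $H_j(\theta)$ as the log-barrier Hessian for $K$ expressed as $\{A^j\theta\le b^j\}$ (by adjoining $\lfloor\alpha\eta^{-1}j^2\rfloor$ copies of the box $\|\theta\|_\infty\le j$), so that \eqref{eq_t8} gives $H_j \to \alpha \Phi$ pointwise as $j\to\infty$. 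For each finite $j$, inequality \eqref{eq_t4} applies to $V_j(\theta)=\log\det H_j(\theta)$, yielding $(\nabla V_j)^\top H_j^{-1}(\theta)\nabla V_j\le O(d)$, and the standard argument (choosing $\alpha=\Theta(1/d)$) shows that $\mathrm{det}(H_j(z_j))/\mathrm{det}(H_j(\theta))=\Omega(1)$ with high probability for $z_j=\theta+\alpha^{1/2}H_j^{-1/2}(\theta)\xi$. Passing to the limit $j\to\infty$ via Lemma \ref{lemma_limits} (the $\lim$ identity \eqref{eq_t7}) transfers this to $\mathrm{det}(\Phi(z))/\mathrm{det}(\Phi(\theta))=\Omega(1)$ with high probability, which is the content needed. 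Care must be taken because the probability bound needs to hold uniformly along the sequence; I would obtain a uniform-in-$j$ bound by noting that the constants in the Gaussian concentration and in \eqref{eq_t4} do not depend on $j$, and then taking the limit of the events.

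Combining the four high-probability events by a union bound, the product inside the probability in the lemma is at least $c_1\cdot c_2\cdot \Omega(1)\ge \tfrac{3}{10}$ for suitable constants, and the probability of all four occurring simultaneously is at least $\tfrac{1}{3}$. The key quantitative lemmas feeding into steps (iii) and (iv) are presumably stated separately in the paper as Lemmas \ref{lemma_det} and \ref{lemma_density_ratio}, respectively, so at the level of this proof I would simply invoke them, with the argument above being the skeleton that organizes their outputs.
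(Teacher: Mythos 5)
Your proposal has the same high-level structure as the paper's proof: decompose the acceptance ratio into the four factors (membership in $K$, the determinantal ratio, the exponential cross-term, and the density ratio $e^{f(\theta)-f(z)}$), control each one via Lemmas \ref{lemma_remain_in_ellipsoid}, \ref{lemma_det}, and \ref{lemma_density_ratio}, and union-bound. You also correctly identify that the hard part is the determinantal ratio and that it must be handled through the approximating log-barriers $H_j$ of Lemma \ref{lemma_limits} rather than by directly applying \eqref{eq_t4} to $\Phi$.

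Two points need correction. First, in the $\beta$-smooth case your reasoning for step (iv) is wrong: you claim $\|\nabla f(\theta)\|_2\,\|z-\theta\|_2$ is ``bounded by a constant w.h.p.,'' but there is no bound on $\|\nabla f(\theta)\|_2$ for a convex function that is merely $\beta$-smooth (not Lipschitz), so $\nabla f(\theta)^\top(z-\theta)$ can have arbitrarily large variance. The paper instead uses a symmetry argument: since $z-\theta$ is a centered Gaussian, $\mathbb{P}\bigl(\nabla f(\theta)^\top(z-\theta)\le 0\bigr)=\tfrac12$ exactly, and on that event the smoothness bound $f(z)-f(\theta)\le\beta\|z-\theta\|_2^2$ gives an $\Omega(1)$ ratio. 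Second, as a consequence your probability bookkeeping (``each fails with probability at most $1/12$, so all hold with probability $\ge 2/3$'') is too optimistic in the smooth case, where the density-ratio event fails with probability close to $1/2$. The paper's count is $1-\tfrac{6}{10}-\tfrac{2}{100}-\tfrac{2}{100}-\tfrac{1}{100}\ge\tfrac13$, which is exactly the stated bound; your $\ge\tfrac23$ would only hold in the Lipschitz case. These are local fixes that do not change the skeleton of the argument, but the stated $\tfrac13$ in the lemma reflects the weaker smooth-case guarantee.
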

\begin{proof}
By \eqref{eq_f13} of Lemma \ref{lemma_det}, we have
\begin{equation}\label{eq_f13b}
\mathbb{P}_{z \sim N(\theta, \Phi^{-1}(\theta))}\left( \|z - \theta\|_{\Phi(\theta)}^2 - \|z - \theta\|_{\Phi(z)}^2
\geq -\frac{2}{50} \right )\geq \frac{98}{100}.
\end{equation}
By  Lemmas \ref{lemma_density_ratio}, \ref{lemma_det}, and \ref{lemma_remain_in_ellipsoid} we have that
\begin{align*}
&\mathbb{P}_{z \sim N(\theta, \Phi^{-1}(\theta))} \left( \frac{\pi(z) \sqrt{\mathrm{det}(\Phi(z))}}{\pi(\theta) \sqrt{\mathrm{det}(\Phi(\theta))}} \times \exp\left(\|z- \theta\|_{\Phi(\theta)}^2 - \|\theta- z\|_{\Phi(z)}^2\right)  \times \mathbbm{1}\{z\in K\} \geq \frac{3}{10} \right)\\
&\geq \mathbb{P}_{z \sim N(\theta, \Phi^{-1}(\theta))} \bigg( \left\{ \frac{\pi(z)}{\pi(\theta)} \geq \frac{1}{2} \right \} \cap \left\{ \frac{\sqrt{\mathrm{det}(\Phi(z))}}{ \sqrt{\mathrm{det}(\Phi(\theta))}} \geq \frac{48}{50} \right \}\\
&\qquad \qquad \qquad \qquad \qquad \qquad \cap \left\{\exp\left(\|z- \theta\|_{\Phi(\theta)}^2 - \|\theta- z\|_{\Phi(z)}^2\right) \geq 0.96\right\} \cap \{ z \in K \} \bigg)\\
&= 1 - \mathbb{P}_{z \sim N(\theta, \Phi^{-1}(\theta))} \bigg( \left\{ \frac{\pi(z)}{\pi(\theta)} \geq \frac{1}{2} \right \}^c \cup \left\{ \frac{\sqrt{\mathrm{det}(\Phi(z))}}{ \sqrt{\mathrm{det}(\Phi(\theta))}} \geq \frac{48}{50} \right \}^c\\
&\qquad \qquad \qquad \qquad \qquad \qquad \cup \left\{\|z- \theta\|_{\Phi(\theta)}^2 - \|\theta- z\|_{\Phi(z)}^2 \geq - \frac{2}{50} \right\}^c \cup \{ z \in K \}^c \bigg)\\
&\geq 1 - \mathbb{P}_{z \sim N(\theta, \Phi^{-1}(\theta))} \left( \left\{ \frac{\pi(z)}{\pi(\theta)} \geq \frac{1}{2} \right\}^c\right) - \mathbb{P}_{z \sim N(\theta, \Phi^{-1}(\theta))}\left(\left\{ \frac{\sqrt{\mathrm{det}(\Phi(z))}}{ \sqrt{\mathrm{det}(\Phi(\theta))}} \geq \frac{48}{50} \right \}^c\right)\\
&\qquad \qquad- \mathbb{P}_{z \sim N(\theta, \Phi^{-1}(\theta))}\left(\|z- \theta\|_{\Phi(\theta)}^2 - \|\theta- z\|_{\Phi(z)}^2 < - \frac{2}{50}\right) - \mathbb{P}_{z \sim N(\theta, \Phi^{-1}(\theta))}\left( \{ z \in K \}^c \right)\\
&\stackrel{\textrm{Lemmas }\ref{lemma_density_ratio}, \ref{lemma_det}, \ref{lemma_remain_in_ellipsoid}, \textrm{ Eq. } \ref{eq_f13b}}{\geq}  1- \frac{6}{10}- \frac{2}{100} - \frac{2}{100} - \frac{1}{100}\\
&\geq \frac{1}{3}.
\end{align*}
\end{proof}

\begin{lemma} \label{lemma_density_ratio}
Let $\theta \in \mathrm{int}(K)$. Then
\begin{equation*}
\mathbb{P}_{z \sim N(\theta, \Phi^{-1}(\theta))}\left( \frac{\pi(z)}
{\pi(\theta)} \geq \frac{99}{100} \right) \geq \frac{99}{100}.
\end{equation*}
\end{lemma}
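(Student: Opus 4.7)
The plan is to reduce the claim to a concentration statement about the Euclidean displacement $\|z-\theta\|_2$. Since $\pi(z)/\pi(\theta) = e^{f(\theta)-f(z)}$, it suffices to show that $f(z) - f(\theta) \leq \log(100/99)$ holds with probability at least $99/100$. Because $f$ is either $L$-Lipschitz or $\beta$-smooth, this difference is controlled by $\|z-\theta\|_2$, so the core of the argument will be to establish that the Gaussian proposal $z = \theta + \Phi^{-1/2}(\theta)\xi$ with $\xi \sim N(0,I_d)$ produces only small Euclidean displacements with high probability.

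The key observation I would exploit is that the soft-threshold regularizer makes $\Phi^{-1}(\theta)$ uniformly bounded in operator norm. Explicitly, since $H(\theta) = \sum_{j=1}^m \frac{a_j a_j^\top}{(a_j^\top\theta - b_j)^2} \succeq 0$, we have $\Phi(\theta) = \alpha^{-1} H(\theta) + \eta^{-1} I_d \succeq \eta^{-1} I_d$, and hence $\Phi^{-1}(\theta) \preceq \eta\, I_d$ in the Loewner order. Consequently,
\begin{equation*}
\|z-\theta\|_2^2 \;=\; \xi^\top \Phi^{-1}(\theta)\, \xi \;\leq\; \eta\,\|\xi\|_2^2.
\end{equation*}
A standard chi-squared concentration bound will then show that $\|\xi\|_2^2$ is at most a constant multiple of $d$ with probability at least $99/100$, so $\|z-\theta\|_2 = O(\sqrt{\eta d})$ with the same probability.

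In the Lipschitz case I would then invoke $|f(z)-f(\theta)| \leq L\|z-\theta\|_2 \leq L\sqrt{\eta}\,\|\xi\|_2$, and substituting $\eta \leq \frac{1}{10^4 d L^2}$ makes the right-hand side at most $\log(100/99)$ on the high-probability event, from which exponentiation gives the claim. In the $\beta$-smooth case the descent lemma $f(z)-f(\theta) \leq \nabla f(\theta)^\top(z-\theta) + \frac{\beta}{2}\|z-\theta\|_2^2$ handles the quadratic term by the same Euclidean bound with $\eta \leq \frac{1}{10^4 d \beta}$; the linear piece $\nabla f(\theta)^\top \Phi^{-1/2}(\theta)\xi$ is a scalar Gaussian whose variance must be shown to be small. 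I expect the main obstacle to be pinning down the constants so that the chi-squared tail bound meets the precise $99/100$ threshold---a crude Markov bound alone delivers only $\|\xi\|_2 \leq 10\sqrt{d}$ at confidence $99/100$, which is why the constants in $\alpha$ and $\eta$ are chosen so generously, and a sharper Laurent--Massart-type inequality will be needed to make the numerics close cleanly.
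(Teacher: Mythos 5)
Your Lipschitz-case argument matches the paper exactly: the soft-threshold term forces $\Phi^{-1}(\theta)\preceq\eta I_d$, hence $\|z-\theta\|_2\leq\sqrt{\eta}\,\|\xi\|_2$, and a $\chi$-norm tail bound (the paper uses Hanson--Wright, giving $\mathbb{P}(\|\xi\|_2>\sqrt{40d})\leq e^{-29d/8}<1/100$) combined with $\eta\leq\frac{1}{10^4 d L^2}$ gives $\|z-\theta\|_2\leq\log(100/99)/L$ with probability at least $99/100$. Your worry that you might need a ``sharper Laurent--Massart-type inequality'' is unfounded: the constant in $\eta$ is exactly what makes the crude $\|\xi\|_2\lesssim\sqrt{d}$ bound suffice, and no finer concentration is required.

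The genuine gap is in your plan for the $\beta$-smooth case. You write that ``the linear piece $\nabla f(\theta)^\top\Phi^{-1/2}(\theta)\xi$ is a scalar Gaussian whose variance must be shown to be small,'' but this is false in general and cannot be salvaged. The variance of that scalar Gaussian is $\nabla f(\theta)^\top\Phi^{-1}(\theta)\nabla f(\theta)\leq\eta\,\|\nabla f(\theta)\|_2^2$, and for a function that is $\beta$-smooth but not $L$-Lipschitz the gradient norm is uncontrolled (it can be as large as $\beta R$ on a domain of radius $R$), so with $\eta=\Theta(\frac{1}{d\beta})$ the variance can be of order $\frac{\beta R^2}{d}$, which is not small in the regime $\beta R^2\gg d$ that the theorem is designed to handle. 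The paper avoids this entirely by a symmetry argument rather than a variance bound: since $z-\theta$ is a centered Gaussian, the event $\{(z-\theta)^\top\nabla f(\theta)\leq 0\}$ has probability exactly $\tfrac12$, and \emph{conditional on that event} the linear term is nonpositive and can be dropped, leaving $f(z)-f(\theta)\leq\beta\|z-\theta\|_2^2$, which your Euclidean displacement bound then controls. (Note that as a consequence the paper's smooth-case argument actually only yields a probability of about $\tfrac12-\tfrac1{100}$, not the $\tfrac{99}{100}$ advertised in the lemma statement; what is used downstream in Lemma \ref{lemma_acceptance_ratio} is only an $\Omega(1)$ bound, so the overall result is unaffected, but the constant in this lemma as stated does not come out of either your approach or the paper's for the smooth case.) You need to replace the variance-bounding idea with this sign-conditioning idea — bounding the linear piece's magnitude is the wrong move.
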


\begin{proof}
Since  $z \sim N(\theta, \Phi(\theta)^{-1})$, we have that %
\begin{equation*}
    z = \theta + \Phi(\theta)^{-\frac{1}{2}}\xi = \theta + (\alpha^{-1} H(\theta) +  \eta^{-1}I_d)^{-\frac{1}{2}}
\end{equation*}
for some  $\xi \sim N(0,I_d)$.

Since $\alpha^{-1} H(\theta) +  \eta^{-1}I_d \succeq \eta^{-1}I_d$, and $H(\theta)$ and $I_d$ are both positive definite,  we have that
\begin{equation}\label{eq_g1}
    \eta I_d \succeq (\alpha^{-1} H(\theta) +  \eta^{-1}I_d)^{-1}. 
\end{equation}
Thus, 
\begin{align} \label{eq_g2}
   \|z-\theta\|_2 &= \|(\alpha^{-1} H(\theta) +  \eta^{-1}I_d)^{-\frac{1}{2}} \xi\|_2 \nonumber\\
   &= \sqrt{\xi^\top \left(   \alpha^{-1} H(\theta) +  \eta^{-1}I_d    \right)^{-1} \xi} \nonumber\\
   &\stackrel{\textrm{Eq. }\eqref{eq_g1}}{\leq}  \sqrt{\xi^\top \eta I_d \xi} \nonumber\\
   &= \sqrt{\eta} \|\xi\|_2.
\end{align}
Now, since $\xi \sim N(0,I_d)$, by the Hanson-wright concentration inequality for the $\chi$-distribution \cite{hanson1971bound}, we have that
\begin{equation}\label{eq_g3}
    \mathbb{P}(\|\xi\|_2 > t) \leq e^{-\frac{t^2-d}{8}} \qquad \forall t > \sqrt{2d}.
\end{equation}
Thus, Equations \eqref{eq_g2} and \eqref{eq_g3} together imply that
\begin{equation} \label{eq_g4}
    \mathbb{P}(\|z-\theta\|_{2} > \sqrt{\eta}\sqrt{40d}) \leq e^{-\frac{29d}{8}} < \frac{1}{100}.
\end{equation}

\noindent
Now, in the setting where $f$ is $L$-Lipschitz, we have
\begin{equation*}
    \frac{\pi(z)}{\pi(\theta)} = e^{-(f(z)-f(\theta))} \leq e^{-L\|z-\theta\|_2}.
\end{equation*}
Therefore,

\begin{eqnarray*}
\mathbb{P}_{z \sim N(\theta, \Phi^{-1}(\theta))}\left( \frac{\pi(z)}{\pi(\theta)} \geq \frac{99}{100} \right)
&\geq &\mathbb{P}_{z \sim N(\theta, \Phi^{-1}(\theta))}\left( e^{-L\|z-\theta\|_2} \geq \frac{99}{100} \right) \\
& = &\mathbb{P}_{z \sim N(\theta, \Phi^{-1}(\theta))}\left( \|z-\theta\|_2 \leq \frac{\log(\frac{100}{99})}{L} \right) \\
&\geq  &\frac{99}{100}
\end{eqnarray*}
where the last inequality holds by Inequality \eqref{eq_g4}, since $\eta \leq \frac{1}{10^4 d  L^2}$.

Moreover, in the setting where $f$ is differentiable and  $\beta$-Lipschitz, we have that, since $z-\theta$ is a multivariate Gaussian random variable, 
\begin{equation*}
\mathbb{P}((z-\theta)^\top \nabla f(\theta) \leq 0) = \frac{1}{2}. 
\end{equation*}
If $(z-\theta)^\top \nabla f(\theta) \leq 0$, we have that
\begin{align*}
    f(z) - f(\theta) &\leq (z-\theta)^\top \nabla f(\theta) + \beta \|z- \theta\|_2^2\\
    &\leq \beta \|z- \theta\|_2^2.
\end{align*}
Therefore,

\begin{align*}
&\mathbb{P}_{z \sim N(\theta, \Phi^{-1}(\theta))}\left( \frac{\pi(z)}{\pi(\theta)} \geq \frac{99}{100} \right)\\
&\geq \mathbb{P}_{z \sim N(\theta, \Phi^{-1}(\theta))}\left(\left\{ \frac{\pi(z)}{\pi(\theta)} \geq \frac{99}{100} \right\} \cap \left\{ (z-\theta)^\top \nabla f(\theta) \leq 0 \right\} \right) -  \mathbb{P}_{z \sim N(\theta, \Phi^{-1}(\theta))}\left((z-\theta)^\top \nabla f(\theta) > 0 \right)\\
&\geq \mathbb{P}_{z \sim N(\theta, \Phi^{-1}(\theta))}\left( e^{-\beta \|z- \theta\|_2^2} \geq \frac{99}{100} \right) - \frac{1}{2}\\
&= \mathbb{P}_{z \sim N(\theta, \Phi^{-1}(\theta))}\left( \|z-\theta\|_2 \leq \frac{\sqrt{\log(2)}}{\sqrt{\beta}} \right)\\
&\geq \frac{99}{10} - \frac{1}{2},
\end{align*}
where the last Inequality holds by Inequality \eqref{eq_g4}, since $\eta \leq \frac{1}{10^4 d  \beta}$.

\end{proof}

\begin{lemma} \label{Lemma_TV}

For any $\theta,z \in \mathrm{Int}(K)$ such that   $\|\theta-z\|_{\Phi(\theta)} \leq \frac{1}{4\alpha^{\nicefrac{1}{2}} }$, we have that

\begin{equation*}
\| N(\theta, \Phi^{-1}(\theta)) - N(z, \Phi^{-1}(z)) \|_{\mathrm{TV}}^2
\leq
3d \alpha \|\theta-z\|_{\Phi(\theta)}^2 +  \frac{1}{2}\|\theta-z\|_{\Phi(\theta)}^2
\end{equation*}

\end{lemma}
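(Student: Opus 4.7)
The plan is to reduce the total variation bound to a KL-divergence bound via Pinsker's inequality, $\|P - Q\|_{\mathrm{TV}}^2 \le \tfrac{1}{2}\mathrm{KL}(P\|Q)$ with $P = N(\theta,\Phi(\theta)^{-1})$ and $Q = N(z,\Phi(z)^{-1})$, and then to expand the KL using the standard closed form for multivariate Gaussians:
\[
2\,\mathrm{KL}(P\|Q) \;=\; \mathrm{tr}\bigl(\Phi(z)\Phi(\theta)^{-1}\bigr) - d + \log\det\bigl(\Phi(\theta)\Phi(z)^{-1}\bigr) + \|\theta - z\|_{\Phi(z)}^2 .
\]
Setting $A := \Phi(z)^{1/2}\Phi(\theta)^{-1}\Phi(z)^{1/2}$ with eigenvalues $\lambda_1,\dots,\lambda_d$, the first three terms collapse to the nonnegative quantity $\sum_{i=1}^d (\lambda_i - 1 - \log\lambda_i)$.

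To control the spectrum of $A$, I would set $\gamma := \alpha^{1/2}\|\theta-z\|_{\Phi(\theta)} \le 1/4$ (the latter being the hypothesis of the lemma), which in particular ensures that the hypothesis of Lemma \ref{lemma_PD} is satisfied. The resulting inequality $(1-\gamma)^2\Phi(z) \preceq \Phi(\theta) \preceq (1+\gamma)^2\Phi(z)$, after inverting and conjugating by $\Phi(z)^{1/2}$, places every $\lambda_i$ in the interval $[(1+\gamma)^{-2}, (1-\gamma)^{-2}]$. The scalar function $\lambda \mapsto \lambda - 1 - \log\lambda$ is nonnegative and convex on $(0,\infty)$, so its maximum on this interval is attained at one of the two endpoints. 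A short calculation using the Taylor expansions of $1/(1\mp\gamma)^2$ and $2\log(1\mp\gamma)$ shows that both endpoint values equal $2\gamma^2 + O(\gamma^3)$, and an explicit verification at $\gamma = 1/4$ upgrades this to a clean uniform bound $\lambda - 1 - \log\lambda \le 6\gamma^2$ on the full interval. Summing over the $d$ eigenvalues yields $\sum_i (\lambda_i - 1 - \log\lambda_i) \le 6\, d\gamma^2 = 6\, d\alpha\|\theta-z\|_{\Phi(\theta)}^2$.

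For the mean-shift term, the PSD bound $\Phi(z) \preceq (1-\gamma)^{-2}\Phi(\theta)$ from Lemma \ref{lemma_PD} gives $\|\theta - z\|_{\Phi(z)}^2 \le (1-\gamma)^{-2}\|\theta-z\|_{\Phi(\theta)}^2 \le \tfrac{16}{9}\|\theta - z\|_{\Phi(\theta)}^2$. Combining the two bounds with the overall prefactor $\tfrac{1}{4}$ (i.e.\ $\tfrac{1}{2}$ from Pinsker times $\tfrac{1}{2}$ from the $2\,\mathrm{KL}$ identity) produces
\[
\|P - Q\|_{\mathrm{TV}}^2 \;\le\; \tfrac{3}{2}\,d\alpha\|\theta-z\|_{\Phi(\theta)}^2 + \tfrac{4}{9}\|\theta-z\|_{\Phi(\theta)}^2 ,
\]
which is strictly stronger than the claimed $3 d\alpha \|\theta-z\|_{\Phi(\theta)}^2 + \tfrac{1}{2}\|\theta-z\|_{\Phi(\theta)}^2$.

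The main obstacle is really only the elementary scalar inequality $\lambda - 1 - \log\lambda \le 6\gamma^2$ on $[(1+\gamma)^{-2}, (1-\gamma)^{-2}]$ with an explicit constant; convexity reduces it to two endpoint checks, so the work is straightforward though slightly tedious. Everything else is a routine manipulation of Pinsker's inequality, the Gaussian KL identity, and the spectral bounds already provided by Lemma \ref{lemma_PD}.
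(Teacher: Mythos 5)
Your proof is correct and follows essentially the same route as the paper: Pinsker's inequality, the closed-form expression for the KL divergence between two Gaussians, the spectral two-sided bound from Lemma \ref{lemma_PD}, and an elementary scalar inequality (your $\lambda - 1 - \log\lambda \le 6\gamma^2$ on the endpoints plays the role of the paper's $(1+t)^2 + (1+t)^{-2} - 2 \le 6t^2$ for $|t|\le\tfrac14$). You apply Pinsker with the sharper constant and swap the order of arguments in the KL, which is why your mean-shift term carries $\Phi(z)$ and needs the extra factor $(1-\gamma)^{-2}$, but after that you end up with a constant that is in fact slightly stronger than the one claimed; the paper is a bit loose in its Pinsker step and lands on the stated $3d\alpha + \tfrac12$.
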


\noindent
The proof of Lemma \ref{Lemma_TV} is an adaptation of the proof Lemma 3 in \cite{sachdeva2016mixing} to our Markov chain's ``soft'' barrier function and follows roughly along the lines of that proof.
\begin{proof}
Since  $\|\theta-z\|_{\Phi(\theta)} \leq \frac{1}{4\alpha^{\nicefrac{1}{2}} }$, by Lemma \ref{lemma_PD} we have that
\begin{equation}  \label{eq_e2}
(1- \alpha^{\nicefrac{1}{2}}  \|\theta-z\|_{\Phi(\theta)})^2 \Phi(z) \preceq \Phi(\theta)  \preceq (1+ \alpha^{\nicefrac{1}{2}}  \|\theta-z\|_{\Phi(\theta)})^2 \Phi(z).
 \end{equation}    
 Therefore, since the product of any two positive definite matrices is also a positive definite matrix, Inequality \eqref{eq_e2} implies that
 \begin{equation*}
(1- \alpha^{\nicefrac{1}{2}}  \|\theta-z\|_{\Phi(\theta)})^2 I_d \preceq \Phi(z)^{-1} \Phi(\theta)  \preceq (1 + \alpha^{\nicefrac{1}{2}}  \|\theta-z\|_{\Phi(\theta)})^2 I_d.
 \end{equation*}    
Thus, denoting by $\lambda_i(M)$ the $i$th-largest eigenvalue of any matrix $M\in \mathbb{R}^d \times \mathbb{R}^d$, we have by Inequality \eqref{eq_e2} that
\begin{equation}\label{eq_e3}
(1- \alpha^{\nicefrac{1}{2}}  \|\theta-z\|_{\Phi(\theta)})^2 \leq \lambda_i(\Phi(z)^{-1}\Phi(\theta))  \leq (1 + \alpha^{\nicefrac{1}{2}}  \|\theta-z\|_{\Phi(\theta)})^2 \qquad \forall i \in [d].
\end{equation}    
Now the KL-divergence between any to multivariate Gaussian distributions with any means $\mu_1,\mu_2 \in \mathbb{R}^d$ and any covariance matrices $\Sigma_1,\Sigma_2 \in \mathbb{R}^d$ satisfies (see e.g. Section 9 of \cite{duchi2007derivations} or Fact 5 in \cite{sachdeva2016mixing})
\begin{equation} \label{eq_e18}
    D_{\mathrm{KL}}\left(N(\mu_1, \Sigma), N(\mu_2, \Sigma) \right)
    = \frac{1}{2} \left(\mathrm{Tr}(\Sigma_1^{-1}\Sigma_2)  -d + \log\left(\frac{\mathrm{det}(\Sigma_1)}{\mathrm{det}(\Sigma_2}\right) + (\mu_1-\mu_2)^\top\Sigma_1^{-1} (\mu_1-\mu_2)  \right).
\end{equation}
Therefore we have that
\begin{align*}
    &\| N(\theta, \Phi^{-1}(\theta)) - N(z, \Phi^{-1}(z) \|_{\mathrm{TV}}^2 \stackrel{\textrm{Pinsker's Inequality}}{\leq}  2D_{\mathrm{KL}}\left(N(\theta, \Phi^{-1}(\theta)), N(z, \Phi^{-1}(z)) \right)\\
    &\stackrel{\textrm{Eq. }\eqref{eq_e18}}{=} \frac{1}{2} \left(\mathrm{Tr}(\Phi(\theta) \Phi^{-1}(z))  -d + \log\left(\frac{\mathrm{det}(\Phi^{-1}(\theta))}{\mathrm{det}(\Phi^{-1}(z))}\right) + (\theta-z)^\top\Phi(\theta) (\theta-z)  \right)\\
    &= \frac{1}{2} \sum_{i=1}^{d}\lambda_i(\Phi(\theta) \Phi^{-1}(z))  - \frac{1}{2}d +  \frac{1}{2}\log\left(\frac{1}{\mathrm{det}(\Phi(\theta)\Phi^{-1}(z))}\right) +  \frac{1}{2}\|\theta-z\|_{\Phi(\theta)}^2\\
    &= \frac{1}{2}\sum_{i=1}^{d}\left( \lambda_i(\Phi(\theta) \Phi^{-1}(z)) - 1 + \log\left(\frac{1}{\lambda_i(\Phi(\theta) \Phi^{-1}(z)}\right) \right) +  \frac{1}{2}\|\theta-z\|_{\Phi(\theta)}^2\\
    &\leq \frac{1}{2}\sum_{i=1}^{d}\left( \lambda_i(\Phi(\theta) \Phi^{-1}(z)) + \frac{1}{\lambda_i(\Phi(\theta) \Phi^{-1}(z))} -2 \right) +  \frac{1}{2}\|\theta-z\|_{\Phi(\theta)}^2\\
         &\stackrel{\textrm{Eq. }\eqref{eq_e3}}{\leq} \frac{1}{2}\sum_{i=1}^{d}\left( \max_{t \in \left[\left(1- \alpha^{\nicefrac{1}{2}}  \|\theta-z\|_{\Phi(\theta)}\right)^2, \, \, \left(1+ \alpha^{\nicefrac{1}{2}}  \|\theta-z\|_{\Phi(\theta)}\right)^2\right]} t + \frac{1}{t} -2 \right) + \frac{1}{2}\|\theta-z\|_{\Phi(\theta)}^2\\
                  &= \frac{1}{2}\sum_{i=1}^{d}\left( \max_{t \in \left[- \alpha^{\nicefrac{1}{2}}  \|\theta-z\|_{\Phi(\theta)},\, \, \, \alpha^{\nicefrac{1}{2}}  \|\theta-z\|_{\Phi(\theta)}\right]} (1+t)^2 + \frac{1}{(1+t)^2} -2 \right) +  \frac{1}{2}\|\theta-z\|_{\Phi(\theta)}^2\\
                                    &\leq \frac{1}{2}\sum_{i=1}^{d}\left( \max_{t \in \left[- \alpha^{\nicefrac{1}{2}}  \|\theta-z\|_{\Phi(\theta)},\, \, \, \alpha^{\nicefrac{1}{2}}  \|\theta-z\|_{\Phi(\theta)}\right]} 6t^2 \right) +  \frac{1}{2}\|\theta-z\|_{\Phi(\theta)}^2\\
                &\stackrel{\textrm{convexity }}{\leq}  \frac{1}{2}\left( \sum_{i=1}^{d} 6 \alpha \|\theta-z\|_{\Phi(\theta)}^2 \right) +  \frac{1}{2}\|\theta-z\|_{\Phi(\theta)}^2\\
                &= 3d \alpha \|\theta-z\|_{\Phi(\theta)}^2 +  \frac{1}{2}\|\theta-z\|_{\Phi(\theta)}^2,
\end{align*}
where the first inequality is Pinsker's inequality, the second inequality holds because $\log(\frac{1}{t}) \leq \frac{1}{t}-1$ for all t>0, the fourth inequality holds because $(1+t)^2 + \frac{1}{(1+t)^2} -2 \leq t^2$ for all $t\in[-\frac{1}{4},\frac{1}{4}]$, and the fifth inequality holds since $t^2$ is convex for $t\in \mathbb{R}$.

\end{proof}

\noindent
For every $j \in \mathbb{N}$, let $m_j  = m + \lfloor \alpha \eta^{-1} j^2 \rfloor d$.
Consider the matrices $A^j = [A^\top, I_d, \ldots, I_d]^\top$ where $A$ is concatenated with $\frac{m_j -1}{d}$ copies of the identity matrix $I_d$.
And consider the vectors $b^j = (b^\top, j \textbf{1}^\top, \ldots, j \textbf{1}^\top)$, where $b$ is concatenated with $\frac{m_j -1}{d}$ copies of the vector $j \textbf{1}$,  \, \, where $\textbf{1}= (1,\ldots,1)^\top \in \mathbb{R}^d$ is the all-ones vector.
Then $K = \{\theta \in \mathbb{R}^d: A^j \theta \leq b^j\}$, and the Hessian of the corresponding log-barrier functions is
\begin{equation*}
    H_j(w) :=\sum_{i=1}^{m^j} \frac{a^j_i (a^j_i)^\top}{((a^j_i)^\top w -b^j_i)^2}.
\end{equation*}

\begin{lemma} \label{lemma_limits}
For all $w \in \mathrm{int}(K)$ we have that
\begin{equation} \label{eq_L1}
    \lim_{j\rightarrow \infty} H_j(w) = \alpha \Phi(w),
\end{equation}
uniformly in $w$, and that
\begin{equation} \label{eq_L2}
    \lim_{j\rightarrow \infty} (H_j(w))^{-1} = \alpha^{-1} (\Phi(w))^{-1},
\end{equation}
    uniformly in $w$.
Moreover, for any $\theta\in \mathrm{int}(K)$, any $z \in \frac{1}{2} D_{\theta}$ and any sequence $\{z_j\}_{j=1}^\infty \subseteq \frac{1}{2} D_{\theta}$ such that $\lim_{j\rightarrow \infty} z_j = z$ uniformly in $z$,  we have that
\begin{equation} \label{eq_L3}
    \lim_{j\rightarrow \infty} H_j(z_j) = \alpha \Phi(z),
\end{equation}
uniformly in $z$, and that
\begin{equation} \label{eq_L4}
    \lim_{j\rightarrow \infty}     \frac{\mathrm{det}(H_j(z_j))}{\mathrm{det}(H_j(\theta))} = \frac{\mathrm{det}(\Phi(z))}{\mathrm{det}(\Phi(\theta))},
\end{equation}
uniformly in $z$.
\end{lemma}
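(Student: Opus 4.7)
The plan is to exploit the explicit decomposition derived from \eqref{eq_t8}, which gives
\begin{equation*}
H_j(w) - \alpha\,\Phi(w) \;=\; \sum_{i=1}^{d} \left[ \frac{\lfloor \alpha \eta^{-1} j^2 \rfloor}{(w_i - j)^2} \;-\; \alpha \eta^{-1} \right] e_i e_i^\top,
\end{equation*}
using the identity $\alpha \Phi(w) = H(w) + \alpha \eta^{-1} I_d$. For every $w \in K \subseteq B(0,R)$ and every $j > 2R$, we have $(w_i - j)^2 = j^2 (1 + O(R/j))$, so each bracketed coefficient equals $\alpha \eta^{-1}\cdot O(1/j)$ uniformly in $w \in K$. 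This immediately yields $\|H_j(w) - \alpha\,\Phi(w)\|_{\mathrm{op}} = O(1/j)$ uniformly on $K$, establishing \eqref{eq_L1}.

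For \eqref{eq_L2} I would apply the resolvent identity $H_j^{-1}(w) - \alpha^{-1}\Phi^{-1}(w) = \alpha^{-1}\Phi^{-1}(w)\,[\alpha\Phi(w) - H_j(w)]\,H_j^{-1}(w)$ together with uniform operator-norm bounds on the two inverses. Since $\Phi(w) \succeq \eta^{-1} I_d$ by construction of the soft-threshold, $\|\alpha^{-1}\Phi^{-1}(w)\|_{\mathrm{op}} \leq \alpha^{-1}\eta$. Similarly, for $w \in K$ and $j > 2R$ we have $(w_i - j)^2 \leq (j+R)^2 \leq 2j^2$, so the soft-threshold contribution to $H_j$ yields $H_j(w) \succeq \tfrac{1}{2}\alpha\eta^{-1} I_d$, and hence $\|H_j^{-1}(w)\|_{\mathrm{op}} \leq 2\eta/\alpha$. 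Combining these two uniform bounds with \eqref{eq_L1} gives $\|H_j^{-1}(w) - \alpha^{-1}\Phi^{-1}(w)\|_{\mathrm{op}} = O(1/j)$ uniformly in $w \in \mathrm{int}(K)$.

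For \eqref{eq_L3} I would split via the triangle inequality as $\|H_j(z_j) - \alpha\Phi(z)\|_{\mathrm{op}} \leq \|H_j(z_j) - \alpha\Phi(z_j)\|_{\mathrm{op}} + \alpha\|\Phi(z_j) - \Phi(z)\|_{\mathrm{op}}$. The first term vanishes at rate $O(1/j)$ by \eqref{eq_L1} applied at $w = z_j$, and the second vanishes because $\Phi$ is continuous on $\mathrm{int}(K)$ and $z_j \to z$. For \eqref{eq_L4}, note that $\tfrac{1}{2}D_\theta \cup \{\theta\}$ is a compact subset of $\mathrm{int}(K)$, on which all the matrices in play (both $H_j$ and $\alpha\Phi$) stay bounded in operator norm uniformly in $j$. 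Since the determinant is a polynomial in matrix entries, \eqref{eq_L3} then promotes to $\det H_j(z_j) \to \det(\alpha\Phi(z)) = \alpha^d \det \Phi(z)$ and similarly $\det H_j(\theta) \to \alpha^d \det \Phi(\theta)$. Because $\alpha\Phi(\theta) \succeq \alpha\eta^{-1} I_d$ is strictly positive definite, $\det\Phi(\theta) > 0$, and taking the ratio yields \eqref{eq_L4}, with the convergence uniform in $z$ since all the estimates invoked depend only on $j$, $R$, $\alpha$, $\eta$, and the compact set $\tfrac{1}{2}D_\theta$.

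The calculations are elementary; the only subtle point is establishing the uniform lower bound $H_j(w) \succeq \tfrac{1}{2}\alpha\eta^{-1} I_d$ on all of $K$ needed to invert the matrices, which relies crucially on the boundedness of $K$ so that $(w_i - j)^2$ stays comparable to $j^2$ for large $j$. Everything else is routine: uniform convergence of matrices $\Rightarrow$ uniform convergence of inverses (given uniform operator-norm bounds on the inverses) $\Rightarrow$ uniform convergence of determinants on compact sets $\Rightarrow$ uniform convergence of the ratios.
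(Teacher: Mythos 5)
Your proof is correct and follows essentially the same route as the paper's: the explicit decomposition $H_j(w) = H(w) + \lfloor\alpha\eta^{-1}j^2\rfloor\sum_i e_ie_i^\top/(e_i^\top w - j)^2$, the uniform estimate $(e_i^\top w - j)^2 = j^2(1 + O(R/j))$ over $K \subseteq B(0,R)$, passage to inverses with uniform bounds, and compactness of $\frac{1}{2}D_\theta$ together with the polynomial dependence of $\det$ on the matrix entries and the lower bound $\Phi \succeq \eta^{-1}I_d$ to justify the determinant ratio. Your resolvent-identity step for \eqref{eq_L2} is a cosmetic variant of the paper's Loewner-order sandwich, with the advantage that the uniformity in $w$ is fully explicit via $\|\alpha^{-1}\Phi^{-1}\|_{\mathrm{op}} \le \alpha^{-1}\eta$ and $\|H_j^{-1}\|_{\mathrm{op}} = O(\eta/\alpha)$. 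One harmless arithmetic slip: $(j+R)^2 \le 2j^2$ actually requires $j \ge R/(\sqrt{2}-1) \approx 2.42R$, not $j > 2R$; replacing the threshold by $j \ge 3R$ (as the paper uses) fixes it without affecting anything.
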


\begin{proof}

\begin{align} \label{eq_e4b}
H_j(w) &=\sum_{i=1}^{m} \frac{a_i a_i^\top}{(a_i^\top w -b_i)^2} +   \lfloor \alpha \eta^{-1} j^2 \rfloor  \sum_{i=1}^{d} \frac{e_i e_i^\top}{(e_i^\top w -j)^2}\nonumber \\
&=H(w) +  \lfloor \alpha \eta^{-1} j^2 \rfloor  \sum_{i=1}^{d} \frac{e_i e_i^\top}{(e_i^\top w -j)^2}.
\end{align}
Now, since $w \in K \subseteq B(0,R)$, we have that
\begin{equation}\label{eq_e7}
    (R -j)^2 \leq (e_i^\top w -j)^2 \leq (-R -j)^2.
\end{equation}
Thus,
\begin{equation}\label{eq_e5}
\lim_{j \rightarrow \infty} \frac{\lfloor \alpha^{-1} \eta^{-1} j^2 \rfloor}{(e_i^\top w -j)^2} \leq \lim_{j \rightarrow \infty} \frac{\lfloor \alpha \eta^{-1} j^2 \rfloor}{(R -j)^2} =  \alpha \eta^{-1}, 
\end{equation}
and
\begin{equation}\label{eq_e6}
\lim_{j \rightarrow \infty} \frac{\lfloor \alpha \eta^{-1} j^2 \rfloor}{(e_i^\top w -j)^2} \geq \lim_{j \rightarrow \infty} \frac{\lfloor \alpha^{-1} \eta^{-1} j^2 \rfloor}{(-R -j)^2} =  \alpha^{-1} \eta^{-1}.
\end{equation}
Thus, \eqref{eq_e4b} \eqref{eq_e5}, and \eqref{eq_e6} together imply that

\begin{align} \label{eq_e4}
\lim_{j \rightarrow \infty}  H_j(w) &= H(w) +  \sum_{i=1}^{d} e_i e_i^\top \times \lim_{j \rightarrow \infty}  \frac{\lfloor \alpha \eta^{-1} j^2 \rfloor}{(e_i^\top w -j)^2}\nonumber \\
&= H(w) +  \sum_{i=1}^{d} e_i e_i^\top \times  \alpha \eta^{-1}\nonumber \\
&=  H(w) +  \alpha \eta^{-1} I_d\nonumber \\
& = \alpha \Phi(w),
\end{align}
where Inequalities \eqref{eq_e5} and \eqref{eq_e6} imply that the convergence to the limit is uniform in $w$.
This proves Equation \eqref{eq_L1}.

Moreover, since $\{z_k\}_{k=1}^\infty \subseteq \frac{1}{2} D_{\theta}$, and $D_{\theta} \subseteq K$, we also have that 
\begin{equation} \label{eq_e15}
    |a_i^\top z_j -b_i| \geq \frac{1}{2}|a_i^\top \theta -b_i|.
\end{equation}
Therefore,
\begin{align*}
\lim_{j \rightarrow \infty}  H_j(z_j)
&= H(z_j) +  \sum_{i=1}^{d} e_i e_i^\top \times \lim_{j \rightarrow \infty}  \frac{\lfloor \alpha \eta^{-1} j^2 \rfloor}{(e_i^\top z_j -j)^2}\\
&= \sum_{i=1}^{m} \frac{a_i a_i^\top}{(a_i^\top z_j -b_i)^2} +  \sum_{i=1}^{d} e_i e_i^\top \times \lim_{j \rightarrow \infty}  \frac{\lfloor \alpha \eta^{-1} j^2 \rfloor}{(e_i^\top z_j -j)^2}\\
&= \sum_{i=1}^{m} \frac{a_i a_i^\top}{(a_i^\top z_j -b_i)^2} +  \sum_{i=1}^{d} e_i e_i^\top \times  \alpha \eta^{-1}\\
&\stackrel{\textrm{Eq. }\eqref{eq_e15}}{=}  \sum_{i=1}^{m} \frac{a_i a_i^\top}{(a_i^\top z -b_i)^2} +  \sum_{i=1}^{d} e_i e_i^\top \times  \alpha \eta^{-1}\\
&=  H(z) +  \alpha \eta^{-1} I_d\\
& = \alpha \Phi(z),
\end{align*}
where the convergence of the limit in the fourth equality holds uniformly in $z$ by \eqref{eq_e15} and the fact that $\lim_{j\rightarrow \infty} z_j = z$.
Thus, we have that
\begin{equation} \label{eq_e16}
    \lim_{j \rightarrow \infty}  H_j(z_j) = \alpha \Phi(z)
\end{equation}
uniformly in $z$.
This proves Equation \eqref{eq_L3}.

Moreover, since the determinant is a polynomial in the entries of the matrix, Inequality \eqref{eq_e4} implies that
\begin{equation}\label{eq_e11}
    \lim_{j \rightarrow \infty}  \mathrm{det}(H_j(w)) = \mathrm{det}(\alpha \Phi(w)),
\end{equation}
uniformly in $w$.

By Inequality \eqref{eq_e4b} we also have that

\begin{equation*}
H(w) +  \sum_{i=1}^{d} e_i e_i^\top \times  \frac{\lfloor \alpha \eta^{-1} j^2 \rfloor}{(R -j)^2} \leq
H_j(w) \leq
H(w) +  \sum_{i=1}^{d} e_i e_i^\top \times  \frac{\lfloor \alpha \eta^{-1} j^2 \rfloor}{(-R -j)^2}
\end{equation*}
and, hence, that
\begin{equation}  \label{eq_e8}
v^\top \left[ H(w) +  \frac{\lfloor \alpha \eta^{-1} j^2 \rfloor}{(R -j)^2} I_d \right] v \leq
v^\top H_j(w) v \leq
v^\top \left[ H(w) +  \frac{\lfloor \alpha \eta^{-1} j^2 \rfloor}{(-R -j)^2} I_d \right ] v \qquad \forall v \in \mathbb{R}^d.
\end{equation}
Thus, Inequality \eqref{eq_e8} implies that

\begin{equation}  \label{eq_e9}
H(w) +  \frac{\lfloor \alpha \eta^{-1} j^2 \rfloor}{(R -j)^2} I_d \preceq
 H_j(w) \preceq
 H(w) +  \frac{\lfloor \alpha \eta^{-1} j^2 \rfloor}{(-R -j)^2} I_d \qquad \forall j \in \mathbb{N}.
\end{equation}
Thus, Inequality \eqref{eq_e9} implies that

\begin{equation}  \label{eq_e17}
\left( H(w) +  \frac{\lfloor \alpha \eta^{-1} j^2 \rfloor}{(-R -j)^2} I_d  \right)^{-1} \preceq
 (H_j(w))^{-1} \preceq
 \left(H(w) +  \frac{\lfloor \alpha \eta^{-1} j^2 \rfloor}{(R -j)^2} I_d \right)^{-1} \qquad \forall j \in \mathbb{N}.
\end{equation}
Thus, since $H(w)$ is positive definite,  Inequality \eqref{eq_e17} together with inequalities \eqref{eq_e5} and \eqref{eq_e6} imply that
\begin{eqnarray*} 
    \lim_{j \rightarrow \infty}  (H_j(w))^{-1} &= &(H(w) +  \alpha \eta^{-1} I_d)^{-1}\\
    &=&\alpha^{-1} \Phi(w)^{-1} \qquad \forall w \in \mathrm{Int}(K)
\end{eqnarray*}
uniformly in $w$.
This proves Equation \eqref{eq_L2}.

\noindent
Moreover, Inequality \eqref{eq_e9} implies that
\begin{align}  \label{eq_e10}
 \mathrm{det}(H_j(w)) &\leq
\mathrm{det}\left(H(w) +  \frac{\lfloor \alpha \eta^{-1} j^2 \rfloor}{(R -j)^2} I_d\right)\nonumber \\
&\leq \left(\lambda_{\mathrm{max}}\left(H(w) +  \frac{\lfloor \alpha \eta^{-1} j^2 \rfloor}{(R -j)^2} I_d \right)\right)^d\nonumber \\
&\leq \left(\lambda_{\mathrm{max}}(H(w)) +  \lambda_{\mathrm{max}}\left(\frac{\lfloor \alpha \eta^{-1} j^2 \rfloor}{(R -j)^2} I_d) \right)\right)^d\nonumber \\
&\leq \left(\lambda_{\mathrm{max}}(H(w)) + 3 \alpha \eta^{-1}\right)^d\nonumber \\
&= c_1 \qquad \forall j \geq 3R,
\end{align}
for some $c_1>0$ which does not depend on $j$.

Inequality \eqref{eq_e8} also implies that
\begin{align}  \label{eq_e12}
 \mathrm{det}(H_j(w)) &\geq
\mathrm{det}\left(H(w) +  \frac{\lfloor \alpha \eta^{-1} j^2 \rfloor}{(R -j)^2} I_d\right)\nonumber \\
&\geq \left(\lambda_{\mathrm{min}}\left(H(w) +  \frac{\lfloor \alpha \eta^{-1} j^2 \rfloor}{(R -j)^2} I_d \right)\right)^d\nonumber \\
&\geq \left(\max\left(\lambda_{\mathrm{min}}(H(w)), \, \,  \lambda_{\mathrm{min}}\left(\frac{\lfloor \alpha \eta^{-1} j^2 \rfloor}{(R -j)^2} I_d) \right)\right)\right)^d\nonumber \\
&\geq \left(\frac{1}{3}\alpha \eta^{-1}\right)^d \nonumber \\ &=c_2 \qquad \forall j \geq 3R,
\end{align}
for some $c_2>0$ which does not depend on either $j$ or $w$.

Thus, Inequalities \eqref{eq_e11}, \eqref{eq_e10}, and \eqref{eq_e12} together imply that for any $\theta, z \in \mathrm{int}(K)$ we have that
\begin{align*}
    \lim_{j\rightarrow \infty}     \min \left( \frac{\mathrm{det}(H_j(z))}{\mathrm{det}(H_j(\theta))}, \, \, \, 1 \right )
    &\stackrel{\textrm{Eq. }\eqref{eq_e10},\eqref{eq_e12}}{=}   \min \left(\frac{\lim_{j\rightarrow \infty} \min(\mathrm{det}(H_j(z_j)), \, \frac{1}{c_2})}{\lim_{j\rightarrow \infty} \mathrm{det}(H_j(\theta))}, \, \, \, 1 \right )\\
    &\stackrel{\textrm{Eq. }\eqref{eq_e11}, \eqref{eq_e16}}{=}  \min \left(\frac{\mathrm{det}(\alpha\Phi(z))}{\mathrm{det}(\alpha\Phi(\theta))}, \, \, \, 1 \right )\\
    &=  \min \left(\frac{\mathrm{det}(\Phi(z))}{\mathrm{det}(\Phi(\theta))}, \, \, \, 1 \right ),
\end{align*}
where the limit converges uniformly in $z$.
This proves Equation \eqref{eq_L4}.

\end{proof}

\begin{lemma} \label{lemma_remain_in_ellipsoid}
Let $\xi \sim N(0,I_d)$ and let $\theta \in \mathrm{int}(K)$.  Then with probability at least $\frac{99}{100}$ we have that
\begin{equation*}
\theta + \alpha^{\frac{1}{2}} H^{-\nicefrac{1}{2}}(\theta)\xi \in \frac{1}{2} D_\theta
\end{equation*}
and
\begin{equation*}
    \|\xi\|_2 \leq 10 \sqrt{d}.
\end{equation*}

\end{lemma}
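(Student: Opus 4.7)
\textbf{Proof plan for Lemma \ref{lemma_remain_in_ellipsoid}.} The plan is to reduce both assertions to a single Gaussian norm concentration bound for $\xi\sim N(0,I_d)$, exactly in the spirit of Inequality \eqref{eq_g4}. First, observe that the Dikin ellipsoid $D_\theta$ is the set $\{w:(w-\theta)^\top H(\theta)(w-\theta)\le 1\}$, so the point $w := \theta+\alpha^{1/2}H^{-1/2}(\theta)\xi$ lies in $\tfrac12 D_\theta$ if and only if
\begin{equation*}
(w-\theta)^\top H(\theta)(w-\theta) \;=\; \alpha\,\xi^\top H^{-1/2}(\theta)\,H(\theta)\,H^{-1/2}(\theta)\,\xi \;=\; \alpha\,\|\xi\|_2^2 \;\le\; \tfrac14.
\end{equation*}
With the hyperparameter choice $\alpha \le \frac{1}{10^5 d}$, the event $\{\|\xi\|_2\le 10\sqrt d\}$ is already sufficient, since then $\alpha\|\xi\|_2^2 \le \frac{100 d}{10^5 d} = 10^{-3} \le \tfrac14$. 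Therefore both conclusions of the lemma will follow simultaneously once I establish that $\|\xi\|_2\le 10\sqrt d$ holds with probability at least $\tfrac{99}{100}$.

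For that tail bound, I would quote the Hanson--Wright / chi-concentration inequality already invoked in the proof of Lemma \ref{lemma_density_ratio}, namely $\mathbb{P}(\|\xi\|_2>t)\le e^{-(t^2-d)/8}$ for $t>\sqrt{2d}$. Plugging in $t=10\sqrt d$ gives $\mathbb{P}(\|\xi\|_2>10\sqrt d)\le e^{-99d/8}\le \tfrac{1}{100}$, as required. Combining this with the deterministic implication above yields that with probability at least $\tfrac{99}{100}$ we have both $\|\xi\|_2\le 10\sqrt d$ and $\theta+\alpha^{1/2}H^{-1/2}(\theta)\xi\in \tfrac12 D_\theta$.

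There is no real obstacle here: the entire argument is a one-line computation reducing membership in $\tfrac12 D_\theta$ to the Euclidean norm of $\xi$, followed by a standard Gaussian tail bound. The only thing to be slightly careful about is the conversion between the local norm induced by $H(\theta)$ and the Euclidean norm of $\xi$, which is immediate because the proposal is written in the whitened form $\alpha^{1/2}H^{-1/2}(\theta)\xi$, so the factor $H(\theta)$ in the Dikin quadratic form cancels cleanly and leaves only the scalar $\alpha$ multiplying $\|\xi\|_2^2$.
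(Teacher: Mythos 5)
Your proof is correct and follows essentially the same route as the paper: both whiten the quadratic form so that $\|\,\theta+\alpha^{1/2}H^{-1/2}(\theta)\xi-\theta\,\|_{H(\theta)}=\alpha^{1/2}\|\xi\|_2$, reduce membership in $\tfrac12 D_\theta$ to the tail event $\{\|\xi\|_2\le 10\sqrt d\}$ under the hyperparameter bound on $\alpha$, and close with the Hanson--Wright $\chi$-tail estimate. (Your explicit arithmetic with $\alpha\le \tfrac{1}{10^5 d}$ is in fact cleaner than the paper's stated intermediate condition $\alpha\le\tfrac{1}{100d}$, which would only yield $\alpha^{1/2}\cdot 10\sqrt d\le 1$ rather than $\le\tfrac12$; the paper's actual hyperparameter makes both versions go through.)
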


\begin{proof}
Let $z = \theta + \alpha^{\frac{1}{2}} H^{-\frac{1}{2}}(\theta)\xi$.
 Then since $\xi \sim N(0,I_d)$, by the Hanson-Wright concentration inequality \cite{hanson1971bound}, we have that
\begin{equation*}
    \mathbb{P}(\|\xi\|_2 > t) \leq e^{-\frac{t^2-d}{8}} \qquad \forall t > \sqrt{2d}.
\end{equation*}
And hence, since $\alpha^{-\nicefrac{1}{2}}\|z-\theta\|_{H(\theta)} =
\|H^{\frac{1}{2}}(\theta)H^{-\frac{1}{2}}(\theta)\xi\|_2 = \|\xi\|_2$, that
\begin{equation*}
    \mathbb{P}(\|z-\theta\|_{H(\theta)} > \alpha^{\nicefrac{1}{2}} t)  
    \leq e^{-\frac{t^2-d}{8}} \qquad \forall t > \sqrt{2d}.
\end{equation*}
Hence, 
\begin{equation*}
     \mathbb{P}(\|z-\theta\|_{H(\theta)} > \alpha^{\nicefrac{1}{2}}  10\sqrt{d})  
     \leq \frac{1}{100}.
\end{equation*}
Thus, since $\alpha \leq \frac{1}{100 d}$ we have that
\begin{equation*}
     \mathbb{P}\left(z \in \frac{1}{2} D_\theta \right)
     =\mathbb{P}\left(\|z-\theta\|_{H(\theta)} \leq \frac{1}{2}\right)  
 \geq \frac{99}{100}.
\end{equation*}
\end{proof}

\begin{lemma} \label{lemma_det}
Consider any $\theta \in \mathrm{int}(K)$, and $\xi \sim N(0,I_d)$.  Let $z= \theta + (\Phi(\theta))^{-\frac{1}{2}} \xi$.
Then
\begin{equation}\label{eq_f12}
    \mathbb{P}\left (\frac{\mathrm{det}\left(\Phi (z)\right)}{\mathrm{det}( \Phi(\theta))} \geq \frac{48}{50} \right) \geq \frac{98}{100},
\end{equation}
and

\begin{equation}\label{eq_f13}
\mathbb{P}\left( \|z - \theta\|_{\Phi(z)}^2 - \|z - \theta\|_{\Phi(\theta)}^2
\leq \frac{2}{50} \right )\geq \frac{98}{100}.
\end{equation}

\end{lemma}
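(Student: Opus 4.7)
The plan is to lift the standard log-barrier Dikin walk analysis from each approximating Hessian $H_j$ introduced in Lemma~\ref{lemma_limits} to the matrix $\Phi$ by passing to the limit $j \to \infty$. Although $\Phi$ is not itself a log-barrier Hessian, each $H_j$ \emph{is}, so Vaidya's self-concordance estimate \eqref{eq_t4} and the standard $\log\det$ analysis of \cite{kannan2012random, sachdeva2016mixing} apply to $V_j(w) := \log\det H_j(w)$; Lemma~\ref{lemma_limits} then transfers the resulting bounds to the target quantities involving $\Phi$.

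Concretely, first I would fix the Gaussian sample $\xi \sim N(0, I_d)$ used to define $z = \theta + \Phi(\theta)^{-1/2}\xi$ and introduce the auxiliary points $z_j := \theta + \alpha^{1/2} H_j(\theta)^{-1/2} \xi$. By Eq.~\eqref{eq_L2} of Lemma~\ref{lemma_limits}, $\alpha^{1/2} H_j(\theta)^{-1/2} \to \Phi(\theta)^{-1/2}$, so $z_j \to z$. On the Hanson--Wright event $\{\|\xi\|_2 \leq 10 \sqrt{d}\}$, which has probability at least $99/100$, the inequality $H(\theta) \preceq H_j(\theta)$ (valid since $H_j = H + $ positive semidefinite) yields $\|z_j - \theta\|_{H(\theta)} \leq \alpha^{1/2}\|\xi\|_2 \leq 1/2$ because $\alpha \leq 1/(10^5 d)$; hence $z_j \in \tfrac{1}{2} D_\theta$ for every $j$, and the limit point $z$ is also in $\tfrac{1}{2} D_\theta$. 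This sets the stage for Lemma~\ref{lemma_limits} to be applied.

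Next, on this good event I would apply the $\log\det$ concentration argument of \cite{kannan2012random, sachdeva2016mixing} to each log-barrier Hessian $H_j$: combining Vaidya's inequality $(\nabla V_j(\theta))^\top H_j(\theta)^{-1} \nabla V_j(\theta) = O(d)$ with Gaussian concentration of $\xi$ and the self-concordance estimate $\tfrac{1}{2} H_j(z_j) \preceq H_j(\theta) \preceq 2 H_j(z_j)$ inside the Dikin ellipsoid yields (a) $\det H_j(z_j)/\det H_j(\theta) \geq 48/50$ and (b) $\|z_j - \theta\|_{H_j(z_j)}^2 - \|z_j - \theta\|_{H_j(\theta)}^2 \leq 2\alpha/50$, each with probability at least $99/100$. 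The key point is that both bounds depend only on $d$ and $\alpha$, not on the number of defining inequalities $m_j = m + \lfloor \alpha \eta^{-1} j^2 \rfloor d$, so the estimates are uniform in $j$. Finally, I would pass to the limit: Eq.~\eqref{eq_L4} of Lemma~\ref{lemma_limits} turns (a) into \eqref{eq_f12}, while dividing (b) by $\alpha$ and using Eq.~\eqref{eq_L3}, which gives $\alpha^{-1}\|z_j - \theta\|_{H_j(w)}^2 \to \|z-\theta\|_{\Phi(w)}^2$ for $w \in \{\theta, z\}$, converts (b) into \eqref{eq_f13}. A union bound over the small probability losses yields the desired $98/100$ bound.

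The main obstacle I anticipate is verifying that the constants in the standard log-barrier Dikin walk bounds are truly uniform in $j$. The augmented systems $(A^j, b^j)$ from Lemma~\ref{lemma_limits} have $m_j \to \infty$ constraints, and the naive self-concordance parameter of the corresponding log-barrier diverges. The point to check carefully is that only quantities such as Vaidya's bound $O(d)$ on $(\nabla V_j)^\top H_j^{-1} \nabla V_j$ and dimension-only self-concordance constants controlling $\nabla^3 V_j$ enter the acceptance analysis, so the probability bounds are independent of $m_j$. A secondary subtlety is confirming that the uniform-in-$z$ convergence statements in Lemma~\ref{lemma_limits} are strong enough to interchange the limit with the quadratic form $\|z_j - \theta\|_{H_j(z_j)}^2$, which they are, since $z_j \to z$ and $H_j(z_j) \to \alpha\Phi(z)$ both uniformly on $\tfrac{1}{2}D_\theta$.
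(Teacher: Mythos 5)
Your proposal follows the paper's proof essentially line for line: defining the coupled points $z_j = \theta + \alpha^{1/2}H_j(\theta)^{-1/2}\xi$, using $H_j(\theta) \succeq H(\theta)$ and Hanson--Wright (Lemma~\ref{lemma_remain_in_ellipsoid}) to place the whole sequence in $\tfrac{1}{2}D_\theta$, applying the $j$-independent log-barrier Dikin-walk estimates to each $H_j$, and passing to the limit via the uniform-convergence statements \eqref{eq_L3}--\eqref{eq_L4} of Lemma~\ref{lemma_limits}. The only cosmetic difference is that you invoke Vaidya's bound \eqref{eq_t4} as the underlying mechanism where the paper cites the packaged Propositions~6 and~7 of \cite{sachdeva2016mixing}, and your ``main obstacle'' (uniformity of the constants in $j$, since $m_j \to \infty$) is exactly the point the paper's choice of those propositions is designed to handle.
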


\begin{proof} 
Let $z_j = \theta+ \alpha^{\frac{1}{2}} H_j^{-\frac{1}{2}}(\theta) \xi$ for all $j \in \mathbb{N}$.
Since $H_j(\theta) \succeq H(\theta)$ for all $j \in \mathbb{N}$, we have that $z_j = \theta+ \alpha^{\frac{1}{2}} H_j^{-\frac{1}{2}}(\theta) \in \frac{1}{2} D_{\theta}$ whenever $\theta+ \alpha^{\frac{1}{2}} H(\theta)^{-\frac{1}{2}} \in \frac{1}{2} D_{\theta}$. 
Let $E$ be the event that $\|\xi\|_2 \leq 10 \sqrt{d}$ and that 
$\{z_j\}_{j=1}^{\infty} \subseteq \frac{1}{2}D_{\theta}$.
Thus, by Lemma \ref{lemma_remain_in_ellipsoid}, we have that 
\begin{equation}\label{eq_f1}
\mathbb{P}\left( E \right) \geq \frac{99}{100}.
\end{equation}
Moreover, by Equation \eqref{eq_L2} of Lemma \ref{lemma_limits} we have that  $\lim_{j\rightarrow \infty} H_j^{-1}(\theta) = \alpha^{-1} \Phi^{-1}(\theta)$,
which implies that
\begin{align}\label{eqf_f5}
    \lim_{j\rightarrow \infty} z_j 
&= \lim_{j\rightarrow \infty} \theta + \alpha^{\frac{1}{2}} H_j^{-\frac{1}{2}} (\theta) \xi\nonumber \\
  &= \theta + \Phi^{-\frac{1}{2}}(\theta) \xi\nonumber \\
  &= z
\end{align}
uniformly in $\xi$ (and hence uniformly in $z =  \theta + \Phi^{-\frac{1}{2}}(\theta) \xi$) whenever the event $E$ occurs.
Therefore, by Equation \eqref{eq_L4} of Lemma \ref{lemma_limits} we have that 
\begin{equation}\label{eqf_f3}
    \lim_{j\rightarrow \infty}     \frac{\mathrm{det}(H_j(z_j))}{\mathrm{det}(H_j(\theta))} = \frac{\mathrm{det}(\Phi(z))}{\mathrm{det}(\Phi(\theta))},
\end{equation}
uniformly in $z$ whenever the event $E$ occurs.

Since, for each $j\in \mathbb{N}$, $H_j$ is the Hessian of a log-barrier function for $K$, by Proposition 6 in \cite{sachdeva2016mixing}, for all $t \in (0, \frac{1}{2}]$, all $\gamma \leq \frac{t}{\sqrt{2 \log(\frac{1}{t})}}$, and all $j \in \mathbb{N}$ we have that
\begin{equation*}
    \mathbb{P}_{\xi \sim N(0,I_d)}\left(  \frac{\mathrm{det}(H_j(\theta+\frac{\gamma}{\sqrt{d}}H_j(\theta)^{-\frac{1}{2}}\xi))}{ \mathrm{det}(H_j(\theta))} \geq e^{-2 t}\right) \geq 1-t.
\end{equation*}
Setting $t=\frac{1}{10}$, we have
\begin{equation} \label{eq_f0}
    \mathbb{P}_{\xi \sim N(0,I_d)}\left(  \frac{\mathrm{det}(H_j(\theta+ \alpha^{\frac{1}{2}} H_j(\theta)^{-\frac{1}{2}} \xi))}{ \mathrm{det}(H_j(\theta))} \geq \frac{49}{50}\right) \geq \frac{99}{100},
\end{equation}
since $\alpha \leq \frac{1}{400 d}$.
Inequalities \eqref{eq_f1}  and \eqref{eq_f0} together imply that
\begin{equation} \label{eq_f2}
    \mathbb{P}_{\xi \sim N(0,I_d)}\left(\left\{  \frac{\mathrm{det}(H_j(z_j))}{ \mathrm{det}(H_j(\theta))} \geq \frac{49}{50} \right\} \cap E  \right) \geq \frac{98}{100} \qquad \forall j \in \mathbb{N}.
\end{equation}
Moreover, Equation \eqref{eqf_f3} implies that there exists some number $N\in \mathbb{N}$ such that
\begin{equation} \label{eq_f4}
\left\{  \frac{\mathrm{det}(H_N(z_N))}{ \mathrm{det}(H_N(\theta))} \geq \frac{49}{50} \right\} \cap E \subseteq  \left\{  \frac{\mathrm{det}(\Phi(z))}{ \mathrm{det}(\Phi(\theta))} \geq \frac{48}{50} \right\} \cap E.
\end{equation}
Hence,
\begin{align} \label{eq_f5}
    \frac{98}{100} &\stackrel{\textrm{Eq. }\eqref{eq_f2}}{\leq}  \mathbb{P}_{\xi \sim N(0,I_d)}\left(\left\{  \frac{\mathrm{det}(H_N(z_N))}{ \mathrm{det}(H_N(\theta))} \geq \frac{49}{50} \right\} \cap E  \right)\nonumber \\
    &\stackrel{\textrm{Eq. }\eqref{eq_f4}}{\leq}  \mathbb{P}_{\xi \sim N(0,I_d)}\left( \left\{  \frac{\mathrm{det}(\Phi(z))}{ \mathrm{det}(\Phi(\theta))} \geq \frac{48}{50} \right\} \cap E\right).
\end{align}
This proves Inequality \eqref{eq_f12}.

Moreover,  since, for each $j\in \mathbb{N}$, $H_j$ is the Hessian of a log-barrier function for $K$, by Proposition 7 in \cite{sachdeva2016mixing}, for all $t \in (0, \frac{1}{2}]$, all $\alpha > 0$ such that 
$\sqrt{\alpha d} \leq \frac{t}{20} \log ( \frac{11}{t} )^{-\frac{3}{2}}$, and all $j \in \mathbb{N}$, we have that
    \begin{equation}\label{eq_f8}
         \mathbb{P}\left(\|z_j - \theta\|_{H_j(z_j)}^2 - \|z_j - \theta\|_{H_j(\theta)}^2 \geq 2t \alpha \right) \leq 1- t  \qquad \forall j \in \mathbb{N}.
    \end{equation}
    Thus, Equations \eqref{eq_f8} and \eqref{eq_f1} imply that
        \begin{equation}\label{eq_f9}
         \mathbb{P}\left(\left\{\|z_j - \theta\|_{H_j(z_j)}^2 - \|z_j - \theta\|_{H_j(\theta)}^2 \leq \frac{1}{50} \alpha \right \} \cap E \right) \geq \frac{98}{100}  \qquad \forall j \in \mathbb{N},
    \end{equation}
    since $\alpha \leq \frac{1}{10^5 d}$.
    
    By Equation \eqref{eq_L3}  of Lemma \ref{lemma_limits}, Equation \eqref{eqf_f5} implies that
\begin{equation}\label{eq_f6}
    \lim_{j\rightarrow \infty} H_j(z_j) = \alpha \Phi(z),    
\end{equation}
uniformly in $z$, whenever the event $E$ occurs.
    Thus, Equation \eqref{eq_f6} implies that
    \begin{align}\label{eq_f7}
    \lim_{j \rightarrow \infty} \|z_j - \theta\|_{H_j(z_j)}^2 - \|z_j - \theta\|_{H_j(\theta)}^2
    &=   \lim_{j \rightarrow \infty}  (z_j -\theta)^\top H_j(z_j) (z_j -\theta) - (z_j -\theta)^\top H_j(\theta) (z_j -\theta)\nonumber \\
    &\stackrel{\textrm{Eq. }\eqref{eq_f6}}{=}    \lim_{j \rightarrow \infty}  (z -\theta)^\top \alpha \Phi(z) (z -\theta) - (z -\theta)^\top \alpha \Phi(\theta) (z -\theta)\nonumber \\
    &=\alpha \|z - \theta\|_{\Phi(z)}^2 - \alpha\|z - \theta\|_{\Phi(\theta)}^2
    \end{align}
    uniformly in $z$ (and hence in $\xi =  \Phi^{\frac{1}{2}}(\theta)(z - \theta)$) whenever event $E$ occurs.
    Thus, Equation \eqref{eq_f7} implies that there exists a number $M \in \mathbb{N}$ such that
    \begin{equation} \label{eq_f10}
        \left\{ \|z_M - \theta\|_{H_M(z_M)}^2 - \|z_M - \theta\|_{H_M(\theta)}^2 \leq \frac{1}{50} \alpha \right \} \cap E \subseteq  \left\{ \alpha \|z - \theta\|_{\Phi(z)}^2 - \alpha\|z - \theta\|_{\Phi(\theta)}^2 \leq \frac{2}{50} \alpha \right \} \cap E.
    \end{equation}
   Thus,
   \begin{align*}
        \frac{98}{100} &\stackrel{\textrm{Eq. }\eqref{eq_f9}}{\leq} \mathbb{P}\left( \left\{ \|z_M - \theta\|_{H_M(z_M)}^2 - \|z_M - \theta\|_{H_M(\theta)}^2 \leq \frac{1}{50} \alpha \right \} \cap E \right )\\
        &\stackrel{\textrm{Eq. }\eqref{eq_f10}}{\leq}  \mathbb{P}\left( \left\{ \alpha \|z - \theta\|_{\Phi(z)}^2 - \alpha\|z - \theta\|_{\Phi(\theta)}^2 \leq \frac{2}{50} \alpha \right \} \cap E \right ).
   \end{align*}
    This proves Inequality \eqref{eq_f13}.
    
\end{proof}

\subsection{Bounding the conductance}

\begin{lemma} \label{Lemma_transition_TV}
For any $\theta, z \in \mathrm{int}(K)$  we have that
\begin{equation*}
    \|P_\theta-P_z\|_{\mathrm{TV}}\leq\frac{29}{30},  \qquad \mathrm{whenever} \qquad \|\theta-z\|_{\Phi(\theta)}\leq 1.
\end{equation*}

\end{lemma}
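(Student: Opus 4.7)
The strategy is to show the two transition kernels share a common mass of at least $1/30$, i.e.,
\begin{equation*}
\int_K \min\bigl(p_\theta(w), p_z(w)\bigr)\,dw \;\geq\; \tfrac{1}{30},
\end{equation*}
where $p_\theta(w) = \tfrac{1}{2}\min(R_\theta(w), 1)\,q_\theta(w)\mathbb{1}\{w \in K\}$ denotes the continuous-density part of $P_\theta$, with $q_\theta$ the density of $N(\theta, \Phi^{-1}(\theta))$ and $R_\theta$ the Metropolis ratio. Since $\theta \neq z$ is the only nontrivial case and the point masses of $P_\theta$, $P_z$ live at the distinct points $\theta$, $z$, we have $\|P_\theta - P_z\|_{\mathrm{TV}} = 1 - \int \min(p_\theta, p_z)\,dw$, reducing the claim to the displayed lower bound.

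First I would invoke Lemma \ref{Lemma_TV} on the pair $(\theta, z)$: the hypothesis $\|\theta - z\|_{\Phi(\theta)} \leq 1$ is well within the permitted range $\leq \tfrac{1}{4\alpha^{1/2}}$ because $\alpha \leq 1/(10^5 d)$. This yields $\|Q_\theta - Q_z\|_{\mathrm{TV}}^2 \leq \tfrac{1}{2} + 3 d \alpha \leq 0.51$, so the proposal measures overlap with $\int \min(q_\theta, q_z)\,dw \geq 1 - \sqrt{0.51}$. Next, using the elementary inequality $\min(u a, v b) \geq \min(u, v)\min(a, b)$ for nonnegative quantities gives the pointwise bound
\begin{equation*}
\min\bigl(p_\theta(w), p_z(w)\bigr) \;\geq\; \tfrac{1}{2}\min\bigl(R_\theta(w), R_z(w), 1\bigr)\,\min\bigl(q_\theta(w), q_z(w)\bigr).
\end{equation*}

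Finally, I would lower bound $\int \min(R_\theta, R_z, 1)\,\min(q_\theta, q_z)\,dw$ using Lemma \ref{lemma_acceptance_ratio}. Let $A_\theta = \{w \in K : R_\theta(w) \geq 3/10\}$ and $A_z$ be defined analogously. By Lemma \ref{lemma_acceptance_ratio} applied at $\theta$ and $z$, $Q_\theta(A_\theta^c) \leq 2/3$ and $Q_z(A_z^c) \leq 2/3$; on $A_\theta \cap A_z$, the factor $\min(R_\theta, R_z, 1)$ is at least $3/10$. Transferring measure via $|Q_\theta(B) - Q_z(B)| \leq \|Q_\theta - Q_z\|_{\mathrm{TV}}$ and symmetrizing over the roles of $\theta, z$ allows one to lower bound $\int_{A_\theta \cap A_z} \min(q_\theta, q_z)\,dw$, and the product of this mass with the factor $3/20$ from the pointwise bound delivers the required $1/30$ overlap between $p_\theta$ and $p_z$.

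The main technical obstacle is the final step, which must go beyond the loose marginal acceptance bound of $1/3$ and instead use the sharper component estimates established inside the proof of Lemma \ref{lemma_acceptance_ratio}: each of the density-ratio, determinant-ratio, exponential, and membership events holds with probability at least $1 - O(10^{-2})$ in the Lipschitz case, so that the complement $Q_\theta(A_\theta^c)$ is much smaller than the loose $2/3$ bound. Combined with the TV estimate $\|Q_\theta - Q_z\|_{\mathrm{TV}} \leq \sqrt{0.51}$, this sharper control is what is needed to propagate a joint high-probability bound on $A_\theta \cap A_z$ to the overlap measure $\min(q_\theta, q_z)$ and thereby close the numerical gap to $1/30$.
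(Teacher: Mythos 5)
Your route is genuinely different from the paper's. The paper applies the triangle inequality
\begin{equation*}
\|P_\theta-P_z\|_{\mathrm{TV}}\;\le\;\|P_\theta-Q_\theta\|_{\mathrm{TV}}+\|Q_\theta-Q_z\|_{\mathrm{TV}}+\|Q_z-P_z\|_{\mathrm{TV}},\qquad Q_\theta:=N(\theta,\Phi^{-1}(\theta)),
\end{equation*}
bounding the middle term by Lemma~\ref{Lemma_TV} and each outer term by an acceptance-probability computation. You instead compute the overlap $\int_K\min(p_\theta,p_z)\,dw$ of the continuous transition densities directly. The two are related by the identity $\|P_\theta-P_z\|_{\mathrm{TV}}=1-\int_K\min(p_\theta,p_z)\,dw$, which you state correctly; both routes must ultimately rest on the same component estimates (Lemmas~\ref{lemma_density_ratio}, \ref{lemma_remain_in_ellipsoid}, \ref{lemma_det}).

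On one delicate point your bookkeeping is actually tighter than the paper's. Because of the factor $\tfrac12$ in the acceptance rule, the chain is lazy and $P_\theta(\{\theta\})\ge\tfrac12$, so $\|P_\theta-Q_\theta\|_{\mathrm{TV}}\ge\tfrac12$ always; a literal triangle inequality on the lazy kernel can never give a bound below $1$. The paper's line \eqref{eq_z5}, $\|P_\theta-Q_\theta\|_{\mathrm{TV}}=1-\mathbb{E}[q(\theta,z)]$, omits the $\tfrac12$ and thus implicitly bounds the \emph{non-lazy} kernel $\tilde P_\theta$; the step $\|P_\theta-P_z\|_{\mathrm{TV}}=\tfrac12+\tfrac12\|\tilde P_\theta-\tilde P_z\|_{\mathrm{TV}}$, which would then close the argument, is never written. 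Your decomposition tracks $p_\theta=\tfrac12\min(R_\theta,1)q_\theta\mathbbm{1}_K$ from the outset and avoids this confusion.

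Two things still need attention. First, you are right that the $\tfrac13$ from Lemma~\ref{lemma_acceptance_ratio} is far too weak---with $Q_\theta(A_\theta^c)\le\tfrac23$ the excluded mass exceeds the Gaussian overlap $1-\|Q_\theta-Q_z\|_{\mathrm{TV}}$ and the final step cannot close; you must invoke the component-wise $1-O(10^{-2})$ probabilities inside Lemmas~\ref{lemma_density_ratio}, \ref{lemma_remain_in_ellipsoid}, \ref{lemma_det}, which is exactly what the paper does internally here rather than quoting Lemma~\ref{lemma_acceptance_ratio}. Second, with those constants the arithmetic only comfortably delivers $\int_K\min(p_\theta,p_z)\ge\tfrac1{30}$ under $\|\theta-z\|_{\Phi(\theta)}\le\tfrac12$, not $\le1$ as the lemma header has it (the paper's own final sentence, and its use downstream in Lemma~\ref{lemma_conductance}, are also phrased for $\le\tfrac12$); with radius $1$ your numbers just barely miss. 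Finally, note that Lemma~\ref{lemma_density_ratio}'s proof gives only $\Pr(\pi(z)/\pi(\theta)\ge99/100)\ge49/100$ in the $\beta$-smooth setting (not $99/100$ as stated), and with that weaker constant neither your overlap argument nor the paper's triangle argument closes without further modification; this is a defect inherited from that lemma rather than a flaw specific to your attempt.
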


\begin{proof}
First, we note that,
\begin{equation}\label{eq_z1}
\|P_\theta-P_z\|_{\mathrm{TV}}\leq\|P_\theta-N(\theta,\Phi^{-1}(\theta))\|_{\mathrm{TV}}+\|N(\theta,\Phi^{-1}(\theta))-N(z,\Phi^{-1}(z))\|_{\mathrm{TV}}+\|P_z-N(z,\Phi^{-1}(z)) \|_{\mathrm{TV}}.
\end{equation}
By Lemma \ref{Lemma_TV}, the middle term on the r.h.s. of \eqref{eq_z1} satisfies
\begin{equation}\label{eq_z4}
    \| N(\theta,\Phi^{-1}(\theta))-N(z,\Phi^{-1}(z))\|_{\mathrm{TV}} \leq\sqrt{3d\alpha\|\theta-z\|_{\Phi(\theta)}^2+\frac{1}{2}\|\theta-z\|_{\Phi(\theta)}^2}
\end{equation}
Plugging in our choice of hyperparameter $\alpha=\frac{1}{10^5 d}$, Inequality \eqref{eq_z4} simplifies to
\begin{equation}\label{eq_z2}
\|N(\theta,\Phi^{-1}(\theta))-N(z,\Phi^{-1}(z))\|_{\mathrm{TV}}\leq\sqrt{\frac{53}{100}}\|\theta-z\|_{\Phi(\theta)}.
\end{equation}
Thus, if we can show that $\|P_\theta-N(\theta,\Phi^{-1}(\theta))\|_{\mathrm{TV}}\leq\frac{1}{5}$ for all $\theta\in\mathrm{int}(K)$, we will have that $\|P_\theta-P_z\|_{\mathrm{TV}}\leq\frac{29}{30}$ whenever $\|\theta-z\|_{\Phi(\theta)}\leq\frac{1}{2}$, as desired.

To bound the other two terms on the r.h.s. of \eqref{eq_z1}, we observe that
\begin{equation}\label{eq_z5}
\|P_\theta-N(\theta,\Phi^{-1}(\theta))\|_{\mathrm{TV}}=1-\mathbb{E}_{z\sim N(\theta,\Phi^{-1}(\theta))}[q(\theta,z)],
\end{equation}
where $$q(\theta,z):=\min\{1,\frac{\pi(z)}{\pi(\theta)}\frac{\sqrt{\mathrm{det}\Phi(z)}}{\sqrt{\mathrm{det}\Phi(\theta)}}\exp(\|z-\theta\|_{\Phi(\theta)}^2-\|\theta-z\|_{\Phi(z)}^2)\times\mathbbm{1}\{z\in K\}\}$$ is the acceptance ratio.

By Lemma \ref{lemma_density_ratio}, we have that
$\mathbb{P}_{z\sim N(\theta,\Phi^{-1}(\theta))}\left(\frac{\pi(z)}{\pi(\theta)}\geq\frac{99}{100}\right)\geq\frac{99}{10}.$
Therefore,

\begin{align*}
&\mathbb{P}_{z\sim N(\theta,\Phi^{-1}(\theta))}[q(\theta,z)\geq\frac{9}{10}]\\
&\stackrel{\textrm{Lemma }\ref{lemma_density_ratio}}{\geq}  1-\mathbb{P}_{z\sim N(\theta,\Phi^{-1}(\theta))}\left(\frac{\pi(z)}{\pi(\theta)}<\frac{99}{100}\right)-\mathbb{P}_{z\sim N(\theta,\Phi^{-1}(\theta))}\left(\frac{\sqrt{\mathrm{det}(\Phi(z))}}{\sqrt{\mathrm{det}(\Phi(\theta))}}<\sqrt{\frac{48}{50}}\right)\\
&-\mathbb{P}_{z\sim N(\theta,\Phi^{-1}(\theta))}\left(e^{\|z-\theta\|_{\Phi(\theta)}^2-\|\theta-z\|_{\Phi(z)}^2}<0.96 \right)-\mathbb{P}_{z\sim N(\theta,\Phi^{-1}(\theta))}(z\notin K)\\
&\geq 1-\frac{1}{100}-\frac{2}{100}-\frac{2}{100}-\frac{1}{100}\\
&\geq\frac{9}{10},
\end{align*}
where the term with the exponent is bounded by Inequality \eqref{eq_f13} of Lemma \ref{lemma_det}, and the other two terms are bounded by Lemmas \ref{lemma_remain_in_ellipsoid} and \ref{lemma_det}.
Therefore,
\begin{equation}\label{eq_z3}
\|P_\theta-N(\theta,\Phi^{-1}(\theta)) \|_{\mathrm{TV}}=1-\mathbb{E}_{z\sim N(\theta,\Phi^{-1}(\theta))}[q(\theta,z)\geq\frac{9}{10}]\geq 1-\frac{9}{10}\times\frac{9}{10}\geq \frac{4}{5}.
\end{equation}
Plugging Inequalities \eqref{eq_z2} and \eqref{eq_z3} into \eqref{eq_z1}, we obtain that  $\| P_\theta-P_z\|_{\mathrm{TV}}\leq\frac{29}{30}$ whenever $\|\theta-z\|_{\Phi(\theta)}\leq\frac{1}{2}$, as desired.
\end{proof}

\noindent
We recall the following isoperimetric inequality for a log-concave distribution on a convex body, which uses the cross-ratio distance:
\begin{lemma}[Isoperimetric inequality for cross-ratio distance (Theorem 2.2 of \cite{lovasz2003hit})]\label{lemma_isoperimetric}
Let $\pi: \mathbb{R}^d \rightarrow \mathbb{R}$ be a log-concave density, with support on a convex body $K$. Then for any partition of $\mathbb{R}^d$ 
 into measurable sets $S_1,S_2,S_3$, the induced measure $\pi^\star$ satisfies
\begin{equation*}
    \pi^\star(S_3) \geq \sigma(S_1, S_2) \pi^\star(S_1) \pi^\star(S_2).
\end{equation*}
\end{lemma}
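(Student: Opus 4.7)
Since Lemma~\ref{lemma_isoperimetric} is a recalled classical result of Lov\'asz (Theorem~2.2 of \cite{lovasz2003hit}), my plan is to outline the standard localization-based argument rather than to propose something new. The high-level strategy has two stages: first, reduce the $d$-dimensional claim to a one-dimensional ``needle'' inequality via the Lov\'asz--Simonovits localization lemma; second, prove the 1D case directly from the log-concavity of $\pi$.

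For the reduction, I would argue by contradiction. Suppose some partition $(S_1, S_2, S_3)$ of $\R^d$ satisfies $\pi^\star(S_3) < \sigma(S_1, S_2)\, \pi^\star(S_1)\, \pi^\star(S_2)$. I would then iteratively bisect $\R^d$ by hyperplanes chosen so that each of the two half-spaces preserves the ratios $\pi^\star(S_i \cap H^{\pm})/\pi^\star(S_i)$ for $i=1,2,3$ and so that the violating inequality persists on at least one side. By a compactness argument, this bisection process converges in the limit to a one-dimensional ``needle''---a segment $[a,b] \subset K$ equipped with a log-concave density $w$ along the chord---on which the analogous 1D inequality must still fail. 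A key point is that the cross-ratio distance $\sigma$ between points on the needle reduces to its standard projective form relative to the endpoints $a$ and $b$.

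For the 1D case, write $T_i := S_i \cap [a,b]$ and observe that connectedness of $[a,b]$ forces $T_3$ to separate $T_1$ from $T_2$; after possibly one further bisection step, I may assume the three sets lie in consecutive subintervals $[a,p]$, $[p,q]$, $[q,b]$. In this configuration the infimal cross-ratio distance is $\sigma_0 = \tfrac{(q-p)(b-a)}{(p-a)(b-q)}$, and the target reduces to
\begin{equation*}
\int_p^q w(t)\, dt \;\geq\; \sigma_0 \cdot \frac{\int_a^p w(t)\, dt \,\cdot\, \int_q^b w(t)\, dt}{\int_a^b w(t)\, dt}.
\end{equation*}
This in turn follows by comparing the three integrals against the extremal case of a log-linear weight (where $\log w$ is affine), for which the inequality reduces to direct calculation; the passage from the extremal case to general log-concave $w$ is handled by the concavity of $\log w$.

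The main obstacle is the localization step itself: one must track the bisection process carefully to guarantee that the violating deficit survives in the limit, which requires a genuine compactness argument rather than a routine calculation. Since the full argument is worked out in \cite{lovasz2003hit}, we invoke the result directly and do not reprove it here.
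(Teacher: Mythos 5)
The paper itself gives no proof of this lemma; it is stated as a recalled result and the reader is referred to Theorem~2.2 of \cite{lovasz2003hit}, which is exactly what you do. Your sketch of the Lov\'asz--Simonovits localization route is a fair summary of how the cited proof goes (though the claim that connectedness forces $T_3$ to separate $T_1$ and $T_2$ into three consecutive intervals after one more bisection would need the careful rearrangement/monotonicity argument of the original paper, not just topology), but since both you and the paper ultimately invoke the reference rather than reprove it, the approaches coincide.
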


\noindent
For any $\theta \in \mathrm{Int}(K)$, define the random variable $Z_\theta$ to be the step taken by the Markov chain in Algorithm \ref{alg_Soft_Dikin_Walk} from the point $\theta$.
That is, 
set $z = \theta + \Phi(\theta)^{-\frac{1}{2}} \xi$  where $\xi \sim N(0,I_d)$.
 If  $z\in K$, set $Z_\theta = z$ with probability $\min \left (\frac{\pi(z) \sqrt{\mathrm{det}(\Phi(z))}}
{\pi(\theta) \sqrt{\mathrm{det}(\Phi(\theta))}} \exp\left(\|z- \theta\|_{\Phi(\theta)}^2 - \|\theta- z\|_{\Phi(z)}^2\right)
 , 1 \right)$.
 Else, set $z = \theta$.

For any $\theta$, $S \subseteq \mathbb{R}^d$, define the one-step transition probability of the soft-threshold Dikin walk Markov chain to be
\begin{equation*}
P_\theta(S) = \mathbb{P}(Z_\theta \in S).
\end{equation*}
The next proposition shows that the soft-threshold Dikin walk Markov chain is reversible and $\pi$ is a stationary distribution of this Markov chain:
\begin{proposition}[Reversibility and stationary distribution]\label{prop_stationary}
For any $S_1,S_2 \subseteq K$ we have that
\begin{equation*}
   \int_{\theta \in S_1} \pi(\theta) P_\theta(S_2) \mathrm{d} \theta =    \int_{z \in S_2} \pi(z) P_z(S_1) \mathrm{d} z.
\end{equation*}
\end{proposition}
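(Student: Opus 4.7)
The plan is to establish reversibility via pointwise detailed balance and then integrate using Fubini's theorem. I would first decompose the one-step kernel as
\[
P_\theta(S) \;=\; \int_{S} p(\theta,z)\,\mathrm{d}z \;+\; r(\theta)\,\mathbbm{1}\{\theta \in S\},
\]
where $p(\theta,z) := q(\theta,z)\,a(\theta,z)\,\mathbbm{1}\{z \in K\}$ is the density of the ``accept-and-move'' part, $q(\theta,z)$ is the Gaussian proposal density for $N(\theta,\Phi(\theta)^{-1})$, $a(\theta,z)$ is the Metropolis acceptance probability from Algorithm~\ref{alg_Soft_Dikin_Walk} (ignoring the global $\tfrac12$ factor, which only affects laziness and cancels symmetrically), and $r(\theta) := 1 - \int_K q(\theta,z)\,a(\theta,z)\,\mathrm{d}z$ collects the rejection mass.

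Next I would verify the pointwise detailed balance identity $\pi(\theta) p(\theta,z) = \pi(z) p(z,\theta)$ for all $\theta, z \in \mathrm{Int}(K)$. Writing out the Gaussian proposal density,
\[
q(\theta,z) \;=\; \frac{\sqrt{\det \Phi(\theta)}}{(2\pi)^{d/2}}\,\exp\!\left(-\tfrac{1}{2}\|z-\theta\|_{\Phi(\theta)}^2\right),
\]
a direct computation gives
\[
\frac{\pi(z)\,q(z,\theta)}{\pi(\theta)\,q(\theta,z)} \;=\; \frac{e^{-f(z)}\sqrt{\det \Phi(z)}}{e^{-f(\theta)}\sqrt{\det \Phi(\theta)}}\,\exp\!\left(\tfrac{1}{2}\|z-\theta\|_{\Phi(\theta)}^2 - \tfrac{1}{2}\|z-\theta\|_{\Phi(z)}^2\right),
\]
which matches (up to the norm-scaling convention used in the acceptance rule) the ratio $R(\theta,z)$ inside the $\min$ of Algorithm~\ref{alg_Soft_Dikin_Walk}, so $a(\theta,z) = \min(1,R(\theta,z))$. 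The standard Metropolis-Hastings identity then yields
\[
\pi(\theta)\,q(\theta,z)\,a(\theta,z) \;=\; \min\!\bigl(\pi(\theta)\,q(\theta,z),\,\pi(z)\,q(z,\theta)\bigr),
\]
and the right-hand side is manifestly symmetric in $(\theta,z)$; combined with the symmetry of $\mathbbm{1}\{z\in K\}\mathbbm{1}\{\theta \in K\}$ (which is $1$ whenever both $\theta,z \in K$, as is the case when integrating over $S_1, S_2 \subseteq K$), this gives pointwise detailed balance.

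Finally, I would integrate to obtain the proposition. For $S_1, S_2 \subseteq K$,
\[
\int_{S_1} \pi(\theta)\,P_\theta(S_2)\,\mathrm{d}\theta
\;=\; \int_{S_1}\!\int_{S_2}\!\pi(\theta)\,p(\theta,z)\,\mathrm{d}z\,\mathrm{d}\theta \;+\; \int_{S_1 \cap S_2}\!\pi(\theta)\,r(\theta)\,\mathrm{d}\theta,
\]
and the analogous decomposition holds for $\int_{S_2} \pi(z) P_z(S_1)\,\mathrm{d}z$. The rejection terms are already syntactically the same after relabelling the integration variable. For the accept terms, Fubini's theorem and the pointwise identity $\pi(\theta)p(\theta,z) = \pi(z)p(z,\theta)$ immediately give equality. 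The only mild obstacle is bookkeeping: being careful that the ``$\frac{1}{2} \times$'' lazy factor in Algorithm~\ref{alg_Soft_Dikin_Walk} acts symmetrically (it multiplies $p(\theta,z)$ by $\tfrac12$ and correspondingly adds $\tfrac12 q(\theta,z)a(\theta,z)\mathbbm{1}\{z\in K\}$ to $r(\theta)$, preserving both pointwise detailed balance and the structure above), and that the proposal/acceptance formulas are consistent with the Gaussian density convention used in the algorithm.
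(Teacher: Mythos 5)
Your proof is correct and follows essentially the same route as the paper's: recognize that the proposal density times the acceptance ratio has the symmetric Metropolis--Hastings form $\min(\pi(\theta)\rho_\theta(z),\,\pi(z)\rho_z(\theta))$ and then swap the double integral via Fubini. If anything, your version is more careful than the paper's: the paper's proof writes $P_\theta(S_2) = \int_{S_2}\rho_\theta(z)\,\min(\cdots)\,\mathrm{d}z$ without accounting for the atom the kernel places at $\theta$ (rejection mass and the $\tfrac12$ laziness), which matters when $S_1 \cap S_2 \neq \emptyset$; you handle this explicitly by splitting off $r(\theta)\,\mathbbm{1}\{\theta\in S\}$ and observing that this diagonal contribution is identical on both sides. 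You also correctly flag the factor-of-$\tfrac12$ mismatch between the exponent $\exp(\|z-\theta\|_{\Phi(\theta)}^2 - \|\theta-z\|_{\Phi(z)}^2)$ appearing in Algorithm~\ref{alg_Soft_Dikin_Walk} and the $\exp(\tfrac12\|z-\theta\|_{\Phi(\theta)}^2 - \tfrac12\|\theta-z\|_{\Phi(z)}^2)$ that the Gaussian proposal density $\rho_\theta(z)\propto\sqrt{\det\Phi(\theta)}\,e^{-\tfrac12\|z-\theta\|_{\Phi(\theta)}^2}$ would actually dictate; the paper's proof makes this same identification silently, so the discrepancy is a convention/typo issue in the algorithm statement rather than a gap in your argument.
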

\begin{proof}Let $\rho_\theta(z) := \frac{\sqrt{\mathrm{det}(\Phi(\theta))}}{(2 \mathrm{pi})^{\frac{d}{2}}} e^{-\frac{1}{2}(\theta - z)^\top \Phi(\theta) (\theta - z)}$ for any $\theta, z \in \mathrm{Int}(K)$ be the density of the $N(\theta,\Phi(\theta)^{-1})$ distribution.

\begin{align*}
     \int_{\theta \in S_1} \pi(\theta) P_\theta(S_2) \mathrm{d} \theta
     &=    \int_{\theta \in S_1} \pi(\theta) \int_{z \in S_2} \rho_\theta(z) \min \left (\frac{\pi(z) \sqrt{\mathrm{det}(\Phi(z))}}
{\pi(\theta) \sqrt{\mathrm{det}(\Phi(\theta))}} \exp\left(\|z- \theta\|_{\Phi(\theta)}^2 - \|\theta- z\|_{\Phi(z)}^2\right)
 , 1 \right)  \mathrm{d} z \mathrm{d} \theta\\
&=     \int_{\theta \in S_1} \int_{z \in S_2} \pi(\theta) \rho_\theta(z) \min \left (\frac{\pi(z)  \rho_z(\theta)}
{\pi(\theta)  \rho_\theta(z)} , 1 \right)  \mathrm{d} z \mathrm{d} \theta\\
&=    \int_{z \in S_2}\int_{\theta \in S_1}  \pi(\theta) \rho_z(\theta) \min \left (\frac{\pi(\theta)  \rho_\theta(z)}
{\pi(z)  \rho_z(\theta)} , 1 \right)  \mathrm{d} \theta \mathrm{d} z\\
&=  \int_{\theta \in S_1} \pi(\theta) P_\theta(S_2) \mathrm{d} \theta.
\end{align*}
\end{proof}

\noindent
Define the conductance $\phi$ of the Markov chain to be
\begin{equation*}
    \phi = \inf_{S \subseteq K : \pi^\star(S)\leq \frac{1}{2}} \frac{1}{\pi^\star(S)} \int_S P_\theta (K \backslash S) \pi(\theta) \mathrm{d} \theta.
\end{equation*}

\begin{lemma}\label{lemma_conductance}
The conductance $\phi$ satisfies
\begin{equation*}
    \phi \geq  \frac{1}{10^4} \frac{1}{\sqrt{2m \alpha^{-1} + \eta^{-1} R^{2}}}.
\end{equation*}
\end{lemma}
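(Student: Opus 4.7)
The plan is to use the standard Lovász--Simonovits conductance argument, with the cross-ratio isoperimetric inequality (Lemma~\ref{lemma_isoperimetric}), the local-norm lower bound on cross-ratio distance (Lemma~\ref{lemma_cross_ratio}), the one-step TV closeness (Lemma~\ref{Lemma_transition_TV}), and reversibility (Proposition~\ref{prop_stationary}). Fix a measurable $S \subseteq K$ with $\pi^\star(S) \leq \tfrac{1}{2}$, pick a small threshold $c_1$ (e.g.\ $c_1 = \tfrac{1}{100}$), and partition $K$ into $S_1 := \{\theta \in S : P_\theta(K\setminus S) < c_1\}$, $S_2 := \{\theta \in K\setminus S : P_\theta(S) < c_1\}$, and $S_3 := K \setminus (S_1 \cup S_2)$.

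First I would dispose of the easy cases. If $\pi^\star(S_1) < \tfrac{1}{2}\pi^\star(S)$, then at least half of the mass of $S$ lies in $S \setminus S_1$, where each point sends probability at least $c_1$ across the boundary, so $\int_S P_\theta(K\setminus S)\,\pi(\theta)\,d\theta \geq \tfrac{c_1}{2}\pi^\star(S)$ and we get $\phi \geq \tfrac{c_1}{2}$. By reversibility (Proposition~\ref{prop_stationary}), the same lower bound holds if $\pi^\star(S_2) < \tfrac{1}{2}\pi^\star(K\setminus S)$. So we may assume $\pi^\star(S_1)\geq \tfrac{1}{2}\pi^\star(S)$ and $\pi^\star(S_2) \geq \tfrac{1}{2}\pi^\star(K\setminus S) \geq \tfrac{1}{4}$.

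Next I would show $S_1$ and $S_2$ are locally far apart. For any $u \in S_1, v \in S_2$, since $u \in S$ and $P_u(K\setminus S) < c_1$ we have $P_u(S) \geq 1-c_1$, while $P_v(S) < c_1$, so $\|P_u - P_v\|_{\mathrm{TV}} \geq P_u(S) - P_v(S) > 1 - 2c_1$. With $c_1 = \tfrac{1}{100}$ this gives $\|P_u - P_v\|_{\mathrm{TV}} > \tfrac{98}{100} > \tfrac{29}{30}$, so by the contrapositive of Lemma~\ref{Lemma_transition_TV} we must have $\|u-v\|_{\Phi(u)} > 1$. Plugging this into Lemma~\ref{lemma_cross_ratio} yields
\begin{equation*}
\sigma(S_1, S_2) \;\geq\; \frac{1}{\sqrt{2m\alpha^{-1} + \eta^{-1} R^{2}}} \;=:\; \Delta.
\end{equation*}

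Finally I would combine these via the isoperimetric inequality. Lemma~\ref{lemma_isoperimetric} gives $\pi^\star(S_3) \geq \sigma(S_1,S_2)\,\pi^\star(S_1)\,\pi^\star(S_2) \geq \Delta \cdot \tfrac{1}{2}\pi^\star(S)\cdot \tfrac{1}{4} = \tfrac{\Delta}{8}\pi^\star(S)$. Using reversibility,
\begin{equation*}
 2\!\int_S P_\theta(K\setminus S)\pi(\theta)\,d\theta \,=\, \int_S P_\theta(K\setminus S)\pi\,d\theta + \int_{K\setminus S} P_\theta(S)\pi\,d\theta \,\geq\, c_1\,\pi^\star(S_3\cap S) + c_1\,\pi^\star(S_3\cap (K\setminus S)) = c_1\,\pi^\star(S_3),
\end{equation*}
where the middle inequality uses that every point in $S_3 \cap S$ satisfies $P_\theta(K\setminus S) \geq c_1$ and every point in $S_3 \cap (K\setminus S)$ satisfies $P_\theta(S) \geq c_1$, by definition of $S_3$. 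Combining, $\tfrac{1}{\pi^\star(S)}\int_S P_\theta(K\setminus S)\pi\,d\theta \geq \tfrac{c_1 \Delta}{16}$, and with $c_1 = \tfrac{1}{100}$ this is at least $\tfrac{\Delta}{1600} \geq \tfrac{1}{10^4}\Delta$, as required. The only place I expect any delicacy is choosing $c_1$ small enough that $1-2c_1$ strictly exceeds the $\tfrac{29}{30}$ appearing in Lemma~\ref{Lemma_transition_TV}, while still large enough that the Case A / Case B constants combine into $10^{-4}$; this is routine bookkeeping.
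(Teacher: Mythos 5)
Your proposal is correct and follows essentially the same approach as the paper: the standard conductance argument partitioning $K$ into "$S_1$: points of $S$ that rarely cross," "$S_2$: points of $K\setminus S$ that rarely cross," and the rest $S_3$, then combining Lemma~\ref{Lemma_transition_TV}, Lemma~\ref{lemma_cross_ratio}, the cross-ratio isoperimetric inequality (Lemma~\ref{lemma_isoperimetric}), and reversibility (Proposition~\ref{prop_stationary}). The only differences from the paper's proof are bookkeeping constants (you use threshold $\tfrac{1}{100}$ and the factor $\tfrac12$ in the easy cases, where the paper uses $\tfrac{1}{70}$ and $\tfrac14$) and the fact that you read Lemma~\ref{Lemma_transition_TV} with the threshold $\|\theta-z\|_{\Phi(\theta)}\le 1$ as stated, while the paper's proof internally uses the threshold $\tfrac12$ (matching what is actually established in the proof of Lemma~\ref{Lemma_transition_TV}); this only affects the constant by a factor of $2$ and the claimed $\tfrac{1}{10^4}$ bound goes through either way.
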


\begin{proof} The proof follows the general format for conductance proofs for geometric Markov chains.
Let $S_1 \subseteq K$ and let $S_2 = K \backslash S_1$.
Without loss of generality, we may assume that $\pi(S_1) \leq \frac{1}{2}$ (since otherwise we could just swap the names ``$S_1$'' and ``$S_2$'').
Let
\begin{align} \label{eq_h2}
    S_1' &= \left\{\theta \in S_1 : P_\theta(S_2) \leq \frac{1}{70}\right\}, \nonumber\\
    S_2' &= \left\{z \in S_2 : P_z(S_1) \leq \frac{1}{70} \right\},
\end{align}
and let
\begin{equation*}
    S_3' = (K \backslash S_1')\backslash S_2'.
\end{equation*}
By Proposition \ref{prop_stationary} we have that
\begin{equation}\label{eq_h6}
   \int_{\theta \in S_1} \pi(\theta) P_\theta(S_2) \mathrm{d} \theta =    \int_{\theta \in S_2} \pi(\theta) P_\theta(S_1) \mathrm{d} \theta.
\end{equation}
Moreover, by Lemma \ref{Lemma_transition_TV}, for any $u,v \in \mathrm{Int}(K)$ we have that
%
\begin{equation}\label{eq_h1}
\| P_{u} - P_v \|_{\mathrm{TV}} \leq \frac{29}{30} \textrm{ whenever } \|u-v\|_{\Phi(u)} \leq \frac{1}{2}.
\end{equation}
Thus, on the one hand, by Lemma \ref{lemma_cross_ratio}, Inequality \eqref{eq_h1} implies that for any $u,v \in \mathrm{Int}(K)$,
\begin{equation} \label{eq_h3}
\| P_{u} - P_v \|_{\mathrm{TV}} \leq \frac{29}{30}, \qquad \textrm{ whenever }  \qquad  \sigma(u, v) \leq \frac{1}{2\sqrt{2m \alpha^{-1} + \eta^{-1} R^{2}}}.
\end{equation}
On the other hand, Equations \eqref{eq_h2} imply that, for any $u \in S_1'$, $v \in S_2'$ we have that
\begin{equation}\label{eq_h4}
    \|P_u - P_v \|_{\mathrm{TV}} \geq  1 - P_u(S_2) - P_v(S_1) \geq \frac{68}{70}.
\end{equation}
Thus, Inequalities \eqref{eq_h3} and \eqref{eq_h4} together imply that
\begin{equation} \label{eq_h5}
\sigma(S_1', S_2') > \frac{1}{2\sqrt{2m \alpha^{-1} + \eta^{-1} R^{2}}}.
\end{equation}
Moreover by Lemma \ref{lemma_isoperimetric} we have that
\begin{equation}\label{eq_h7}
    \pi^\star(S_3') \geq \sigma(S_1', S_2') \pi^\star(S_1') \pi^\star(S_2').
\end{equation}
First, we assume that both $\pi^\star(S_1') \geq \frac{1}{4} \pi^\star(S_1)$ and $\pi^\star(S_2') \geq \frac{1}{4} \pi^\star(S_2)$.
In this case we have
\begin{align}\label{eq_h8}
    \int_{S_1} P_\theta(S_2) \pi(\theta) \mathrm{d}\theta &\stackrel{\textrm{Eq. }\ref{eq_h6}}{=}    \frac{1}{2}\int_{S_1} P_\theta(S_2) \pi(\theta) \mathrm{d}\theta + \frac{1}{2}\int_{S_2} P_\theta(S_1) \pi(\theta) \mathrm{d}\theta \nonumber \\
    &\stackrel{\textrm{Eq. }\ref{eq_h2}}{\geq}   \frac{1}{140} \pi^\star(S_3')\nonumber \\
    & \stackrel{\textrm{Eq. }\ref{eq_h7}}{\geq}    \frac{1}{140} \sigma(S_1', S_2') \pi^\star(S_1') \pi^\star(S_2)\nonumber \\
    & \stackrel{\textrm{Eq. }\ref{eq_h5}}{\geq}    \frac{1}{280} \frac{1}{\sqrt{2m \alpha^{-1} + \eta^{-1} R^{2}}} \pi^\star(S_1') \pi^\star(S_2')\nonumber \\
    &  \geq \frac{1}{560} \frac{1}{\sqrt{2m \alpha^{-1} + \eta^{-1} R^{2}}} \min(\pi^\star(S_1'), \,\,\, \pi^\star(S_2'))\nonumber \\
    & \geq \frac{1}{2024} \frac{1}{\sqrt{2m \alpha^{-1} + \eta^{-1} R^{2}}} \min(\pi^\star(S_1), \,\,\, \pi^\star(S_2)).
\end{align}
Now suppose that instead either $\pi^\star(S_1') < \frac{1}{4} \pi^\star(S_1)$ or $\pi^\star(S_2') < \frac{1}{4} \pi^\star(S_2)$.
If  $\pi^\star(S_1') < \frac{1}{4} \pi^\star(S_1)$  then we have
\begin{align}\label{eq_h9}
\int_{S_1} P_\theta(S_2) \pi(\theta) \mathrm{d}\theta &\stackrel{\textrm{Eq. }\ref{eq_h6}}{=}    \frac{1}{2}\int_{S_1} P_\theta(S_2) \pi(\theta) \mathrm{d}\theta + \frac{1}{2}\int_{S_2} P_\theta(S_1) \pi(\theta) \mathrm{d}\theta\nonumber \\
&\geq  \frac{1}{2}\int_{S_1\backslash S_1'} P_\theta(S_2) \pi(\theta) \mathrm{d}\theta\nonumber \\
  &\stackrel{\textrm{Eq. }\ref{eq_h2}}{\geq}   \frac{1}{2}\times \frac{3}{4} \times \frac{34}{70}\pi^\star(S_1)\nonumber \\
  & \geq \frac{1}{10} \min(\pi^\star(S_1), \,\,\, \pi^\star(S_2)).
\end{align}
Similarly, if  $\pi^\star(S_2') < \frac{1}{4} \pi^\star(S_2)$  we have
\begin{align}\label{eq_h10}
\int_{S_1} P_\theta(S_2) \pi(\theta) \mathrm{d}\theta &\stackrel{\textrm{Eq. }\ref{eq_h6}}{=}    \frac{1}{2}\int_{S_1} P_\theta(S_2) \pi(\theta) \mathrm{d}\theta + \frac{1}{2}\int_{S_2} P_\theta(S_1) \pi(\theta) \mathrm{d}\theta\nonumber \\
&\geq  \frac{1}{2}\int_{S_2\backslash S_2'} P_\theta(S_1) \pi(\theta) \mathrm{d}\theta\nonumber \\
  &\stackrel{\textrm{Eq. }\ref{eq_h2}}{\geq}   \frac{1}{2}\times \frac{3}{4} \times \frac{34}{70}\pi^\star(S_2)\nonumber \\
  & \geq \frac{1}{10} \min(\pi^\star(S_1), \,\,\, \pi^\star(S_2)).
\end{align}
Therefore, Inequalities \eqref{eq_h8}, \eqref{eq_h9}, and \eqref{eq_h10} together imply that
\begin{equation} \label{eq_h11}
\frac{1}{\min(\pi^\star(S_1), \,\,\, \pi^\star(S_2))}   \int_{S_1} P_\theta(S_2) \pi(\theta) \mathrm{d}\theta \geq \frac{1}{10^4} \frac{1}{\sqrt{2m \alpha^{-1} + \eta^{-1} R^{2}}}.
\end{equation}
for every partition $S_1\cup S_2 = K$.
Hence, Inequality \eqref{eq_h11} implies that
\begin{equation*}
    \phi = \inf_{S \subseteq K : \pi^\star(S)\leq \frac{1}{2}} \frac{1}{\pi^\star(S)} \int_S P_\theta (K \backslash S) \pi(\theta) \mathrm{d} \theta  \geq \frac{1}{60} \frac{1}{\sqrt{2m \alpha^{-1} + \eta^{-1} R^{2}}}.
\end{equation*}
\end{proof}

\noindent

\begin{definition}
We say that a distribution $\nu$ is $w$-warm for some $w \geq 1$ with respect to the stationary distribution $\pi$ if $\sup_{z\in K} \frac{\nu(z)}{\pi(z)} \leq w$.
\end{definition}

\begin{lemma}[Corollary 1.5 of \cite{lovasz1993random}]\label{lemma_cheeger}
Suppose that $\mu_0$ is the initial distribution of a lazy reversible Markov chain with conductance $\phi>0$ and stationary distribution $\pi$, and let $\mu_i$ be the distribution of this Markov chain after $i\geq 0$ steps. Suppose that  $\mu_0$ is $w$-warm with respect to $\pi$ for some $w \geq 1$.
Then for all $i\geq 0$ we have
\begin{equation*}
 \| \mu_i - \pi \|_{\mathrm{TV}} \leq \sqrt{w}\left(1- \frac{\phi^2}{2}\right)^i.  
\end{equation*}

\end{lemma}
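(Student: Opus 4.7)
The plan is to reduce the bound to a contraction estimate for the chi-squared divergence $\chi^2(\nu,\pi):=\int (d\nu/d\pi - 1)^2 \, d\pi$, which for reversible chains is controlled by the spectral gap, which in turn is controlled by $\phi^2$ via Cheeger's inequality. Concretely, let $h_i := d\mu_i/d\pi$, which exists because $\mu_0 \ll \pi$ (warmness) and $\pi$ is stationary. The transition kernel $P$ is self-adjoint on $L^2(\pi)$ by reversibility, and because the chain is lazy its spectrum lies in $[0,1]$.

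First I would relate TV to chi-squared by Cauchy--Schwarz:
\[
\|\mu_i - \pi\|_{\mathrm{TV}} \;=\; \tfrac{1}{2}\|h_i - 1\|_{L^1(\pi)} \;\leq\; \tfrac{1}{2}\|h_i - 1\|_{L^2(\pi)} \;=\; \tfrac{1}{2}\sqrt{\chi^2(\mu_i,\pi)}.
\]
Second, I would bound the initial chi-squared using the warm-start assumption: since $h_0 \leq w$ pointwise and $\int h_0\, d\pi = 1$, we get $\chi^2(\mu_0,\pi) = \int h_0^2\, d\pi - 1 \leq w \int h_0\, d\pi - 1 = w-1 \leq w$.

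Third, I would establish the one-step contraction. Note that $h_{i+1} - 1 = P(h_i - 1)$, and $h_i - 1$ is orthogonal to the constant function $1$ in $L^2(\pi)$ since $\int h_i\, d\pi = 1$. Let $\lambda_\star$ be the operator norm of $P$ restricted to the orthogonal complement of constants; then
\[
\chi^2(\mu_{i+1},\pi) \;=\; \|P(h_i-1)\|_{L^2(\pi)}^2 \;\leq\; \lambda_\star^2\, \chi^2(\mu_i,\pi).
\]
Fourth, I invoke Cheeger's inequality for reversible Markov chains, which states that the spectral gap satisfies $1-\lambda_\star \geq \phi^2/2$; this step is the main technical ingredient, but it is classical and can be cited (e.g.\ Lawler--Sokal, or Sinclair--Jerrum). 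Iterating the contraction gives $\chi^2(\mu_i,\pi) \leq (1-\phi^2/2)^{2i}(w-1)$.

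Combining the four steps,
\[
\|\mu_i - \pi\|_{\mathrm{TV}} \;\leq\; \tfrac{1}{2}\sqrt{w-1}\,(1-\phi^2/2)^{i} \;\leq\; \sqrt{w}\,(1-\phi^2/2)^{i},
\]
which is exactly the claimed bound. The only substantive obstacle is Cheeger's inequality, whose proof uses the variational characterization of $\lambda_\star$ together with a level-set/co-area argument to extract a set with conductance at most $\sqrt{2(1-\lambda_\star)}$; since the statement cites Lovász--Simonovits, I would simply invoke the inequality in the form needed rather than reprove it. Everything else reduces to reversibility, laziness, and the definition of warmness.
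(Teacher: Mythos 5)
The paper does not actually prove this lemma; it invokes it as Corollary~1.5 of Lov\'asz--Simonovits~\cite{lovasz1993random}. So there is no ``paper's proof'' to compare against, only the cited LS argument, which proceeds quite differently from your sketch. Lov\'asz and Simonovits obtain the bound by a direct isoperimetry/conductance argument: they track, for each $t\in[0,1]$, the function $h_i(t)=\sup\{\mu_i(S)-\pi(S):\pi(S)=t\}$, show that each lazy reversible step makes this curve ``more concave'' at a rate governed by $\phi$, and read off the geometric decay from a suitable barrier function. That route avoids $L^2(\pi)$ spectral theory entirely and in fact generalizes to $s$-conductance and to weaker starting conditions than a $w$-warm start.

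Your proposal is a correct alternative proof via the spectral route: $\mathrm{TV}\le\tfrac12\sqrt{\chi^2}$, $\chi^2(\mu_0,\pi)\le w-1$, $h_{i+1}-1=P(h_i-1)$ with $P$ self-adjoint on $L^2(\pi)$ and $h_i-1\perp\mathbf 1$, and then Cheeger. The chain of inequalities is sound and the final constant works out ($\tfrac12\sqrt{w-1}\le\sqrt{w}$ for $w\ge1$). One small imprecision you should tighten if you write this out fully: Cheeger's inequality bounds the \emph{second-largest} eigenvalue, $1-\lambda_2\ge\phi^2/2$, not the operator norm $\lambda_\star=\max(\lambda_2,-\lambda_{\min})$ of $P$ on $\mathbf 1^\perp$. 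You need laziness at precisely this step to conclude that the spectrum of $P$ lies in $[0,1]$, hence $\lambda_{\min}\ge0$ and $\lambda_\star=\lambda_2$; you mention laziness earlier but attribute the whole bound on $\lambda_\star$ to Cheeger, which conflates the two ingredients. Also, in a continuous state space the Cheeger inequality is due to Lawler--Sokal rather than the finite-state combinatorial versions, so if you want to cite something it should be that. In short: correct, but a genuinely different (spectral) argument from the measure-theoretic LS curve argument the paper is relying on; the LS approach buys generality (no spectral decomposition needed, extends to $s$-conductance), while yours is shorter given Cheeger as a black box.
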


\begin{lemma}\label{Lemma_TV_output}
Let $\delta >0$.  Suppose that $f: K \rightarrow \mathbb{R}$ is either $L$-Lipschitz (or has $\beta$-Lipschitz gradient). Suppose that $\theta_0 \sim \mu_0$ where $\mu_0$ is a $w$-warm  distribution with respect to $\pi \propto e^{-f}$ with support on $K$. 
Let $\mu$ denote the distribution of the output of Algorithm \ref{alg_Soft_Dikin_Walk} with hyperparameters $\alpha \leq \frac{1}{10^5 d}$ and $\eta \leq \frac{1}{10^4 d  L^2}$ (or $\eta \leq \frac{1}{10^4 d  \beta}$), if it is run for $T$ iterations. 
Then if $T\geq 10^9 \left( 2m \alpha^{-1} + \eta^{-1} R^{2} \right) \times \log(\frac{w}{\delta})$ we have that
\begin{equation*}
 \| \mu - \pi \|_{\mathrm{TV}} \leq \delta.  
\end{equation*}

\end{lemma}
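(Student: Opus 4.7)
The plan is to combine the conductance bound from Lemma \ref{lemma_conductance} with the mixing-time estimate from Lemma \ref{lemma_cheeger} (i.e.\ Corollary 1.5 of \cite{lovasz1993random}) and then perform a short constant-chasing calculation to verify that the number of steps $T$ stated in the lemma suffices.

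Before invoking Lemma \ref{lemma_cheeger}, I would first verify that its three hypotheses hold for the chain defined by Algorithm \ref{alg_Soft_Dikin_Walk}. Reversibility of the chain with respect to $\pi$ is exactly Proposition \ref{prop_stationary}, which in particular identifies $\pi$ as a stationary distribution. Laziness is built into Algorithm \ref{alg_Soft_Dikin_Walk}: the extra factor of $\tfrac{1}{2}$ appearing in front of the Metropolis ratio in Line \ref{Line_accept_reject} ensures that at every iteration, regardless of the current point $\theta$ or the proposed $z$, the chain rejects (and thus stays put) with probability at least $\tfrac{1}{2}$. The $w$-warmness of $\mu_0$ with respect to $\pi$ is by assumption.

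Next, I would plug in the conductance bound. By Lemma \ref{lemma_conductance},
\begin{equation*}
 \phi \;\geq\; \frac{1}{10^4}\,\frac{1}{\sqrt{2m\alpha^{-1}+\eta^{-1}R^2}},
 \qquad \text{hence} \qquad
 \frac{\phi^2}{2} \;\geq\; \frac{1}{2\cdot 10^8\,(2m\alpha^{-1}+\eta^{-1}R^2)}.
\end{equation*}
Lemma \ref{lemma_cheeger} then gives, for the distribution $\mu = \mu_T$ after $T$ iterations,
\begin{equation*}
 \|\mu-\pi\|_{\mathrm{TV}} \;\leq\; \sqrt{w}\,\Bigl(1-\tfrac{\phi^2}{2}\Bigr)^{T} \;\leq\; \sqrt{w}\,\exp\!\Bigl(-\tfrac{T\phi^2}{2}\Bigr),
\end{equation*}
using $1-x\leq e^{-x}$. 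Requiring the right-hand side to be at most $\delta$ and solving for $T$ gives the sufficient condition $T \geq \tfrac{2}{\phi^2}\log(\sqrt{w}/\delta)$. Substituting the bound on $\phi^2$ above shows that $T \geq 10^9(2m\alpha^{-1}+\eta^{-1}R^2)\log(w/\delta)$ is more than enough.

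There is no real technical obstacle in this final step: all of the substantive analytic work (the cross-ratio inequality, the control of the acceptance ratio via Lemmas \ref{lemma_density_ratio}, \ref{lemma_remain_in_ellipsoid} and \ref{lemma_det}, the one-step transition closeness via Lemma \ref{Lemma_TV}, and the resulting conductance bound) has already been carried out. The only genuine care point is bookkeeping the constants so that the bound $10^9$ in the statement actually dominates the product of $2/\phi^2$ with a $\log(\sqrt{w}/\delta) \leq \log(w/\delta)$ factor; the $10^8$ slack coming from the conductance bound makes this immediate.
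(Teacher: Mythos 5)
Your proposal is correct and follows essentially the same route as the paper: apply Lemma \ref{lemma_conductance} for the conductance lower bound, feed it into Lemma \ref{lemma_cheeger}, and check the constants. The only difference is cosmetic — you explicitly verify the laziness (via the $\tfrac{1}{2}$ factor in Line \ref{Line_accept_reject}) and reversibility (Proposition \ref{prop_stationary}) hypotheses of Lemma \ref{lemma_cheeger}, which the paper leaves implicit.
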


\begin{proof} By Lemma \ref{lemma_conductance} we have that the conductance $\phi$ of the Markov chain in Algorithm \ref{alg_Soft_Dikin_Walk} satisfies
\begin{equation*}
    \phi \geq  \frac{1}{10^4} \frac{1}{\sqrt{2m \alpha^{-1} + \eta^{-1} R^{2}}},
\end{equation*}
and hence that
\begin{equation*}
T= 10^9 \left( 2m \alpha^{-1} + \eta^{-1} R^{2} \right) \times \log(\frac{w}{\delta}) \geq 2\phi^{-2}  \times \log(\frac{w}{\delta}).
\end{equation*}

\noindent
Thus, by Lemma \ref{lemma_conductance} we have that
\begin{align*}
 \| \mu_T - \pi \|_{\mathrm{TV}} &\leq \sqrt{w}\left(1- \frac{\phi^2}{2}\right)^T\\
 & = \sqrt{w}\left(1- \frac{\phi^2}{2}\right)^{2\phi^{-2} \log(\frac{w}{\delta})}\\
 &\leq \sqrt{w} e^{-\log(\frac{w}{\delta})}\\
  &\leq  \delta.
\end{align*}
\end{proof}

\begin{proof}\textbf{[of Theorem \ref{thm_soft_threshold_Dikin}]}

\paragraph{Total variation bound.}
Recall that we have set the step size parameters $\alpha = \frac{1}{10^5 d}$ and either $\eta = \frac{1}{10^4 d  L^2}$ (if $f$ is $L$-Lipschitz) or $\eta = \frac{1}{10^4 d  \beta}$   (if $f$ is $\beta$-smooth).
Thus, after running Algorithm \ref{alg_Soft_Dikin_Walk} for $T = 10^9 \left( 2m \alpha^{-1} + \eta^{-1} R^{2} \right) \times \log(\frac{w}{\delta})$ iterations, by Lemma \ref{Lemma_TV_output} we have that the distribution $\mu$ of the output of Algorithm \ref{alg_Soft_Dikin_Walk} satisfies 
\begin{equation}
 \| \mu - \pi \|_{\mathrm{TV}} \leq \delta.  
\end{equation}

\paragraph{Bounding the number of operations.}
Moreover, by Lemma \ref{Lemma_operation_count} we have that each iteration of Algorithm \ref{alg_Soft_Dikin_Walk} can be implemented in $ O(md^{\omega-1})$ arithmetic operations plus $O(1)$ calls to the oracle for the value of $f$.
 Thus, the number of iterations $T$ taken by Algorithm \ref{alg_Soft_Dikin_Walk}  is  $O((md + d L^2 R^2) \times \log(\frac{w}{\delta}))$ iterations in the setting where $f$ is $L$-Lipschitz and $O((md + d \beta R^2) \times \log(\frac{w}{\delta}))$ iterations in the setting where $f$ is $\beta$-smooth, where each iteration takes one function evaluation and $md^{\omega-1}$ arithmetic operations.

\end{proof}

\section*{Acknowledgments} 
The authors are grateful to Sushant Sachdeva and Yin Tat Lee for their insightful comments and suggestions.
This research was supported in part by  NSF grants CCF-1908347, CCF-2112665, and CCF-2104528.

\newpage

\bibliography{DP}
\bibliographystyle{plain}

\newpage

\appendix

\section{Proof of Lemma \ref{lemma_nu}} \label{sec_lemma_nu}

\begin{proof}[Proof of Lemma  \ref{lemma_nu}]
For any $x \in \mathrm{int}(K)$, we have
\begin{align*}
    \nabla g(x) = \nabla \phi(x) + \alpha x
\end{align*}
Thus, for any $h \in \mathbb{R}^d$,
\begin{equation} \label{eq_nu_1}
    h^\top \nabla g(x) = h^\top \nabla \phi(x) + \alpha h^\top x \leq  h^\top \nabla \phi(x)+ \alpha \|h\| \|x\| \leq  h^\top \nabla \phi(x)+ \alpha R\|h\|
\end{equation}
Thus,
\begin{align}\label{eq_nu_3}
    (h^\top \nabla g(x))^2 &\stackrel{\textrm{Eq. }\eqref{eq_nu_1}}{\leq}  (h^\top \nabla \phi(x)+ \alpha R\|h\|)^2\\
    &\leq 4(h^\top \nabla \phi(x))^2+ 4(\alpha R\|h\|)^2 \nonumber\\\nonumber
    &= 4(h^\top \nabla \phi(x))^2+ 4\alpha^2 R^2 h^\top I_d h \\\nonumber
        &= 4(h^\top \nabla \phi(x))^2+ 4\alpha R^2 h^\top (\alpha I_d) h \\\nonumber
    &\leq 4 \nu' h^\top \nabla^2 \phi(x) h + 4\alpha R^2 h^\top (\alpha I_d) h\\  \nonumber
    &\leq (4 \nu' + 4\alpha R^2)( h^\top \nabla^2 \phi(x) h + h^\top (\alpha I_d) h)\\ \nonumber
        &= (4 \nu' + 4\alpha R^2)( h^\top (\nabla^2 \phi(x) + \alpha I_d) h)\\ \nonumber
        &= (4 \nu' + 4\alpha R^2)( h^\top (\nabla^2 g(x)) h) \nonumber
\end{align}
where the third inequality holds since $\phi$ is $\nu'$ self-concordant.
Thus, plugging in $\nu = 4 \nu' + 4\alpha R^2$ to \eqref{eq_nu_3}, we get that
\begin{equation*}
    h^\top \nabla g(x) \leq \sqrt{\nu( h^\top (\nabla^2 g(x)) h)}.
\end{equation*}
\end{proof}

\section{Lower bounds for self-concordance parameter} \label{appendix_lower_bounds_nu}

\paragraph{Lower bounds on $\nu$ for worst-case $L$-Lipschitz $f$.}

Consider the $L$-Lipschitz function $f(\theta) = L \|\theta\|_2$ constrained to the convex body $K = \frac{R}{2 \sqrt{d}} [-1,1]^d$ which is contained in a ball of radius $R$.
In this case, any barrier function $g$ which satisfies Property \ref{property_ellipsoid} has Dikin Ellipsoid $E(\theta)$ which is contained in the ball $B(0,\frac{8}{L})$. 
Thus, we have that $\nu \geq \Omega(R L)$.  (This is because, by Proposition 2.3.2(iii) of \cite{nesterov1994interior}, any $\nu$-self concordant function $g$ satisfies $(h^\top \nabla^2 g(\theta) h)^{-\frac{1}{2}} \leq |h|_{\theta} \leq (1+ 3\nu) (h^\top \nabla^2 g(\theta) h)^{-\frac{1}{2}}$, for any $h \in \mathbb{R}^d$ where $|h|_{\theta} := \sup \{ \alpha>0 :   \theta \pm \alpha h \in  \{z \in K \}$. 
Thus, at $\theta = 0$ and choosing $h = (1,\cdots, 1)$ we have $\frac{|h|_\theta}{\|h\|_2} = R$ and $(h^\top \nabla^2 g(\theta) h)^{-\frac{1}{2}} \leq O(\frac{1}{L})$. 
Thus,  $\nu \geq \Omega\left (\frac{|h|_\theta}{(h^\top \nabla^2 g(\theta) h)^{-\frac{1}{2}}} \right) \geq L R$).
\\

Since there exists a convex body $K$ for which any self-concordant barrier function satisfying Definition \ref{def_barrier} has self-concordance parameter at least $\nu \geq d$, for any $L, R>0$ there exists a function $f$ and convex body $K \subset B(0,R)$ such that the self-concordance parameter of every barrier function satisfying both  Definition \ref{def_barrier}  and Property \ref{property_ellipsoid} is at least $\nu \geq \Omega(\max(d, L R))$.

\paragraph{Lower bounds on $\nu$ for worst-case $\beta$-smooth $f$.}
Consider the $\beta$-smooth function $f(\theta) = \frac{1}{2}\beta \theta^\top \theta$ constrained to the convex body $K = \frac{R}{2 \sqrt{d}} [-1,1]^d$.
At $\theta = 0$, any ellipsoid $D(\theta)$ satisfying Property \ref{property_ellipsoid} is contained in the ball  $\frac{8}{\sqrt{\beta}} B(0,1)$.
Thus, we have that $\nu \geq \frac{R}{\frac{8}{\sqrt{\beta}}} \sqrt{\beta} = \Omega(\sqrt{\beta} R)$ (This is because  at $\theta = 0$ we have $\frac{|h|_\theta}{\|h\|_2} =R$ and $(h^\top \nabla^2 g(\theta) h)^{-\frac{1}{2}} \leq O(\frac{1}{\sqrt{\beta}})$ and thus, by Proposition 2.3.2(iii) of \cite{nesterov1994interior}, we have $\nu \geq \Omega\left (\frac{|h|_\theta}{(h^\top \nabla^2 g(\theta) h)^{-\frac{1}{2}}} \right) \geq \sqrt{\beta} R$).  
Thus, for any $\beta, R>0$ there exists a function $f$ and convex body $K \subset B(0,R)$ such that the self-concordance parameter of every barrier function satisfying both  Definition \ref{def_barrier}  and Property \ref{property_ellipsoid} is at least $\nu \geq \max(d, \Omega(\beta R))$.

\end{document}